\newcommand{\blind}{1}
\begin{document}

\if1\blind
{
	\title{\bf A General Framework For Constructing Locally Self-Normalized Multiple-Change-Point Tests}
	\author[1]{Cheuk Hin Cheng \thanks{Email: 
				\href{mailto:andychengcheukhin@link.cuhk.edu.hk}
				{\nolinkurl{andychengcheukhin@link.cuhk.edu.hk}}}}
	\author[2]{Kin Wai Chan \thanks{Email: 
				\href{mailto:kinwaichan@sta.cuhk.edu.hk}
				{\nolinkurl{kinwaichan@sta.cuhk.edu.hk}}}}
	\affil[1]{Department of Statistics, The Chinese University of Hong Kong.}		
		
  \maketitle
} \fi

\if0\blind
{
  \bigskip
  \bigskip
  \bigskip
  \begin{center}
    {\LARGE\bf A General Framework For Constructing Locally Self-Normalized Multiple-Change-Point Tests}
\end{center}
  \medskip
} \fi

\bigskip
\begin{abstract}
We propose a general framework to construct self-normalized multiple-change-point tests with time series data. 
The only building block is a user-specified one-change-point detecting statistic, 
which covers a wide class of popular 
methods, 
including cumulative sum process, outlier-robust rank statistics and order statistics. 
Neither robust and consistent estimation of nuisance parameters, selection of bandwidth parameters, nor pre-specification of the number of change points is required. The finite-sample performance shows that our proposal is size-accurate, 
robust against misspecification of the alternative hypothesis, and more powerful than existing methods. Case studies of NASDAQ option volume and Shanghai-Hong Kong Stock Connect turnover are provided.
\end{abstract}

\noindent
{\it Keywords:}  Change-point analysis; non-parametric; self-normalization; time series; multiple change points
\vfill

\newpage
\spacingset{1.5} 
\section{Introduction}
Testing structural stability is of paramount importance in statistical inference. There is considerable amount of literature and practical interest of testing existence of change points (CPs). 
\cite{csorgHo198820,csorgo1997limit} provided comprehensively both classical parametric and non-parametric approaches for the at most one change (AMOC) problem. For detection of multiple CPs, 
one method is to assume a fixed or bounded number of CPs ($m$). 
The test statistics are constructed by dividing the data into segments 
and maximizing a sum of AMOC-type statistics computed within each segment; see \cite{antoch2013testing} and \cite{shao2010testing}. 
Since the true number of CPs ($M$) is usually unknown in practice, 
several information criteria \citep{yao1988estimating,lee1995CPcriteria} have been proposed 
for estimating $M$. 
We remark that such approaches are vulnerable to misspecification of $M$
and the computational cost is often scaled exponentially with $m$. 
Another approach relies on sequential estimation via binary segmentation
\citep{vostrikova1981detecting,bai1998estimating}. 
This class of methods sequentially performs AMOC-type tests and produces CP estimates if the no-CP null hypothesis is rejected. 
The same procedure is repeated on each of the two subsamples split at the estimated CP
until no segment can be split further. 
Since the structural difference between data before and after a particular CP may be weakened 
in the presence of other CPs, 
the AMOC-type tests may inevitably lose power under the multiple-CP situation. 

To improve power, localization can be used. 
\cite{bauer1980extension} 
introduced the moving sum (MOSUM) approach.  
\cite{chu1995mosum} and \cite{eichinger2018mosum} later 
applied MOSUM to CPs detection. 
This method selects a local window size and performs inference on each local subsample. To avoid selection of the window size, \cite{fryzlewicz2014wild} proposed wild binary segmentation, which randomly draws subsamples without fixing the window size. Both algorithms attempt to find a subsample which contains only one CP to boost power without specifying $m$. However, the above methods require consistent and robust estimation of nuisance parameters. For example, the long-run variance (LRV) $\sigma^2$ of the partial sum usually appears in a test statistic
in the from of $T_n(\sigma^2) = L_n/\sigma^2$, 
where $L_n$ satisfies $L_n \inD \sigma^2\mathbb{L}$ under the null for some pivotal distribution $\mathbb{L}$, and ``$\inD$'' denotes convergence in distribution. Unfortunately, estimating $\sigma^2$ can be non-trivial and may require tuning a bandwidth parameter \citep{andrews1991heteroskedasticity}. Worse still, \cite{kiefer2005new} found that different selected bandwidths can noticeably influence the tests. 

To avoid the challenging estimation of $\sigma^2$, \cite{lobato2001testing} first proposed to use a self-normalizer (SN) $V_n$ that satisfies $V_n \inD \sigma^2\mathbb{V}$ for some pivotal $\mathbb{V}$ so that $V_n$ cancels out the nuisance $\sigma^2$ in $L_n$ to form a self-normalized test statistic 
$T_n' = L_n/V_n \inD \mathbb{L}/\mathbb{V}$ under the null.
\cite{shao2010testing} later modified the SN for handling the AMOC problem. In the multiple-CP problem, \cite{zhang2018unsupervised} proposed a self-normalized test which intuitively scans for the first and last CPs. However, it may sacrifice power when larger changes take place in the middle. Moreover, applying their proposed test to CP location estimation is non-trivial.

In this paper, we propose a general framework combining self-normalization and localization to construct a powerful multiple-CP test, which is applicable to a broad class of time series models including both short-range and long-range dependent data. Our method also allows utilization of outlier-robust change detecting statistics to improve size accuracy and power. The proposed method is driven by principles which generalize one-CP test statistic to a multiple-CP test statistic. 
Our proposed test is shown to be locally powerful asymptotically. Recursive formula is derived for fast computation.

The remaining part of the paper is structured as follows. We first review the self-normalization idea proposed by \cite{shao2010testing} under the AMOC problem in Section \ref{sec: SN intro}. Our proposal is demonstrated by using the CUSUM process as motivation in Section \ref{sec: LSN motivation + CUSUM}. Limiting distribution, consistency and local power analysis are also provided. In Section \ref{sec: general framework}, detailed principles and the most general form of our proposed test are presented with examples of applying rank and order statistics. The extensions to test for structural changes of other parameters, long range dependent and multivariate time series are also provided. 
In Section \ref{sec: Implementation}, 
the differences between different self-normalized
approaches will be comprehensively discussed. 
Some implantation issues are also discussed. 
Simulation experiments are presented in Section \ref{sec: simulation}, which shows promising size accuracy and substantial improvement of power over existing methods. The paper ends with real-data analysis of the NASDAQ option volume in Section \ref{sec: nasdaq call vol} and the Shanghai-Hong Kong Stock Connect turnover in Section \ref{sec: SHHK stock connect}. Proofs of main results, additional simulation results, critical values, recursive formulas and algorithms can be found in the Appendix.

\section{Introduction of self-normalization}
\label{sec: SN intro}
Consider the signal-plus-noise model: $X_i = \mu_i + Z_i$, where $\mu_i=\E(X_i)$ for $i=1, \ldots,n$ and $\{ Z_i : i\in\mathbb{Z}\}$ is a stationary noise sequence. The AMOC
testing problem is formulated as follows:
\begin{align} 
H_0&: \mu_1 = \cdots = \mu_n, \label{eqt:H0}\\
H_1&: \exists\; 0<k_1<n, \quad \mu_1 = \cdots =\mu_{k_1} \neq \mu_{k_1+1}= \cdots = \mu_n.  \label{eqt:H1_oneCP}
\end{align}
Let $\xi_{k_1,k_2} = \sum_{i = k_1}^{k_2} X_i$ if $1 \leq k_1 \leq k_2 \leq n$; $\xi_{k_1,k_2} = 0$ otherwise. Define the CUSUM process
\begin{align}
\label{EQ Global CUSUM}
	\C_n(\lfloor nt \rfloor) = n^{-1/2}\sum_{i = 1}^{\lfloor nt \rfloor}\left( X_i - n^{-1}\xi_{1,n}\right), 
	\qquad t \in [0,1], 
\end{align}
where $\lfloor nt \rfloor$ is the largest integer part of $nt$. The limiting distribution of functional of \eqref{EQ Global CUSUM} relies on the following assumption.

\begin{assumption}
\label{as:invariance principle}
As $n\rightarrow\infty$, 
$\{ n^{-1/2}\sum_{i = 1}^{\lfloor nt \rfloor} Z_i: t\in [0,1]   \} \Rightarrow   \{\sigma \BM(t): t\in [0,1]   \},
$ where $\sigma^2 = \lim_{n \rightarrow \infty} \Var(\xi_{1,n})/n \in(0,\infty)$ is the long-run variance (LRV), $\BM(\cdot)$ is the standard Brownian motion and ``$\Rightarrow$'' denotes convergence in distribution in the Skorokhod space \citep{billingsley1999convergence}.
\end{assumption}
Assumption \ref{as:invariance principle} is known as a functional central limit theorem (FCLT) or Donsker's invariance principle. Under standard regularity conditions (RCs), Assumption \ref{as:invariance principle} is satisfied. For example, \cite{herrndorf1984functional} proved the FCLT for the dependent data under some mixing conditions; \cite{wu2007strong} 
later showed the strong convergence for stationary processes using physical and predictive dependence measures.
Under the null hypothesis $H_0$, by 
continuous mapping theorem, we have 
$\{\C_n(\lfloor nt \rfloor) : t\in[0,1]\} \Rightarrow \{\sigma \left\{\BM(t) - t\BM(1)\right\}: t\in[0,1]\} $.

Classically, the celebrated Kolmogorov--Smirnov (KS) test statistic, defined as $\KS_n(\sigma) = \sup_{k=1, \ldots, n} \left|\C_n(k)/\sigma \right|$, can be used for the AMOC problem. Since $\sigma$ is typically unknown, we need to estimate it by a consistent $\widehat{\sigma}$
no matter under $H_0$ or $H_1$; see \cite{chan2020mean} and \cite{chan2022_dlrv} for some possible estimators. Hence, $\KS_n(\widehat{\sigma}) \inD \mathbb{KS} := \sup_{t\in(0,1)}| \BM(t) - t\BM(1)|$, which is known as the Kolmogorov distribution. \cite{shao2010testing} proposed to bypass the estimation of $\sigma$ by normalizing $\C_n(k)$ by a non-degenerate standardizing random process called a self normalizer:
\begin{equation*}
\V_n(k) = n^{-2}\left\{\sum_{i = 1}^k \left(\xi_{1,i} - \frac{i}{k} \xi_{1,k} \right)^2  + \sum_{i = k+1}^n \left(\xi_{i,n} - \frac{n-i+1}{n-k} \xi_{k+1,n} \right)^2 \right \}, \quad k = 1,\ldots, n - 1.
\end{equation*}
The resulting self-normalized test statistic for the AMOC problem is $\shao_n^{(1)}$, where 
\begin{equation}
\label{EQ: shao 1 ts}
\begin{split}
\shao_n^{(1)} =  \sup_{k = 1,\ldots, n-1} \shao_n^{(1)}(k)
\qquad \text{and} \qquad
\shao_n^{(1)}(k) = \C_n^2(k)/\V_n(k).
\end{split}
\end{equation}
Under Assumption \ref{as:invariance principle} and $H_0$, the limiting distribution of $\shao_n^{(1)}$ is non-degenerate and pivotal. 
The nuisance parameter $\sigma^2$ is asymptotically cancelled out in 
the numerator $\C_n^2(k)$ and the denominator $\V_n(k)$ on the right hand side of (\ref{EQ: shao 1 ts}).
Moreover, since there is no CP in the intervals $[1, k_1]$ and $[k_1+1, n]$ under $H_1$, the SN at the true CP, $\V_n(k_1)$, is invariant to the change and therefore their proposed test does not suffer from the well-known non-monotonic power problem; see, e.g., \citet{vogelsang1999}. 
However, this appealing feature no longer exists in the multiple-CP setting. 
\cite{zhang2018unsupervised} improved by proposing a self-normalized multiple-CP test without specifying $m$. The test utilizes forward and backward scans to select two time intervals, 
$[1,k_1]$ and $[k_2,n]$ for $1 < k_1, k_2 < n$,
and aims at detecting the largest change. 
Applying the SN in \cite{shao2010testing} to each of these two subsamples, 
their test was proved to be consistent. 
However, the method tends to consider only the first and the last CPs. 
Thus, it affects power as will be shown in our simulation studies in Section \ref{sec: simulation}. 
In the next section, we propose a framework for 
extending a one-CP test to a valid multiple-CP self-normalized test.
It is achieved by utilizing the localization idea.  

\section{Locally self-normalized statistic}
\label{sec: LSN motivation + CUSUM}
We consider the multiple-CP testing problem, i.e., to test the $H_0$ in (\ref{eqt:H0}) 
against 
\begin{equation}
\label{EQ: multiple CP hypothesis test}
H_{\geq 1}: \mu_1 = \cdots = \mu_{k_1} \neq  \mu_{k_1+1} = \cdots = \mu_{k_2} \neq \cdots \neq \mu_{k_M + 1} = \cdots = \mu_n,
\end{equation}
for an unknown number of CPs $M \geq 1$, and unknown times 
$1<k_1<\cdots< k_{M}<n$.
Also denote $\pi_j = k_j/n$ and 
$\Delta_j = \mu_{k_j+1} - \mu_{k_j}$ be the $j$th relative CP time and the corresponding mean change, respectively, for $j=1, \ldots, M$.
Since $M$ is usually unknown and possibly larger than one in practice, the multiple-change alternative (\ref{EQ: multiple CP hypothesis test}) is more sensible than 
the one-change alternative (\ref{eqt:H1_oneCP}). 
To improve the power of \eqref{EQ Global CUSUM}, we generalize the CUSUM process to a localized CUSUM statistic. Define it and its SN as
{\small
\begin{align}
	\L^{(\C)}_n(k\mid s,e) 
		&:= (e-s+1)^{-1/2}\sum_{i = s}^k\left\{ X_i - (e-s+1)^{-1}\xi_{s,e}\right\} \nonumber\\
		&= \left(\frac{n}{e-s+1}\right)^{1/2} \left[ \C_n(k) - \C_n(s-1) - \frac{k-s+1}{e-s+1}\left \{ \C_n(e) - \C_n(s-1)\right \} \right] , \label{EQ local CUSUM}\\
	\V^{(\C)}_n(k \mid s, e) 
		&:= \frac{k-s+1}{(e-s+1)^{2}}\sum_{j = s}^k \L^{(\C)}_n(j \mid s, k)^2
			+ \frac{e-k}{(e-s+1)^{2}}\sum_{j = k+1}^e \L^{(\C)}_n(j \mid k+1, e)^2, \label{EQ: CUSUM self normalizer}
\end{align}
}respectively, where $1\leq s \leq k \leq e\leq n$. 
The indices $s$ and $e$ denote the starting and ending times of 
a local subsample $\{X_s, \ldots, X_e\}$, respectively.
The statistic $\L^{(\C)}_n(k\mid s,e)$ is used to detect a local difference between $\{X_s,\ldots, X_k\}$ and $\{X_{k+1}, \ldots, X_e\}$. Indeed, (\ref{EQ local CUSUM}) generalizes the global CUSUM process (\ref{EQ Global CUSUM}) because $\C_n(k) \equiv \L^{(\C)}_n(k\mid 1,n)$. 
Moreover, according to (\ref{EQ local CUSUM}), the localized CUSUM process allows fast recursive computation as it is a function of $\C_n(\cdot)$. Consequently, the \emph{locally self-normalized} (LSN) statistic is defined as 
\begin{equation}
\label{EQ: LSN stat CUSUM}
\begin{split}
	\T^{(\C)}_n(k \mid s, e) := \frac{\L^{(\C)}_n(k \mid s,e)^2}{\V^{(\C)}_n(k \mid s, e)} , 
	\qquad \text{for}\;1\leq s \leq k \leq e\leq n,
\end{split}
\end{equation}
and $\T^{(\C)}_n(k \mid s, e) = 0$ otherwise; 
see Remark \ref{rem:nonSN} for a non-SN approach. 
The LSN statistic generalizes \cite{shao2010testing}'s proposal in the sense that $\shao_n^{(1)}(k) \equiv \T^{(\C)}_n(k \mid 1, n)$. 
To infer how likely $k$ is a potential CP location, we consider all the symmetric local windows that center at $k$.
The resulting score function is
\begin{equation}
\label{EQ: CUSUM test score function}
	\T^{(\C)}_n(k) = \sup_{\lfloor n\epsilon  \rfloor \leq d \leq n} \T^{(\C)}_n(k \mid k-d, k+1+d),
\end{equation}
where $0<\epsilon<1/2$ is a fixed local tuning parameter, which is similarly used in, e.g., \cite{huang2015self} and \cite{zhang2018unsupervised};
see Remark \ref{rem:symmetricWindow} for discussing the choice of symmetric windows. 
In particular, they suggested to choose $\epsilon=0.1$. This suggestion is followed throughout this article. In essence, (\ref{EQ: LSN stat CUSUM}) compares 
the local subsamples of length $d+1$ before and after time $k$, i.e.,  
$\mathcal{S}_{\text{before}}=\{X_{k-d}, \ldots, X_{k}\}$ and 
$\mathcal{S}_{\text{after}}=\{X_{k+1}, \ldots, X_{k+1+d}\}$,
for each possible width $d$ that is not too small. The statistic $\T^{(\C)}_n(k)$ records the change corresponding to the most obvious $d$. We visualize our proposed score function $\T_n^{(\C)}(\cdot)$ 
and \cite{shao2010testing} proposal $\shao_n^{(1)}(\cdot)$ defined in \eqref{EQ: shao 1 ts} in Figure \ref{fig: compare pt est} when they are applied to a time series with 5 CPs. 
\cite{zhang2018unsupervised}'s method is not included because it does not map a time $k$ 
to a score. 
Clearly, $\T_n^{(\C)}(\cdot)$ attains the local maxima near all true CP locations, while $\shao_n^{(1)}(\cdot)$ cannot. Possible reasons are reduced CP detection capacity of global CUSUM by other changes, non-monotonic CP configuration, and non-robust SN.
By design, the score function can be extended to CP estimation; see Remark \ref{rem: pt est}.

\begin{figure}
\centering
\small
\includegraphics[width=0.95\linewidth]{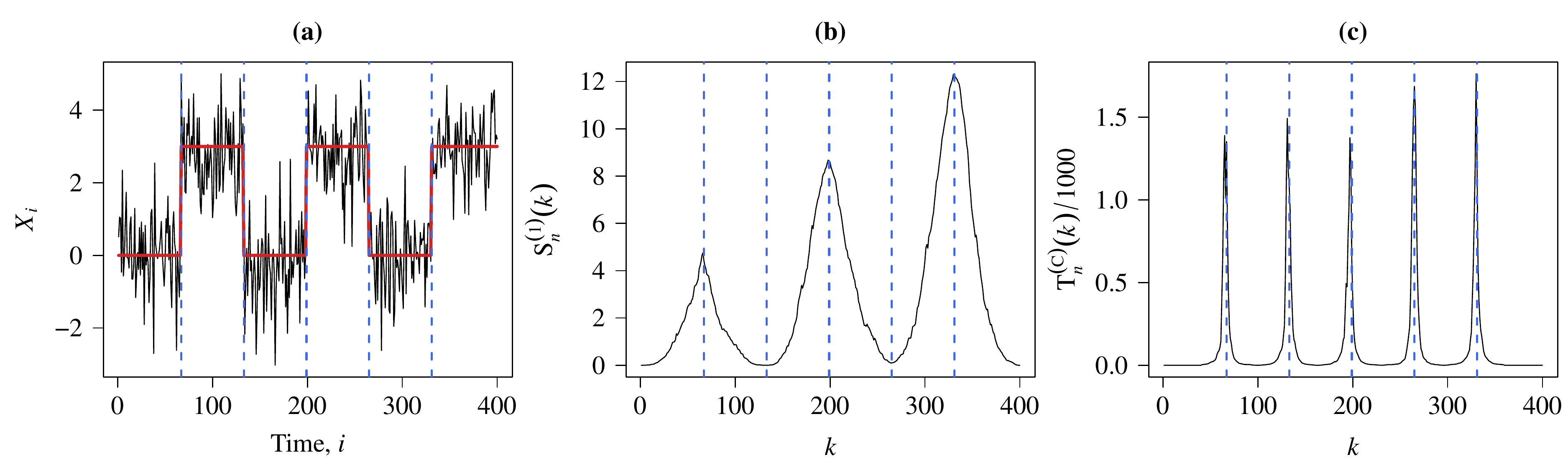}
\vspace{-0.3cm}
\caption{\footnotesize \label{fig: compare pt est}
(a) A realization of time series (black solid line) with 5 CPs (blue dashed lines) and its mean function (red solid line).
(b) The score function $\shao_n^{(1)}(k)$ employed by \cite{shao2010testing}. 
(c) The proposed score function $\T^{(\C)}_n(k)$.} 
\end{figure}

Finally, aggregating the scores $\T^{(\C)}_n(k)$ across $k$, 
we obtain the proposed test statistic:
\begin{equation}
\label{EQ: CUSUM ts}
\begin{split}
	\T^{(\C)}_n := \frac{1}{n - 2\lfloor \epsilon n \rfloor - 1}\sum_{k= \lfloor \epsilon n \rfloor + 1}^{n-\lfloor \epsilon n \rfloor - 1} \T^{(\C)}_n(k),
\end{split}
\end{equation}
which captures the effects of all potential CPs instead of 
merely the first and the last CPs as in \cite{zhang2018unsupervised}. Hence, this statistic is potentially more powerful for testing multiple CPs than the existing state-of-the-art SN test \citep{zhang2018unsupervised}. Define $0<\pi_1< \cdots<\pi_M< 1$ to be $M \geq 1$ relative CP times
so the CP occurs at times $\lfloor n\pi_1 \rfloor, \ldots, \lfloor n\pi_M \rfloor$.
For convenience, denote $\pi_0 = 0$ and $\pi_{M+1} = 1$. Theorem \ref{thm: CUSUM test} below states the limiting distribution of $\T_n^{(\C)}$ and the consistency of the test.

\begin{theorem}[Limiting distribution and consistency]
\label{thm: CUSUM test}
Suppose Assumption \ref{as:invariance principle} is satisfied. 
(i) Under $H_0$, 
we have
\begin{equation}
\label{EQ: CUSUM ts limiting dist} 
	\T^{(\C)}_n \inD \mathbb{T}:= (1-2\epsilon)^{-1} \int_{\epsilon}^{1-\epsilon} 
			\sup_{\delta > \epsilon} \frac{\mathbb{L}(t \mid t-\delta, t+\delta)^2}{\mathbb{V}(t\mid t-\delta,t+\delta)} \, \dd t,
\end{equation}
for any $0<\epsilon<1/2$,
where 
\begin{align}
	\mathbb{V}(t\mid t-\delta,t+\delta) 
		&= 	\frac{1}{4\delta} \left\{\int_{t- \delta}^{t} \mathbb{L}(\tau \mid t-\delta , t)^2 \dd \tau
			+ \int_{t}^{t+\delta} \mathbb{L}(\tau \mid t , t+\delta)^2 \dd \tau  \right\}, 
			\label{eqt:defVlimit}\\
	\mathbb{L}(t \mid \tau_1, \tau_2) 
		&= \frac{1}{\sqrt{\tau_2-\tau_1}} \left[ \BM(t) - \BM(\tau_1) - \frac{t-\tau_1}{\tau_2-\tau_1}\left\{\BM(\tau_2) - \BM(\tau_1) \right\} \right]. \label{eqt:defLlimit}
\end{align}
(ii) Under $H_{\geq 1}$, if there exist $j \in\{ 1, \ldots, M\}$ and $C>0$ such that
the $j$th CP time satisfies 
$\epsilon < \min(\pi_{j} - \pi_{j-1},\pi_{j+1} - \pi_{j})$ and the $j$th change magnitude satisfies 
$|\Delta_j|=|\mu_{k_j+1}-\mu_{k_j}|>C$ as $n\rightarrow\infty$, 
then $\lim_{n \rightarrow \infty}\T^{(\C)}_n = \infty$ in probability.
\end{theorem}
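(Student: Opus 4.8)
The plan is to prove parts~(i) and~(ii) separately, both anchored on the FCLT of Assumption~\ref{as:invariance principle}. For part~(i), writing $S_n(t)=n^{-1/2}\sum_{i=1}^{\lfloor nt\rfloor}Z_i$, I would first observe that under $H_0$ the common mean cancels, so that $\L^{(\C)}_n(\lfloor nt\rfloor\mid\lfloor n\tau_1\rfloor,\lfloor n\tau_2\rfloor)$ is a functional of $S_n$ that is continuous on $C[0,1]$; comparing \eqref{EQ local CUSUM} with \eqref{eqt:defLlimit} and using that $\sigma\BM$ has continuous paths almost surely, Assumption~\ref{as:invariance principle} and the continuous mapping theorem (CMT) give the joint weak convergence $\L^{(\C)}_n\Rightarrow\sigma\,\mathbb{L}(t\mid\tau_1,\tau_2)$. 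Because the self-normalizer \eqref{EQ: CUSUM self normalizer} is a scaled Riemann sum of squared localized CUSUMs, the same convergence and a Riemann-sum argument yield $\V^{(\C)}_n\Rightarrow\sigma^2\,\mathbb{V}(t\mid t-\delta,t+\delta)$, the prefactors $(k-s+1)/(e-s+1)^2$ producing the constant $1/(4\delta)$ of \eqref{eqt:defVlimit} at the symmetric center. The factor $\sigma^2$ is common to numerator and denominator and cancels in the ratio \eqref{EQ: LSN stat CUSUM}, which is exactly the self-normalization that makes the limit pivotal.

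To obtain $\mathbb{T}$ I would apply the CMT to the map $g\mapsto(1-2\epsilon)^{-1}\int_\epsilon^{1-\epsilon}\sup_{\delta>\epsilon}g(t,\delta)\,\dd t$ at the limiting ratio $\mathbb{L}^2/\mathbb{V}$, after checking that the discretized supremum over $\lfloor n\epsilon\rfloor\le d\le n$ and the normalized sum in \eqref{EQ: CUSUM ts} converge to the continuous supremum over $\delta$ and the integral over $t$. The main obstacle here is to justify this last CMT step: I would establish (a) joint weak convergence of the two-parameter process $(t,\delta)\mapsto\T^{(\C)}_n(\lfloor nt\rfloor\mid\lfloor n(t-\delta)\rfloor,\lfloor n(t+\delta)\rfloor)$ to $\mathbb{L}^2/\mathbb{V}$, uniformly on the compact set $\{\epsilon\le t\le1-\epsilon,\ \delta>\epsilon\}$, which needs the stochastic equicontinuity and tightness of $S_n$ supplied by Assumption~\ref{as:invariance principle}; and (b) that the limiting denominator $\mathbb{V}(t\mid t-\delta,t+\delta)$ is almost surely bounded away from $0$ on that set, so that the supremum and the integral are continuous operations and the ratio is well defined. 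The constraint $\delta>\epsilon$ keeps every window non-degenerate and is precisely what guarantees~(b).

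For part~(ii) I would exploit that $\T^{(\C)}_n$ is an average of nonnegative scores, so it suffices to make the scores near $k_j$ large. Fix the index $j$ from the hypothesis and a half-width $d=\lfloor n\delta\rfloor$ with $\epsilon<\delta<\min(\pi_j-\pi_{j-1},\pi_{j+1}-\pi_j)$, so that for every $k=k_j+r$ with $|r|\le r_n$, where $r_n\to\infty$ and $r_n=o(\sqrt n)$ (e.g.\ $r_n=\lfloor n^{1/4}\rfloor$), the symmetric window $[k-d,k+1+d]$ contains the single change $k_j$ and no other; the separation condition also places all such $k$ inside the summation range of \eqref{EQ: CUSUM ts} for large $n$. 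Splitting $\L^{(\C)}_n(k\mid k-d,k+1+d)$ into a deterministic mean part and a noise part, a direct computation gives a mean part of order $\sqrt n\,|\Delta_j|$ uniformly in such $r$ while the noise part is $O_p(1)$, so $\L^{(\C)}_n(k\mid k-d,k+1+d)^2\ge c_1 n$. For the self-normalizer the split is exactly at $k$: one half is change-free, and the half containing $k_j$ has the change within $o(\sqrt n)$ of its endpoint, contributing only $O(\Delta_j^2 r^2/n)=o(1)$, while its remaining noise part stays bounded as in part~(i); hence $\V^{(\C)}_n(k\mid k-d,k+1+d)\le c_2$. Therefore $\T^{(\C)}_n(k)\ge\T^{(\C)}_n(k\mid k-d,k+1+d)\ge c_3 n$, and summing over the $\asymp r_n$ indices and dividing by $n-2\lfloor\epsilon n\rfloor-1$ gives $\T^{(\C)}_n\gtrsim r_n\to\infty$ in probability.

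The main obstacle in part~(ii) is to upgrade these pointwise-in-$k$ bounds to hold simultaneously over the $\asymp r_n$ indices, since it is the growing \emph{number} of diverging scores, not any single one, that forces divergence after dividing by $n$. I would resolve this by noting that the noise parts of both $\L^{(\C)}_n$ and $\V^{(\C)}_n$ are continuous functionals of $S_n$ and hence converge weakly to Gaussian processes that are bounded on the relevant compact index set; consequently $\sup_{|r|\le r_n}|\text{noise}|=O_p(1)=o(\sqrt n)$, which turns the mean-part estimates into the uniform bounds $\L^{(\C)}_n(\cdots)^2\ge c_1 n$ and $\V^{(\C)}_n(\cdots)\le c_2$ with probability tending to one. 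A secondary point is to confirm that the self-normalizer contamination $\Delta_j^2 r^2/n$ is genuinely $o(1)$, which forces the cap $r_n=o(\sqrt n)$; any $r_n$ diverging but staying below $\sqrt n$ then suffices for the conclusion $\T^{(\C)}_n\to\infty$ in probability.
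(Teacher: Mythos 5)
Your proposal is correct and follows essentially the same route as the paper: part (i) is Assumption \ref{as:invariance principle} plus the continuous mapping theorem applied to the composite functional of $\C_n(\cdot)$ (with the non-degeneracy of the limiting self-normalizer being the only point needing care), and part (ii) lower-bounds the nonnegative average by the scores over a growing neighborhood of $k_j$ after splitting both $\L_n^{(\C)}$ and $\V_n^{(\C)}$ into signal and noise parts. The only substantive difference is your choice of neighborhood radius: you take $r_n=o(\sqrt n)$ so that the self-normalizer stays $O_p(1)$ and each score is of order $n$, giving $\T_n^{(\C)}\gtrsim r_n$, whereas the paper takes radius $\lfloor bn^{\Upsilon}\rfloor$ with $\Upsilon\in[3/2-\kappa,1)$, allows the self-normalizer to grow like $n^{2(\Upsilon+\kappa-3/2)}$, and obtains $\T_n^{(\C)}\gtrsim n^{1-\Upsilon}$ --- a parametrization chosen so the identical computation also delivers the local-power result of Theorem \ref{thm:localH1}; both tunings are valid for the fixed-magnitude case of part (ii).
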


Under $H_0$, the test statistic $\T_n^{(\C)}$ has a pivotal limiting distribution, whose quantiles can be easily found by simulation; 
see Table \ref{tab: 95 Critical Values} in the appendix and Section \ref{sec: Implementation} for more discussion. Under $H_{\geq 1}$, the test is of power 1 asymptotically when the change magnitude is at least a non-zero constant. 
Moreover, Assumption \ref{as:invariance principle} is satisfied by short-range dependent data. It can be generalized to handle long-range dependent data; see Example \ref{exp: LRD}.
Theorem \ref{thm:localH1} below concerns a local alternative hypothesis that contains CP having diminishing change magnitude. 
It states the regime of the size of CP at which our test remains powerful. 

\begin{theorem}[Local limiting power]\label{thm:localH1}
Under $H_{\geq 1}$, parametrize the $j$th change magnitude $|\Delta_j(n)| = |\mu_{k_j+1}-\mu_{k_j}|$ as a function of $n$.
If there exist $j \in\{ 1, \ldots, M\}$ and $0.5 <\kappa\leq 1$ such that
the $j$th CP satisfies 
$\epsilon < \min(\pi_{j} - \pi_{j-1},\pi_{j+1} - \pi_{j})$ and 
$|\Delta_j(n)| \asymp n^{\kappa -1}$ as $n\rightarrow\infty$, 
then 
$\lim_{n \rightarrow \infty}\T^{(\C)}_n = \infty$ in probability.
\end{theorem}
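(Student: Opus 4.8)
The plan is to follow the strategy behind Theorem~\ref{thm: CUSUM test}(ii), but to track the explicit dependence on the diminishing change magnitude $|\Delta_j(n)| \asymp n^{\kappa-1}$ and to isolate the regime of $\kappa$ for which the test still diverges. Since every summand $\T^{(\C)}_n(k)$ is nonnegative, I would first discard all terms except those in a band of indices centred at the isolated $j$th change point $k_j=\lfloor n\pi_j\rfloor$. Concretely, for a band $B_n=\{k:|k-k_j|\le \beta_n\}$ with a width $\beta_n$ to be chosen, the normalisation in \eqref{EQ: CUSUM ts} yields
\[
\T^{(\C)}_n \;\ge\; \frac{1}{n-2\lfloor \epsilon n\rfloor-1}\sum_{k\in B_n}\T^{(\C)}_n(k).
\]
The task then reduces to (a) a uniform lower bound on each score $\T^{(\C)}_n(k)$ for $k\in B_n$, and (b) a choice of $\beta_n$ large enough that the aggregated contribution, after dividing by $n$, still diverges.

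For step (a), fix a window half-width $d=\lfloor n\delta_0\rfloor$ with $\epsilon<\delta_0<\min(\pi_j-\pi_{j-1},\pi_{j+1}-\pi_j)$, so that by the isolation hypothesis the symmetric window $[k-d,k+1+d]$ contains the $j$th change point and no other one for every $k\in B_n$ once $\beta_n=o(n)$. Because $\T^{(\C)}_n(k)$ is a supremum over widths, it dominates the single-window statistic $\T^{(\C)}_n(k\mid k-d,k+1+d)$. I would decompose the localized CUSUM \eqref{EQ local CUSUM} into a deterministic signal part and a noise part: evaluating the signal part at a split $k=k_j+\ell$ with $|\ell|\le\beta_n$ gives a mean contribution of order $\sqrt{d}\,|\Delta_j(n)| \asymp \sqrt{n}\,n^{\kappa-1}=n^{\kappa-1/2}$, while Assumption~\ref{as:invariance principle} makes the noise part $O_p(1)$ uniformly in $k$. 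Since $\kappa>1/2$, the signal dominates and the numerator obeys $\L^{(\C)}_n(k\mid k-d,k+1+d)^2\gtrsim n^{2\kappa-1}$ with probability tending to one, uniformly over $B_n$.

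The crux is to bound the denominator $\V^{(\C)}_n(k\mid k-d,k+1+d)$ from above uniformly over $B_n$. When $k\ne k_j$ the change point lies strictly inside one of the two half-windows entering \eqref{EQ: CUSUM self normalizer}, so that sub-sample is no longer pure noise and the self-normalizer absorbs a spurious signal term. Estimating it, this contaminating contribution is of order $\ell^2|\Delta_j(n)|^2/d \asymp \ell^2 n^{2\kappa-3}$, whereas the genuine noise part of the self-normalizer converges, by Assumption~\ref{as:invariance principle} and the continuous mapping theorem, to the positive functional in \eqref{eqt:defVlimit} and is $O_p(1)$ uniformly in $k$. Hence the denominator stays bounded above in probability precisely as long as $\beta_n^2 n^{2\kappa-3}=O(1)$, i.e. $\beta_n\lesssim n^{3/2-\kappa}$, and for such $k$ we obtain $\T^{(\C)}_n(k)\gtrsim n^{2\kappa-1}$ uniformly. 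This uniform control of the self-normalizer against the off-centre change is the main obstacle, since it is what pins down the admissible band width.

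Finally I would take $\beta_n\asymp n^{3/2-\kappa}$, which is $o(n)$ and diverges because $1/2<\kappa\le1$. Combining the two steps, on an event of probability tending to one,
\[
\T^{(\C)}_n \;\gtrsim\; \frac{\beta_n\, n^{2\kappa-1}}{n} \;\asymp\; \frac{n^{3/2-\kappa}\,n^{2\kappa-1}}{n} \;=\; n^{\kappa-1/2}.
\]
As $\kappa>1/2$ the exponent is strictly positive, so $\T^{(\C)}_n\to\infty$ in probability, which is the assertion. The boundary case $\kappa=1/2$ is exactly where this lower bound ceases to diverge, and $\kappa=1$ recovers the fixed-magnitude rate $n^{1/2}$ implicit in Theorem~\ref{thm: CUSUM test}(ii), which is a useful consistency check on the balancing of the band width against the per-term magnitude.
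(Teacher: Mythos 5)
Your proposal is correct and follows essentially the same route as the paper's own proof: restrict to a band of $\asymp n^{3/2-\kappa}$ indices around the isolated change point, fix one macroscopic window width, lower-bound the numerator by the signal term $\asymp n^{\kappa-1/2}$, upper-bound the self-normalizer by showing the spurious signal contribution $\ell^{2}\Delta_j(n)^{2}/d$ stays $O(1)$ on that band, and aggregate to get the divergence rate $n^{\kappa-1/2}$. The paper parametrizes the band width as $n^{\Upsilon}$ with $\Upsilon\in[3/2-\kappa,1)$ and uses $d=\lfloor\epsilon n\rfloor$, but at the critical choice $\Upsilon=3/2-\kappa$ its bound coincides exactly with yours.
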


The smallest change magnitude so that the test remains powerful satisfies $\kappa > 0.5$. This result is in line with \cite{zhang2018unsupervised}. Indeed, from the proof of Theorem \ref{thm:localH1}, 
it is easy to see that CP tests based on the CUSUM process are of power one asymptotically only if the order of change magnitude is larger than $n^{-1/2}$. Therefore, our proposed test preserves the change detecting ability of the CUSUM process.

\begin{remark}[Non-SN approach]\label{rem:nonSN}
Users may instead use a consistent estimator of the LRV, say $\widehat{\sigma}^2_n$, to replace the SN \eqref{EQ: CUSUM self normalizer}, $\V_n^{(\C)}(k \mid k-d, k+1+d)$, for all $k$ and $d$. 
The performance is expected to be affected by the robustness of CP estimators (if any) and bandwidth parameter as discussed in \cite{kiefer2005new}. 
Alternatively, the LRV can be estimated robustly by splitting the dataset as in (\ref{EQ: CUSUM self normalizer}). 
However, the estimator of $\sigma^2$ will become very noisy due to small subsample sizes. 
This extra variability in estimating $\sigma^2$ may accumulate and eventually ruin the test statistic. 
Our approach takes the variability of $\V_n^{(\C)}(k\mid k-d,k+1+d)$ into account, thus, it is expected to perform better. 
\end{remark}

\begin{remark}[Symmetric windows]\label{rem:symmetricWindow}
Define the score function based on non-symmetric windows as 
\begin{equation}
\label{EQ: LSN stat nonsymm window}
\tilde{\T}_n^{(\C)}(k) = \sup_{\lfloor \epsilon n \rfloor \leq d_0, d_1 \leq n} \T_n^{(\C)}(k \mid k - d_0, k+1+d_1).
\end{equation}
The corresponding test statistic is
$
\tilde{\T}_n^{(\C)} = \sum_{k = \lfloor \epsilon n \rfloor +1}^{n - \lfloor \epsilon n \rfloor - 1} \tilde{\T}_n^{(\C)}(k)/(n-2\lfloor \epsilon n \rfloor - 1)
$. To balance computational cost and power, we suggest using symmetric windows $\T_n^{(\C)}(k)$ instead of $\tilde{\T}_n^{(\C)}(k)$, and set the local subsamples before and after target time $k$, 
i.e., $\mathcal{S}_{\text{before}}$ and $\mathcal{S}_{\text{after}}$, to have the same width $d$. 
The test consistency only requires existence of one CP in the local windows. The change can still be captured under symmetric windows without significant power loss. 
Using non-symmetric windows significantly increases computational cost 
and is not expected to bring vast improvement in power. 
Some simulation evidence is provided in Sections \ref{sub-sec: test comparison discussion} and \ref{sec: justify symm windows}. 
\end{remark}

\begin{remark}[CP time estimation]\label{rem: pt est}
The score function $\T^{(\C)}_n(\cdot)$ in (\ref{EQ: CUSUM test score function}) 
can be used for CP estimation; see Figure \ref{fig: compare pt est} (c) for visualization. Motivated by the GOALS algorithm \citep{jiang2021CP}, the set of estimated CP times can be obtained by finding the local maximizers of the score functions:
\begin{equation}
\label{EQ: cp set pt estimate}
\small
\widehat{\Pi} = \left \{k \in \{ \lfloor n\epsilon \rfloor, \ldots, n-\lfloor n\epsilon \rfloor \}: \T^{(\C)}_n(k) = \max_{k-\lfloor n\epsilon \rfloor <j \leq k+\lfloor n\epsilon \rfloor}\T^{(\C)}_n(j) \text{ and } \T^{(\C)}_n(k) > \varrho \right \},
\end{equation}
where $\varrho$ is the decision threshold that is of order $o(n)$. The estimation may be improved through integrating other screening algorithms to perform model selection in order to avoid overestimating the number of CPs. For example, the screening and ranking algorithm (SaRa) \citep{niu2012screening} and the scanning procedure proposed by \cite{yau2016inference}. Alternatively, it is possible to apply our proposed test to stepwise estimation, for example, binary segmentation (BS) \citep{vostrikova1981detecting}, wild binary segmentation \citep{fryzlewicz2014wild}, etc; 
see Section \ref{sec: pt estimation} in the appendix for the detail of integrating our proposed scanning procedure with SaRa and BS.
\end{remark}

\section{General framework for constructing LSN statistics}
\label{sec: general framework}
The proposed test statistic $\T^{(\C)}_n$ in (\ref{EQ: CUSUM ts}) sheds light on how multiple-CP tests should be performed. 
In particular, we combine the advantages of localization and self-normalization.  
Although $\T^{(\C)}_n$ demonstrates to have appealing features as we see in Figure \ref{fig: compare pt est}, 
it is not fully general in terms of two aspects to be stated below. 
In this section, we lay out the underlying principles 
to define a general class of statistics for testing multiple CPs.

First, $\T^{(\C)}_n$ is a functional of the global CUSUM process $\C_n(\cdot)$. 
Thus, potential CPs are detected via differences between local averages of $X_i$'s, 
which however may not be the best choice, in particular for the data that have heavy-tailed distributions. 
Generally, one may consider any specific \emph{global change detecting process} 
$\{\D_n(\lfloor nt \rfloor) \}_{0 \leq t \leq 1}$ for the AMOC problem. 
Applying the localization trick (\ref{EQ local CUSUM}) and 
the self-normalization trick (\ref{EQ: CUSUM self normalizer}) 
to $\D_n(\cdot)$ instead of $\C_n(\cdot)$, we obtain 
the generalized LSN statistic $\T^{(\D)}_n(k\mid s,e) = \L^{(\D)}_n(k\mid s,e)^2/\V^{(\D)}_n(k\mid s,e)$, 
where 
\begin{align*}
	\L_n^{(\D)}(k \mid s, e) 
		&= \left(\frac{n}{e-s+1} \right)^{1/2} \left[ \D_n(k) - \D_n(s-1) - \frac{k-s+1}{e-s+1}\left\{ \D_n(e) - \D_n(s-1) \right\} \right],\\
	\V_n^{(\D)}(k \mid s, e) 
		&= \frac{k-s+1}{(e-s+1)^{2}}\sum_{j = s}^k \L_n^{(\D)}(j \mid s, k)^2
 		+ \frac{e-k}{(e-s+1)^{2}}\sum_{j = k+1}^e \L_n^{(\D)}(j \mid k+1, e)^2.
\end{align*}
If $\D_n(\cdot)$ converges weakly in Skorokhod space to $\sigma_{\D} \mathbb{D}(\cdot)$
for some $\sigma_{\D}\in(0, \infty)$,  where $\mathbb{D}(\cdot)$ is 
an empirical process whose distribution is pivotal, 
then $\T^{(\D)}_n(k\mid d)$ is asymptotically pivotal as in Theorem \ref{thm: CUSUM test}.
Examples of $\D_n(\cdot)$, based on Wilcoxon statistics, Hodges--Lehmann statistics, and influence functions, are provided in Examples \ref{exp: Wilcoxon}--\ref{exp: Influence functions}. Extensions to long-range dependent and multivariate time series are provided in Examples \ref{exp: LRD} and \ref{exp: multivariate time series}, respectively.

Second, there are two free indices, $d$ and $k$, in $\T^{(\D)}_n(k\mid k-d,k+1+d)$.
The maximum operator in (\ref{EQ: CUSUM test score function}) 
and the mean operator in (\ref{EQ: CUSUM ts}) are used to aggregate the indices $d$ and $k$, 
respectively. The introduction of $d$ is to let the data choose the window sizes. We remark that the performance of the MOSUM statistic, which is one special case of localized CUSUM with a fixed window size, depends critically on the window size \citep{eichinger2018mosum}. 
Hence, they suggested to consider multiple window sizes for robustness.  
We follow their suggestion to improve performance by aggregating all possible window sizes.  
Generally, users may choose their preferred aggregation operators,
e.g., median, trimmed mean, etc, for aggregation.
Formally, for any finite set of indices $S \subset \mathbb{Z}$ having size $|S|$, 
we say that $\mathcal{P}$ is an aggregation operator if 
$\mathcal{P}_{i\in S}a_i$ maps 
$\{a_i \in\mathbb{R}: i\in S\}$ to a real number, 
e.g., 
if $\mathcal{P}=\max$, $\mathcal{P}_{i\in S}a_i = \max_{i\in S} a_i$ gives the maximum; and  
if $\mathcal{P}=\mean$, $\mathcal{P}_{i\in S}a_i = \sum_{i\in S}a_i/|S|$ gives the sample mean. Hence, the most general form of our proposed test statistic is 
\begin{align*}
	\T_n^{(\D)}
		= \T_n^{(\D)}[\mathcal{P}^{(1)},\mathcal{P}^{(2)}]
		= \mathcal{P}^{(2)}_{\lfloor \epsilon n \rfloor  < k < n-\lfloor \epsilon n \rfloor}
			\left[ \mathcal{P}^{(1)}_{\lfloor n\epsilon  \rfloor \leq d \leq n} \left\{ \T^{(\D)}_n(k\mid k-d,k+1+d) \right\} \right],
\end{align*}
where $\mathcal{P}^{(1)}$ and $\mathcal{P}^{(2)}$ are some aggregation operators. 
In particular, $\T_n^{(\C)}[\max, \mean]$ is used in (\ref{EQ: CUSUM ts}).
It is practically convenient to use weighted sum and weighted average, i.e., 
\begin{align*}
	\mathcal{P}^{(1)}_{\lfloor n\epsilon  \rfloor \leq d \leq n}  a_d
		= \max_{\lfloor n\epsilon  \rfloor \leq d \leq n}  w_d^{(1)} a_d
	\qquad \text{and} \qquad
	\mathcal{P}^{(2)}_{\lfloor \epsilon n \rfloor  < k < n-\lfloor \epsilon n \rfloor} a_k
		= \frac{\sum_{\lfloor \epsilon n \rfloor  < k < n-\lfloor \epsilon n \rfloor}  w_k^{(2)} a_k}{\sum_{\lfloor \epsilon n \rfloor  < k < n-\lfloor \epsilon n \rfloor}  w_k^{(2)}},
\end{align*}
where $w_d^{(1)}$'s and $w_k^{(2)}$'s are some constants specified by users. 
The weights guarantee the local intervals are not too small 
in order to avoid degeneracy of the SN.
They are also used to reflect user's prior beliefs on $k$ and $d$. 
We propose to use $\mathcal{P}^{(2)}=\mean$ instead of $\mathcal{P}^{(2)}=\max$
for better empirical performance.
By construction, 
our proposed LSN statistics remain large over the neighborhoods of the true CPs. Thus, the averaged score capture evidence contributed by all CPs and their neighborhoods, 
whereas the maximum score can only capture evidence contributed by one single CP 
corresponding to the highest score. 
In this sense, $\mathcal{P}^{(2)}=\mean$ is more suitable for multiple-CP testing.

Our approach is more flexible as it supports 
a general CP detecting process $\D_n(\cdot)$ and general aggregation operators $\mathcal{P}^{(1)}$ and $\mathcal{P}^{(2)}$, 
Moreover, our proposed LSN statistic is a function of the supplied global CP detecting process. Therefore, recursive computation of the LSN statistics, regardless of the supplied global change detecting process, is possible for our approach. 
It results in lower computational cost;
see Section \ref{sec: recursive computational} in the appendix. 
In a nutshell, our framework is an automatic procedure for generalizing any one-CP detecting statistic $\D_n(\cdot)$ to a multiple-CP detecting statistic $\T_n^{(\D)}$. We demonstrate it via following examples.

\begin{example}[Wilcoxon rank statistics]\label{exp: Wilcoxon}
An outlier-robust non-parametric CP test 
can be constructed by using the Wilcoxon two-sample statistic. The corresponding global change detecting process is 
\begin{equation}
\label{EQ wilcoxon global}
\begin{split}
	\W_n(\lfloor nt \rfloor) 
		=  \frac{1}{n^{3/2}}\sum_{i = 1}^{\lfloor nt \rfloor} 
			\sum_{j = \lfloor nt \rfloor +1}^n \left( \mathbb{1}_{\{ X_i \leq X_j \}} - \frac{1}{2} \right)
		= n^{-3/2}\left(\sum_{i = 1}^{ \lfloor nt \rfloor} R_i - \frac{ \lfloor nt \rfloor}{n}\sum_{i' = 1}^n R_{i'} \right),
\end{split}
\end{equation}
for $t \in [0,1]$, 
where $R_i = \sum_{j = 1}^n \mathbb{1}_{\{X_j \leq X_i \}}$ is the rank of $X_i$ 
when there is no tie in $X_1, \ldots, X_n$. 
Assuming that the data is weakly dependent and can be represented as a functional of an absolutely regular process. 
Under some mixing conditions, 
we have
$\W_n(\lfloor nt \rfloor) \inD \sigma_{\W}\left\{\BM(t)- t\BM(1) \right\}$, 
where 
$\sigma_{\W}^2 = \sum_{k \in\mathbb{Z}} \Cov\left \{F(X_0), F(X_k) \right \}$ 
and $F(\cdot)$ is the distribution function of $X_1$ with a bounded density; 
see Theorem 3.1 in \cite{dehling2013SRD}. Using the principles in Section \ref{sec: general framework}, we may use \eqref{EQ wilcoxon global} to construct a self-normalized multiple-CP Wilcoxon test and eliminate the nuisance parameter $\sigma_{\W}$
by using the test statistic $\T_n^{\W} := \T_n^{\W}[\max, \mean]$, 
whose limiting distribution is stated as follows. 

\begin{corollary}[Limiting distribution of $\T_n^{\W}$]\label{thm: wilcoxon test}
Under the RCs in Theorem 3.1 of \cite{dehling2013SRD} and $H_0$, $\T^{(\W)}_n \inD \mathbb{T}$.
\end{corollary}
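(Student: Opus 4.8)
The plan is to derive Corollary \ref{thm: wilcoxon test} as an immediate consequence of Theorem \ref{thm: CUSUM test}(i) together with the general principle of Section \ref{sec: general framework}, exploiting the fact that $\T_n^{(\W)}$ is produced from the Wilcoxon process $\W_n(\cdot)$ by \emph{exactly} the same measurable functional that produces $\T_n^{(\C)}$ from the CUSUM process $\C_n(\cdot)$. The only property of $\C_n(\cdot)$ used in the proof of Theorem \ref{thm: CUSUM test}(i) is its weak convergence under $H_0$ to a scaled standard Brownian bridge, supplied by Assumption \ref{as:invariance principle}. I would therefore begin by invoking Theorem 3.1 of \cite{dehling2013SRD}, which under the stated RCs and $H_0$ yields the functional convergence $\W_n(\lfloor nt \rfloor) \Rightarrow \sigma_{\W}\{\BM(t) - t\BM(1)\}$ in the Skorokhod space. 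This is the same limit process as for $\C_n(\cdot)$, with the positive constant $\sigma_{\W}$ in place of $\sigma$; in the notation of Section \ref{sec: general framework}, both processes share the identical pivotal limit $\mathbb{D}(\cdot) = \BM(\cdot) - (\cdot)\BM(1)$.

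Next I would record that $\L_n^{(\W)}(k\mid s,e)$ and $\V_n^{(\W)}(k\mid s,e)$ are continuous functionals of the increments of $\W_n(\cdot)$ having the same algebraic form as in \eqref{EQ local CUSUM}--\eqref{EQ: CUSUM self normalizer}. Writing $k=\lfloor nt\rfloor$, $s-1 = \lfloor n\tau_1\rfloor$ and $e = \lfloor n\tau_2\rfloor$ and passing to the limit, the drift term $t\BM(1)$ of the Brownian bridge cancels in the differencing $\W_n(k) - \W_n(s-1) - \tfrac{k-s+1}{e-s+1}\{\W_n(e) - \W_n(s-1)\}$, exactly as it does for $\C_n(\cdot)$. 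Hence the limiting local statistic equals $\sigma_{\W}$ times $\mathbb{L}(t\mid \tau_1,\tau_2)$ of \eqref{eqt:defLlimit}, and the limiting self-normalizer equals $\sigma_{\W}^2$ times $\mathbb{V}(t\mid t-\delta,t+\delta)$ of \eqref{eqt:defVlimit}. Applying the continuous mapping theorem to the full map sending the process to $\T_n^{(\W)} = \T_n^{(\W)}[\max,\mean]$, the factor $\sigma_{\W}^2$ appears in both the numerator $\L_n^{(\W)}(\cdot)^2$ and the denominator $\V_n^{(\W)}(\cdot)$, so it cancels in every ratio $\T_n^{(\W)}(k\mid k-d,k+1+d)$. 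This is precisely the self-normalization effect that renders the limit $\sigma_{\W}$-free, and the outer $\sup_{\delta}$ and averaging over $t$ therefore converge to the same functional of the standard Brownian bridge as in Theorem \ref{thm: CUSUM test}(i), giving $\T_n^{(\W)} \inD \mathbb{T}$.

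The main obstacle is the continuity verification underlying the continuous mapping step: one must confirm that the functional is almost surely continuous with respect to the law of the limit, which requires the limiting self-normalizer $\mathbb{V}(t\mid t-\delta,t+\delta)$ to be bounded away from zero uniformly over the admissible windows $\delta > \epsilon$ and $t\in[\epsilon,1-\epsilon]$. The $\epsilon$-trimming built into \eqref{EQ: CUSUM test score function}--\eqref{EQ: CUSUM ts} keeps every window width bounded below, so the denominator cannot degenerate and the ratio together with the $\sup$--$\mean$ aggregation is a.s. continuous. Since this argument is generic in the supplied process $\D_n(\cdot)$ and uses no feature of $\C_n(\cdot)$ beyond its weak limit, it transfers verbatim to $\W_n(\cdot)$; no estimate specific to rank statistics is needed beyond the weak convergence already granted by \cite{dehling2013SRD}. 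In short, the corollary is a direct instantiation of the framework, and its proof reduces to identifying the Wilcoxon limit and invoking the proof of Theorem \ref{thm: CUSUM test}(i) with $\sigma$ replaced by $\sigma_{\W}$.
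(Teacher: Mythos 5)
Your proposal is correct and follows essentially the same route as the paper: it invokes Theorem 3.1 of \cite{dehling2013SRD} to obtain the weak convergence of $\W_n(\lfloor nt\rfloor)$ to $\sigma_{\W}\{\BM(t)-t\BM(1)\}$ and then reruns the continuous-mapping argument from the proof of Theorem \ref{thm: CUSUM test}(i) with $\sigma$ replaced by $\sigma_{\W}$, the scale cancelling by self-normalization. Your added remarks on the non-degeneracy of the limiting self-normalizer under the $\epsilon$-trimming merely make explicit a point the paper's proof leaves implicit.
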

\end{example}

\begin{example}[Hodges--Lehmann order statistics]\label{exp: Hodges-Lehmann}
Hodges--Lehmann statistic is another popular alternative to the CUSUM statistic in (\ref{EQ Global CUSUM}). Its global change detecting process is 
\begin{equation}
\label{EQ: HL global}
\begin{split}
	\H_n(\lfloor nt \rfloor) = n^{-3/2}\lfloor nt \rfloor(n-\lfloor nt \rfloor)\median\left\{ (X_i - X_j): 1 \leq i \leq \lfloor nt \rfloor < j \leq n \right\},
\end{split}
\end{equation}
for $t\in[0,1]$,
where $\median(\mathcal{S})$ denotes the sample median of a set $\mathcal{S}$. 
It has a better performance than the CUSUM test under skewed and heavy-tailed distributions. Under the regularity conditions in \cite{dehling2015robust}, we have 
$H_n(\lfloor nt \rfloor) \inD \sigma_{\H} u(0)^{-1} \{\BM(t) - t\BM(1)\}$, where 
$u(x) = \int_{\mathbb{R}} f(y)f(x+y) \dd y$,
$\sigma_{\H}^2 = \sum_{k\in\mathbb{Z}} \Cov \{F(X_0), F(X_k) \}$, 
$f$ is the density of $X_1$, and 
$F$ is the distribution function of $X_1$. Similarly, we can apply the Hodges--Lehmann statistic to test for multiple CPs. The test statistic  is $\T^{(\H)}_n= \T^{(\H)}_n[\max, \mean]$,
whose limiting distribution is stated below. To our best knowledge, there is no existing self-normalized test that uses the Hodges--Lehmann statistic. It is included for detecting CP in heavy-tailed data and showing the generality of our proposed self-normalization framework.
\begin{corollary}[Limiting distribution of $\T^{(\H)}_n$]
\label{thm: HL test}
Under the RCs in Theorem 1 of \cite{dehling2015robust} and $H_0$, $\T^{(\H)}_n \inD \mathbb{T}$.
\end{corollary}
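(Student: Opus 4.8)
The plan is to reduce the statement to Theorem~\ref{thm: CUSUM test} by observing that $\T_n^{(\H)}$ is obtained from the global Hodges--Lehmann process $\H_n(\cdot)$ through \emph{exactly} the same sequence of operations (localization, self-normalization, and the aggregation operators $\max$ and $\mean$) that produce $\T_n^{(\C)}$ from the CUSUM process $\C_n(\cdot)$. Writing $\T_n^{(\D)} = \Phi_n(\D_n)$ for the common functional $\Phi_n$ that maps a global change-detecting process to the test statistic $\T_n^{(\D)}[\max,\mean]$, the only model-specific input I need is a functional central limit theorem for $\H_n(\cdot)$; everything downstream is literally identical to the CUSUM case, so this mirrors the reasoning already used for Corollary~\ref{thm: wilcoxon test}.

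First I would invoke Theorem~1 of \cite{dehling2015robust}, which under the stated regularity conditions and $H_0$ gives $\H_n(\lfloor nt\rfloor) \inD \sigma_{\H}\,u(0)^{-1}\{\BM(t) - t\BM(1)\}$ in the Skorokhod space $D[0,1]$, with $\sigma_\H^2 = \sum_{k\in\mathbb{Z}}\Cov\{F(X_0),F(X_k)\}$ and $u(0)=\int_{\mathbb{R}} f(y)^2\dd y$ as in Example~\ref{exp: Hodges-Lehmann}. The crucial structural point is that this limit is a \emph{scaled Brownian bridge} of precisely the same form as the CUSUM limit $\sigma\{\BM(t)-t\BM(1)\}$, differing only in the deterministic scaling constant $\sigma_\H u(0)^{-1}$ in place of $\sigma$.

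Next I would note that this constant is immaterial because of self-normalization: in $\T_n^{(\H)}(k\mid s,e)=\L_n^{(\H)}(k\mid s,e)^2/\V_n^{(\H)}(k\mid s,e)$ it enters numerator and denominator as the same squared factor and cancels. Carrying the continuous-mapping argument of Theorem~\ref{thm: CUSUM test} verbatim, the rescaled localized statistic converges to $\sigma_\H u(0)^{-1}\mathbb{L}(t\mid t-\delta,t+\delta)$ (the linear-trend terms of the bridge cancelling in the localized combination, exactly as for $\C_n$), the self-normalizer sums converge to $\sigma_\H^2 u(0)^{-2}\mathbb{V}(t\mid t-\delta,t+\delta)$, and the ratio converges to $\mathbb{L}^2/\mathbb{V}$ with the constant gone. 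Applying the $\sup_{\delta>\epsilon}$ and $(1-2\epsilon)^{-1}\int_{\epsilon}^{1-\epsilon}\dd t$ aggregation then yields the same pivotal limit $\mathbb{T}$ of \eqref{EQ: CUSUM ts limiting dist}.

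The main obstacle is not conceptual but lies in justifying the continuous-mapping step rigorously, and this work was essentially already discharged in the proof of Theorem~\ref{thm: CUSUM test}. Concretely, one must check that the map sending a sample path to $(1-2\epsilon)^{-1}\int_{\epsilon}^{1-\epsilon}\sup_{\delta>\epsilon}\mathbb{L}^2/\mathbb{V}\,\dd t$ is almost-surely continuous at the limiting Gaussian process in the Skorokhod topology; this requires that the limiting denominator $\mathbb{V}(t\mid t-\delta,t+\delta)$ is almost surely bounded away from zero uniformly over the compact index range $\{(t,\delta):\epsilon\le t\le 1-\epsilon,\ \epsilon<\delta\}$ cut off by the tuning parameter $\epsilon$, and that the discrete self-normalizer sums approximate their integral limits uniformly (a Riemann-sum argument) so that the $\sup$ and mean aggregations pass to the limit. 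Since the functionals $\Phi_n$ and their limit here coincide with those in Theorem~\ref{thm: CUSUM test}, the only genuinely new verification is that the conditions of \cite{dehling2015robust} deliver Skorokhod convergence to a Brownian bridge, after which the corollary follows with no further analysis.
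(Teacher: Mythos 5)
Your proposal follows essentially the same route as the paper's own proof: invoke Theorem~1 of \cite{dehling2015robust} to obtain weak convergence of $\H_n(\lfloor nt\rfloor)$ to the scaled Brownian bridge $(\sigma_{\H}/u(0))\{\BM(t)-t\BM(1)\}$, observe that the scale constant cancels under self-normalization, and then repeat the continuous-mapping argument of Theorem~\ref{thm: CUSUM test} verbatim. Your additional remarks on verifying almost-sure continuity of the aggregation functional (non-degeneracy of $\mathbb{V}$, uniform Riemann-sum approximation) are a more careful articulation of a step the paper leaves implicit, but they do not change the argument.
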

\end{example}

\begin{example}[Influence functions for testing general parameters]\label{exp: Influence functions}
Instead of testing changes in mean, one may be interested in other quantities,
e.g., variances, quantiles and model parameters. 
Let $\theta_i = P(F_i^{(h)})$, where $h\in\mathbb{N}$,
$P(\cdot)$ is a functional, and $F_{i}^{(h)}$ is the joint distribution function of 
$Y_i := (X_{i},\ldots,X_{i+h-1})^{\intercal}$ for $i=1, \ldots, n-h+1$. 
For example, for $h = 1$, 
$\mu_i = \int_{-\infty}^{\infty} x \dd F_{i}^{(1)}(x)$ and 
$\sigma^2_i = \int_{\mathbb{R}} x^2 \dd F_{i}^{(1)}(x) - \{\int_{\mathbb{R}} x \dd F_{i}^{(1)}(x) \}^2$
are the marginal mean and variance at time $i$ respectively.
For $h = 2$, the lag-1 autocovariance at time $i$ is $\gamma_i(1) = \int_{\mathbb{R}^2} (x_i - \mu_i)(x_{i-1} - \mu_{i-1}) \dd F_{i}^{(2)}(x_{i-1},x_{i})$.
The hypotheses in 
(\ref{eqt:H0}) and (\ref{EQ: multiple CP hypothesis test}) are redefined by replacing $\mu_i$'s by $\theta_i$'s. A possible global change detecting process is
\begin{equation}
\label{EQ: global parameter process}
	\G_n(\lfloor nt \rfloor) = \frac{\lfloor nt \rfloor(n-\lfloor nt \rfloor)}{n^{3/2}}\left( \widehat{\theta}_{1,\lfloor nt \rfloor} - \widehat{\theta}_{\lfloor nt \rfloor+1,n}\right),
\end{equation}
for $t\in[0,1]$; see, e.g., \cite{shao2010SNCI}, \cite{shao2010testing} and \cite{shao2015self}. We remark that $\widehat{\theta}$ is allowed to be of dimension $q > 1$; see Example \ref{exp: multivariate time series} for the construction of the corresponding global change detecting process.
Clearly, if $ \widehat{\theta}_{i,j} = (j-i+1)^{-1}\sum_{t = i}^j X_t$, 
then $\G_n(\cdot) = \C_n(\cdot)$.
So, (\ref{EQ: global parameter process}) generalizes \eqref{EQ Global CUSUM} 
from testing changes in $\mu_i$'s to testing changes in $\theta_i$'s. 
The final test statistic is $\T^{(\G)}_n := \T^{(\G)}_n[\max, \mean]$.

The limiting distribution of (\ref{EQ: global parameter process}) requires 
some standard regularity conditions in handling statistical functionals.
Define the empirical distribution of $Y_i$'s based on the sample $Y_s,\ldots, Y_e$
by $\widehat{F}_{s,e}^{(h)} = (e-s+1)^{-1} \sum_{i = s}^e \delta_{Y_i}$, 
where $\delta_y$ is a point mass at $y \in \mathbb{R}^h$.  
Assume that $\hat{\theta}_{s,e} := P(\widehat{F}_{s,e}^{(h)})$ is asymptotically linear in the following sense, as $e-s \rightarrow \infty$:
\begin{equation}
\label{EQ: parameter decomposition}
		P(\widehat{F}_{s,e}^{(h)}) 
		= P(F^{(h)}) + (e-s+1)^{-1}\sum_{i = s}^e \IF(Y_i \mid P, F^{(h)}) + R_{s,e},
\end{equation}
where $R_{s,e}$ is a remainder term,
$F^{(h)} = F^{(h)}_s = \cdots = F^{(h)}_e$, and 
$\IF(y \mid P, F^h) = \lim_{\epsilon \rightarrow 0} \{ P((1-\epsilon)F^h + \epsilon\,\delta_y) - P(F^h)\}/\epsilon$
is the influence function;
see Section 2.3 of \citet{Wasserman2006}.
The asymptotic linear representation (\ref{EQ: parameter decomposition}) is known as 
a nonparametric delta method.

\begin{corollary}[Limiting distribution of $\T^{(\G)}_n$]\label{thm: general parameter test}
Under $H_0$, 
$I_n(\lfloor nt \rfloor) := n^{-1/2}\sum_{i = 1}^{\lfloor nt \rfloor}\IF(Y_i \mid P, F^{(h)})$
for $t\in[0,1]$ and $h\in\mathbb{N}$, 
where $F^{(h)} = F^{(h)}_1 = \cdots = F^{(h)}_{n-h+1}$.
Suppose that (i) $\E\{\IF(Y_i \mid P, F^h)\} =0$ for all $i$, 
(ii) $\{I_n(t) : t\in[0,1]\} \Rightarrow \{ \sigma_{\G} \BM(t) : t\in[0,1]\}$ for some $0< \sigma_G <\infty$, and 
(iii) $\sup_{k = 1, \ldots, n-1} | R_{1:k}| + | R_{(k+1):n}| = o_p(n^{-1/2})$.
Then  $\T^{(\G)}_n \inD \mathbb{T}$.
\end{corollary}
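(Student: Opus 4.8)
The plan is to reduce the statement to a single functional limit theorem, namely that under $H_0$ the global change detecting process satisfies $\G_n(\lfloor nt\rfloor) \Rightarrow \sigma_{\G}\{\BM(t) - t\BM(1)\}$ in the Skorokhod space $D[0,1]$. This is exactly the limit, up to the scale $\sigma_{\G}$, enjoyed by the CUSUM process $\C_n(\cdot)$ under Assumption \ref{as:invariance principle} and $H_0$. Since $\T^{(\G)}_n$ is obtained from $\G_n(\cdot)$ by the identical localization, self-normalization, and aggregation operations that build $\T^{(\C)}_n$ from $\C_n(\cdot)$, and since the common scale $\sigma_{\G}$ cancels between the squared localized statistic $\L^{(\G)}_n$ and its self-normalizer $\V^{(\G)}_n$, the continuous-mapping argument in the proof of Theorem \ref{thm: CUSUM test}—including the fact that the global centering $t\BM(1)$ is annihilated by the localized recentering, so that the limit is the same $\mathbb{T}$—applies verbatim. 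Hence the only genuinely new content is the weak convergence of $\G_n(\cdot)$.

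To establish it, I would set $k = \lfloor nt\rfloor$ and substitute the nonparametric delta expansion \eqref{EQ: parameter decomposition} into both halves of \eqref{EQ: global parameter process}. Under $H_0$ the leading constant $P(F^{(h)})$ is common to $\widehat{\theta}_{1,k}$ and $\widehat{\theta}_{k+1,n}$ and cancels in the difference. Writing $S_k := \sum_{i=1}^k \IF(Y_i\mid P, F^{(h)})$, a short computation gives $\widehat{\theta}_{1,k} - \widehat{\theta}_{k+1,n} = S_k\, n/\{k(n-k)\} - S_n/(n-k) + (R_{1,k} - R_{k+1,n})$. Multiplying by the prefactor $k(n-k)/n^{3/2}$ collapses the principal part to $I_n(\lfloor nt\rfloor) - (\lfloor nt\rfloor/n)\,I_n(n)$, so that $\G_n(\lfloor nt\rfloor) = I_n(\lfloor nt\rfloor) - (\lfloor nt\rfloor/n)\,I_n(n) + r_n(t)$ with remainder $r_n(t) = \{k(n-k)/n^{3/2}\}(R_{1,k} - R_{k+1,n})$.

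Two steps then finish the argument. First, the remainder is uniformly negligible: because $k(n-k)\le n^2/4$ one has $k(n-k)/n^{3/2}\le n^{1/2}/4$, so $\sup_t |r_n(t)| \le (n^{1/2}/4)\sup_k\{|R_{1,k}| + |R_{k+1,n}|\} = (n^{1/2}/4)\,o_p(n^{-1/2}) = o_p(1)$ by assumption (iii). Second, assumption (ii) gives $I_n(\cdot) \Rightarrow \sigma_{\G}\BM(\cdot)$ in $D[0,1]$; since the map $W \mapsto \{W(t) - tW(1)\}_{t}$ is continuous and $\lfloor nt\rfloor/n \to t$ uniformly, the continuous-mapping theorem yields $I_n(\lfloor nt\rfloor) - (\lfloor nt\rfloor/n)\,I_n(n) \Rightarrow \sigma_{\G}\{\BM(t) - t\BM(1)\}$. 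Combining the two displays by a Slutsky argument in $D[0,1]$ gives the required convergence of $\G_n(\cdot)$, and the reduction above completes the proof.

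The hard part is the uniformity in the first step: the delta expansion \eqref{EQ: parameter decomposition} is guaranteed only as $e-s\to\infty$ and deteriorates when one subsample is short, which is precisely why assumption (iii) is imposed as a uniform $o_p(n^{-1/2})$ control rather than a pointwise one—it is what keeps the endpoint split points harmless despite the diverging factor $n^{1/2}$. A minor secondary point is that $Y_i$ is undefined for $i>n-h+1$, but as $h$ is fixed these $O(h)$ boundary terms contribute $O(n^{-1/2})$ to $I_n$ and are absorbed into $r_n$. Conditions (i) and (ii) carry the remaining load: (i) centers the influence functions so that (ii) can deliver a driftless Brownian limit, while all dependence from the overlapping blocks $Y_i$ is encapsulated in the long-run variance $\sigma_{\G}^2$ underlying (ii).
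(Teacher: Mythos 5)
Your proposal is correct and follows essentially the same route as the paper's proof: substitute the delta expansion \eqref{EQ: parameter decomposition} into \eqref{EQ: global parameter process} to obtain $\G_n(\lfloor nt\rfloor) = I_n(\lfloor nt\rfloor) - (\lfloor nt\rfloor/n)I_n(n) + r_n(t)$, kill the remainder uniformly via condition (iii) against the $O(n^{1/2})$ prefactor, and invoke the continuous mapping theorem together with the argument of Theorem \ref{thm: CUSUM test}. Your treatment is somewhat more explicit than the paper's (e.g., the $k(n-k)\le n^2/4$ bound and the $O(h)$ boundary terms), but the underlying decomposition and logic are identical.
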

\end{example}

\begin{example}[Long-range dependent time series]\label{exp: LRD}
Long-range dependence (LRD) 
is widely used for modeling time series data in, e.g., 
earth sciences, econometrics and traffic systems; see \cite{palma2007} for a review. 
The time series is of LRD 
if the autocovariances satisfy
\begin{equation}
\label{EQ: LRD autocaovariances}
\gamma(k) \sim |k|^{-D}\phi(k),
\end{equation}
as $k\rightarrow\infty$, 
where $D\in(0,1)$ is the LRD parameter, 
$\phi(\cdot)$ is a slowly varying function at infinity;
see, e.g., \cite{taqqu2017LRD}. In this section, we discuss two possible change detecting processes for handling LRD.

First, the global CUSUM process \eqref{EQ Global CUSUM} with suitable normalization converges under the LRD setting. 
Specifically, by Theorem 5.1 in \cite{taqqu1975weak}, the scaled CUSUM process satisfies 
\begin{equation*}
\C_n^*(\lfloor nt \rfloor) := (\sqrt{n}/d_n)\C_n(\lfloor nt \rfloor) \inD \sigma_{\C^*} \{\BM_H(t)- t\BM_H(1)\},
\end{equation*}
where $d_n^2 \sim 2n^{2-D}\phi(n)/\{(1-D)(2-D) \}$, $\sigma_{C^*}$ is a positive constant and $\BM_H(\cdot)$ is a standard fractional Brownian motion with a Hurst parameter $H = 1 - D/2$. 
Therefore, $\C_n^*(\cdot)$ can be used in our framework and the resulting test statistic is $\T^{(\C^*)}_n = \T^{(\C^*)}_n[\max , \mean]$. Second, since the Wilcoxon statistic is a partial sum of ranks, the invariance principle still holds under suitable normalization and some regularity conditions. 
By Theorem 2 in \cite{dehling2013LRD}, 
\begin{equation*}
\W_n^*(\lfloor nt \rfloor) := (\sqrt{n}/d_n)\W_n(\lfloor nt \rfloor) \inD (2\sqrt{\pi})^{-1}\left\{\BM_H(t) - t\BM_H(1)\right\}.
\end{equation*}
The resulting test statistic is $\T^{(\W^*)}_n := \T^{(\W^*)}_n[\max, \mean]$.
Note that users do not need to specify the unknown sequence $d_n$ 
since it is cancelled out by our SN. Thus, $\T^{(\C^*)}_n=\T^{(\C)}_n$ and $\T^{(\W^*)}_n=\T^{(\W)}_n$. 
However, by Corollary \ref{thm: LRD test} below, the limiting distribution 
is a function of a fractional Brownian motion instead of a Brownian motion. Note that \cite{shaoSN2011LRD} and \cite{betken2016testing} proposed global self-normalized test for the AMOC problem under long-range dependent time series. Our approach contributes by further extending it to the multiple-CP testing.

\begin{corollary}[Limiting distribution of $\T^{(\C^*)}_n$ and $\T^{(\W^*)}_n$]\label{thm: LRD test}
Let $X_i = \mu_i + h(U_i)$, where $\{U_i\}$ is a stationary zero-mean standard Gaussian process and its autocaovariances satisfy \eqref{EQ: LRD autocaovariances}, and $h$ is a measurable strictly monotone function such that $\E\{h(U_i)\} = 0$. Under $H_0$, we have
$\T^{(\C^*)}_n \inD \mathbb{T}_H$ and $\T^{(\W^*)}_n \inD \mathbb{T}_H$,
where
\begin{equation*}
	\mathbb{T}_H:= (1-2\epsilon)^{-1} \int_{\epsilon}^{1-\epsilon} 
			\sup_{\delta > \epsilon} \frac{\mathbb{L}_H(t \mid t-\delta, t+\delta)^2}{\mathbb{V}_H(t\mid t-\delta,t+\delta)} \, \dd t,
\end{equation*}
for any $0<\epsilon<1/2$,
where $\mathbb{L}_H$ is defined as $\mathbb{L}$ in (\ref{eqt:defLlimit}) 
but with $\mathbb{B}(\cdot)$ being replaced by $\mathbb{B}_H(\cdot)$; 
and 
$\mathbb{V}_H$ is defined as $\mathbb{V}$ in (\ref{eqt:defLlimit}) 
but with $\mathbb{L}(\cdot\mid \cdot, \cdot)$ being replaced by 
$\mathbb{L}_H(\cdot\mid \cdot, \cdot)$.
\end{corollary}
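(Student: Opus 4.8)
The plan is to reduce the claim to the continuous‑mapping argument already established for Theorem~\ref{thm: CUSUM test}(i), the only change being that the Brownian bridge limit is replaced by a fractional Brownian bridge. The structural fact that makes this work is that the self‑normalized statistic is invariant under rescaling of the underlying global process. Inspecting \eqref{EQ local CUSUM}--\eqref{EQ: LSN stat CUSUM}, the localized numerator $\L_n^{(\C)}(k\mid s,e)$ is linear in $\C_n(\cdot)$ while the self‑normalizer $\V_n^{(\C)}(k\mid s,e)$ is quadratic in it, so each ratio $\T_n^{(\C)}(k\mid s,e)$, and hence the aggregated statistic $\T^{(\C)}_n$ in \eqref{EQ: CUSUM ts}, is unchanged under $\C_n(\cdot)\mapsto c\,\C_n(\cdot)$ for every constant $c>0$. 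The same homogeneity holds with $\C$ replaced by $\W$.

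First I would use this scale invariance to remove the unknown normalizing sequence $d_n$. Since $\C_n(\cdot)=(d_n/\sqrt n)\,\C_n^{*}(\cdot)$, homogeneity gives $\T^{(\C^*)}_n=\T^{(\C)}_n$, and likewise $\T^{(\W^*)}_n=\T^{(\W)}_n$, confirming the identity stated in Example~\ref{exp: LRD}; in particular $d_n$ never has to be known or estimated. Thus the statistics to be analyzed are exactly those of Theorem~\ref{thm: CUSUM test}, fed with a global process whose scaling limit is now fractional.

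Next I would supply the weak‑convergence inputs recorded in Example~\ref{exp: LRD}: the scaled processes $\C_n^{*}(\cdot)$ and $\W_n^{*}(\cdot)$ converge weakly in the Skorokhod space, by \cite{taqqu1975weak} and \cite{dehling2013LRD} respectively, to $\sigma_{\C^{*}}\{\BM_H(t)-t\BM_H(1)\}$ and $(2\sqrt\pi)^{-1}\{\BM_H(t)-t\BM_H(1)\}$. Applying scale invariance once more, the multiplicative constants $\sigma_{\C^{*}}$ and $(2\sqrt\pi)^{-1}$ cancel, so it remains to identify the law of the statistic evaluated at the standardized fractional Brownian bridge $\BM_H(t)-t\BM_H(1)$. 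Passing from the discrete localized objects to their continuous analogues exactly as in \eqref{eqt:defVlimit}--\eqref{eqt:defLlimit}, this value is precisely $\mathbb{T}_H$, with $\mathbb{L}_H$ and $\mathbb{V}_H$ obtained from $\mathbb{L}$ and $\mathbb{V}$ by substituting $\BM_H$ for $\BM$.

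The substantive work, and the step I expect to be the main obstacle, is verifying that the functional sending the global process to $\T^{(\C)}_n$ is continuous almost everywhere with respect to the fractional‑Brownian‑bridge law, so that the continuous mapping theorem applies and the outer average and inner supremum converge to the integral and supremum in $\mathbb{T}_H$. Here I would replay the estimates from the proof of Theorem~\ref{thm: CUSUM test}(i) and check that each survives the replacement of $\BM$ by $\BM_H$: the inner supremum is over $\delta\in[\epsilon,\min(t,1-t)]$, a compact set on which the constraint $d\ge\lfloor n\epsilon\rfloor$ (i.e.\ $\delta>\epsilon$) keeps the windows non‑degenerate; the limiting self‑normalizer $\mathbb{V}_H(t\mid t-\delta,t+\delta)$ must be almost surely bounded away from zero uniformly over this set; and the outer Riemann sum over $k$ must converge to $\int_{\epsilon}^{1-\epsilon}(\cdot)\,\dd t$. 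The only point genuinely specific to the LRD setting is the non‑degeneracy of $\mathbb{V}_H$: this holds because $\BM_H$ is a centred Gaussian process with almost surely continuous, non‑constant paths and non‑degenerate finite‑dimensional distributions, so the integrals $\int\mathbb{L}_H(\tau\mid\cdot,\cdot)^2\,\dd\tau$ are almost surely positive. Since fractional Brownian motion thereby shares the continuity and non‑degeneracy used for ordinary Brownian motion, these arguments go through verbatim, and the continuous mapping theorem yields $\T^{(\C^*)}_n\inD\mathbb{T}_H$ and $\T^{(\W^*)}_n\inD\mathbb{T}_H$.
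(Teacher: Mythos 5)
Your proposal is correct and follows essentially the same route as the paper's own (very terse) proof: invoke the fractional invariance principles of \cite{taqqu1975weak} and \cite{dehling2013LRD}, observe that $\T_n^{(\C^*)}$ and $\T_n^{(\W^*)}$ are continuous measurable functionals of the respective global processes, and apply the continuous mapping theorem exactly as in Theorem~\ref{thm: CUSUM test}(i), with the scale homogeneity of the ratio cancelling $d_n$ and the constants $\sigma_{\C^*}$ and $(2\sqrt{\pi})^{-1}$. Your added care about the non-degeneracy of $\mathbb{V}_H$ is a detail the paper leaves implicit, but it does not change the argument.
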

In practice, the unknown parameter $H$ can be consistently estimated through regression in the spectral domain \citep{palma2007} by, e.g., the local Whittle method \citep{KunschLocalWhittle1987}. \cite{shao2007localwhittle} showed the consistency of the local Whittle estimator.
\end{example}

\begin{example}[Multi-dimensional parameter]\label{exp: multivariate time series}
Our approach extends to the multivariate case naturally.
Let $\{\Q_n(\lfloor nt \rfloor)\}_{0 \leq t \leq 1}$ be a $q$-dimensional CP detecting process 
under the AMOC problem, where $q \in \mathbb{N}$.  
For example for testing changes in $q$-dimensional parameter $\Theta$, we can use a multivariate version of \eqref{EQ: global parameter process}:
\begin{align}
\label{EQ: q-dim global parameter process}
    \Q_n(\lfloor nt\rfloor) = \frac{\lfloor nt \rfloor(n-\lfloor nt \rfloor)}{n^{3/2}}\left( \widehat{\Theta}_{1,\lfloor nt \rfloor} - \widehat{\Theta}_{\lfloor nt \rfloor+1,n}\right),
\end{align}
where $\widehat{\Theta}_{s,e}$ is the estimator for $\Theta$ using the subsample $\{X_s,\ldots,X_e \}$.
We define the corresponding LSN statistic as $\T_n^{(\Q)}(k\mid s, e) = \L_n^{(\Q)}(k\mid s, e)^{\T} \V_n^{(\Q)}(k\mid s, e)^{-1} \L_n^{(\Q)}(k\mid s, e)$, where
\begin{align*}
\L_n^{(\Q)}(k\mid s, e) &= \left(\frac{n}{e-s+1}\right)^{1/2}\left[ \Q_n(k) - \Q_n(s-1) - \frac{k-s+1}{e-s+1}\left\{ \Q_n(e) - \Q_n(s-1)\right\}\right],\\
\V_n^{(\Q)}(k\mid s, e) &= \frac{k-s+1}{(e-s+1)^2}\sum_{j = s}^k \L_n^{(\Q)}(j\mid s, k)^{\otimes 2} +  \frac{e-k}{(e-s+1)^2}\sum_{j = k+1}^e \L_n^{(\Q)}(j\mid k+1, e)^{\otimes 2},
\end{align*}
and $A^{\otimes 2} = AA^{\T}$ for any matrix $A$. 
Aggregating $\T_n^{(\Q)}(k\mid s, e)$
as in (\ref{EQ: CUSUM test score function}) and (\ref{EQ: CUSUM ts}), 
we obtain the test statistic $\T_n^{(\Q)} :=  \T^{(\Q)}_n[\max, \mean]$. 
Its limiting distribution is stated as follows.

\begin{corollary}[Limiting distribution of $\T^{(\Q)}_n$]\label{thm: multivartiate test}
Assume
$\{\Q_n(\lfloor(nt\rfloor):t\in[0,1]\} 
	\Rightarrow \{\Sigma_{\Q}^{1/2}\{\BM^{(q)}(t) - t\BM^{(q)}(1) \} :t\in[0,1]\}$ 
for some unknown positive definite matrix $\Sigma_{\Q}^{1/2} \in \mathbb{R}^{q \times q}$,
and $\BM^{(q)}(\cdot)$ is $q$-dimensional standard Brownian motion. 
Under $H_0$, we have $\T^{(\Q)}_n \inD \mathbb{T}^{(q)}$ for any $0<\epsilon<1/2$, where 
\begin{equation*}
	\mathbb{T}^{(q)} 
		:= (1-2\epsilon)^{-1} \int_{\epsilon}^{1-\epsilon} 
			\sup_{\delta > \epsilon} \mathbb{L}^{(q)}(t \mid t-\delta, t+\delta)^{\T} \mathbb{V}^{(q)}(t\mid t-\delta,t+\delta)^{-1}\mathbb{L}^{(q)}(t \mid t-\delta, t+\delta) \, \dd t,
\end{equation*}
where $\mathbb{L}^{(q)}$ is defined as $\mathbb{L}$ in (\ref{eqt:defLlimit}) 
but with $\mathbb{B}(\cdot)$ being replaced by $\mathbb{B}^{(q)}(\cdot)$; 
and 
$\mathbb{V}^{(q)}$ is defined as $\mathbb{V}$ in (\ref{eqt:defLlimit}) 
but with $\mathbb{L}(\cdot\mid \cdot, \cdot)^2$ being replaced by 
$\mathbb{L}^{(q)}(\cdot\mid \cdot, \cdot)^{\otimes 2}$.
\end{corollary}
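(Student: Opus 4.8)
The plan is to reduce the statement to a deterministic continuous-mapping computation and then track how the nuisance matrix $\Sigma_{\Q}^{1/2}$ propagates through the quadratic form, paralleling the proof of Theorem~\ref{thm: CUSUM test}(i) with scalar quantities replaced by their matrix analogues. By the hypothesized weak convergence $\{\Q_n(\lfloor nt\rfloor)\}\Rightarrow\{\Sigma_{\Q}^{1/2}(\BM^{(q)}(t)-t\BM^{(q)}(1))\}$ and the Skorokhod representation theorem, I would work on a probability space on which this convergence holds almost surely; since the limit has continuous sample paths, the convergence is in fact uniform on $[0,1]$, so each subsequent step becomes an almost-sure manipulation of continuous functions.

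First I would identify the limit of the localized statistic. Writing $s=\lfloor n\tau_1\rfloor$, $e=\lfloor n\tau_2\rfloor$, $k=\lfloor nt\rfloor$ and substituting the limiting bridge $\mathbb{G}(u):=\BM^{(q)}(u)-u\BM^{(q)}(1)$, the key algebraic observation is that the localization operation in $\L_n^{(\Q)}$ annihilates the drift: the contribution $-(t-\tau_1)\BM^{(q)}(1)$ coming from $\mathbb{G}(t)-\mathbb{G}(\tau_1)$ is cancelled by the $+(t-\tau_1)\BM^{(q)}(1)$ produced by the localization correction $-\tfrac{t-\tau_1}{\tau_2-\tau_1}\{\mathbb{G}(\tau_2)-\mathbb{G}(\tau_1)\}$. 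Hence $\L_n^{(\Q)}(\lfloor nt\rfloor\mid s,e)\to\Sigma_{\Q}^{1/2}\,\mathbb{L}^{(q)}(t\mid\tau_1,\tau_2)$, which explains why the stated limit is written through $\BM^{(q)}$ rather than through its bridge. Passing the outer-product Riemann sums defining $\V_n^{(\Q)}$ to the limit then gives $\V_n^{(\Q)}(\lfloor nt\rfloor\mid s,e)\to\Sigma_{\Q}^{1/2}\,\mathbb{V}^{(q)}(t\mid\tau_1,\tau_2)\,\Sigma_{\Q}^{1/2}$. Forming the quadratic form and using that $\Sigma_{\Q}^{1/2}$ is symmetric and invertible, the flanking factors cancel against $\Sigma_{\Q}^{-1/2}$ inside the inverse, leaving $\mathbb{L}^{(q)\T}(\mathbb{V}^{(q)})^{-1}\mathbb{L}^{(q)}$, which is free of $\Sigma_{\Q}$ and hence pivotal.

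The main obstacle is justifying continuity of the full functional, which couples matrix inversion with the supremum over $\delta$ and the integral over $t$. I would first show that the limiting self-normalizer $\mathbb{V}^{(q)}(t\mid t-\delta,t+\delta)$ is almost surely positive definite: each of its two summands is the Gram matrix of the $q$ coordinate processes of a nondegenerate Brownian motion over a nondegenerate interval, and since those coordinates are independent and hence almost surely linearly independent in $L^2$, at least one summand is almost surely positive definite, making the sum invertible. Because the smallest eigenvalue is a continuous, almost surely positive function of $(t,\delta)$ on the compact index set $\{\epsilon\le t\le 1-\epsilon,\ \epsilon\le\delta\le\min(t,1-t)\}$, it is bounded below by a positive random constant, so matrix inversion is continuous at the limit uniformly in $(t,\delta)$. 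The continuous mapping theorem then applies pointwise, and the uniform almost-sure convergence together with the uniform boundedness of $(\mathbb{V}^{(q)})^{-1}$ permits interchanging the limit with $\sup_{\delta>\epsilon}$ and with $\int_{\epsilon}^{1-\epsilon}\dd t$, yielding $\T_n^{(\Q)}\inD\mathbb{T}^{(q)}$. The remaining bookkeeping---that the $\lfloor n\cdot\rfloor$ discretization errors are asymptotically negligible and that the boundary constraints keep each window width bounded away from zero so the self-normalizer never degenerates---is identical to the scalar argument in Theorem~\ref{thm: CUSUM test}(i).
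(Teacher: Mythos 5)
Your proposal is correct and follows essentially the same route as the paper, which simply invokes the continuous mapping theorem on the assumed weak convergence of $\Q_n(\cdot)$; you additionally spell out the cancellation of $\Sigma_{\Q}^{1/2}$ in the quadratic form and the almost-sure invertibility of the limiting self-normalizer, details the paper leaves implicit. No gaps.
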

\cite{shao2010testing} proposed a general AMOC self-normalized change point test for the mean of multivariate time series. Our approach localizes their proposed CP detecting process and self normalizer for multiple-CP detection. Note that their proposed SN requires repeated calculation of the quantities of interest for each subsample while our SN reuses the global change detecting process which allows faster computation.

\begin{remark}
If the dimension $q$ increases with respect to the sample size $n$, then the proposed self normalizer $\V_n^{(\Q)}(\cdot)$ may not be invertible. To overcome the issue, \cite{shaoadaptiveHDsnmct2021} and \cite{wang2019HDmctsn} proposed univariate change detecting processes based on U-statistic for testing existence of CPs in the mean of high dimensional independent data. \cite{wang2019HDmctsn} further integrated their proposal into \cite{zhang2018unsupervised}'s framework and the test was shown to preserve power in the multiple-CP setting. By applying their proposal into our framework, the non-invertible issue can be avoided.
Note that \cite{wang2019HDmctsn} proposed a change detecting process that converges to a function of a centered Gaussian process. Thus, their test statistic is asymptotically pivotal. The critical values can be found through Monte Carlo simulation. The performance of the test is left for further investigation.
\end{remark}
\end{example}

\section{Discussion and Implementation}
\label{sec: Implementation}
\subsection{Comparison with existing methods}
\label{sub-sec: test comparison discussion}

\cite{shao2010testing} extended the self-normalized one-CP test \eqref{EQ: shao 1 ts} to 
a ``supervised'' multiple-CP test tailored for testing $m$ CPs, 
where $m$ is pre-specified.
The test statistic is  
\begin{eqnarray}\label{eqt:ShaoGeneral}
	\shao_n^{(m)} 
		= \sup_{(k_1, \ldots, k_m) \in \Omega_n^{(m)}(\epsilon)} 
			\sum_{j=1}^m \T_n^{(\C)}(k_1 \mid k_{j-1}, k_{j+1}), 
\end{eqnarray}
where $1 = k_0 < k_1 < \cdots < k_m < k_{m+1} = n$, and 
\begin{equation*}
\Omega_n^{(m)}(\epsilon) 
	= \left\{
		k_1, \ldots, k_m : 
		\forall\, j\in\{1,\ldots,m\},
		\lfloor \epsilon n \rfloor \leq \min(k_{j}-k_{j-1}, k_{j+1}-k_{j}) \right\}.
\end{equation*}
The trimmed region $\Omega_n^{(m)}(\epsilon)$ 
prevents computing estimates by too few observations.
\cite{zhang2018unsupervised} later proposed a ``unsupervised'' self-normalized multiple-CP test which does not require specifying $m$ and defined as,
\begin{eqnarray}\label{eqt:zhang}
	\zhang_n = \sup_{(k_1,k_2) \in \Omega_n^{(2)}(\epsilon)} \T_n^{(\C)}(k_1 \mid 1, k_2) 
				+ \sup_{(k_1,k_2) \in \Omega_n^{(2)}(\epsilon)} \T_n^{(\C)}(k_2 \mid k_1, n).
\end{eqnarray}	
Although both $\shao_n^{(m)}$, $\zhang_n$ and $\T_n^{(C)}$ use the LSN CUSUM statistics, 
i.e., $\T_n^{(\C)}(k \mid s, e)$ defined in \eqref{EQ: LSN stat CUSUM}, 
as building blocks, they have different restrictions on the local windows and aggregate the LSN statistics in different ways.

\cite{shao2010testing}'s test statistic is the maximum of sum of LSN statistics 
calculated on $m$ local windows with target time points $k_1, \ldots, k_m$, 
i.e., $[1, k_2], \ldots, [k_{m-1},n]$. 
This approach is also used by \cite{antoch2013testing} for constructing 
a non-self-normalized multiple-CP test. 
It has a strict control on the $m$ local windows 
since the boundaries of a window relate to previous and next windows. If the number of CPs $M$ is misspecified, some of the windows may not contain only one CP. 
If $m<M$, the self normalizers are not robust to the changes. 
If $m>M$, some degree of freedom is wasted to detect CPs that do not exist. 
Both two cases may lead to a huge loss in power. 

\cite{zhang2018unsupervised}'s approach sets the left end of the window to be 1 in the forward scan, and 
the right end to be $n$ in the backward scan. 
So, their approach tends to scan for the first CP $k_1\in[1, e]$, and the last CP 
$k_M\in[s, n]$ for some $e,s$; see Section \ref{sec: justify symm windows}. 
In contrast, our approach scans for all possible CPs.
We also allow the windows to start and end at any times
instead of restricting them to start at time 1 or end at time $n$. 

Besides, \cite{jiang2021CP} recently considered a self-normalized 
CP estimation problem for a quantile regression model, 
while we consider a general CP testing problem. 
Since we solve different inference problems,
the results are not directly comparable. 
So,
we only compare the self normalizers in these two approaches. 
First, their self normalizer is specifically designed for 
quantile regression model, but 
ours is built on a general framework, which supports 
any global change detecting process; 
see Section \ref{sec: general framework}.
One of our major contributions is to extend 
any one-CP test to multiple-CP self normalized test. 
Second, computing their self normalizer requires fitting the 
quantile regression model $O(n^2)$ times
on all possible local subsamples. 
In contrast, our self normalizer is a function of a 
global change detecting process, which can be computed 
by fitting the models on $O(n)$ local subsamples.
It significantly reduces computational burden. 
Although we focus on different problems, 
our proposed test statistic can be integrated
into the first step of their algorithm. 
The fused algorithm has a potential of further enhancing 
generality and computational flexibility. 
We leave it for further study.

Computational efficiency is usually a major concern in constructing multiple-CP tests.
Table \ref{tab: tests comparison} and Figure \ref{fig: time complexity}(a) present the average computational times for the test statistics 
$\KS_n$, $\shao_n^{(1)}$, $\shao_n^{(m)}$, $\zhang_n$, $\T_n^{(C)}$, $\T_n^{(W)}$ and $\T_n^{(H)}$. 
The computational cost of $\KS_n$ and $\shao_n^{(1)}$ is the least 
since they are tailor-made for the AMOC problem and have the simplest forms. Moreover, the computational cost of $\shao_n^{(m)}$
is scaled exponentially with $m$. 
Among the self-normalized multiple-CP tests, 
$\shao_n^{(2)}$, $\zhang_n$, $\T_n^{(\C)}$ and $\T_n^{(\W)}$ have the same computational complexity. 
Moreover, our framework allows fast recursive computation. 
It further reduces the computational time of $\T_n^{(\C)}$ and $\T_n^{(\W)}$; 
see Section \ref{sec: recursive computational} in the appendix. 
The higher time cost required by $\T_n^{(\H)}$ is solely due to higher time cost of computing the global Hodges--Lehmann change detecting statistic \eqref{EQ: HL global}.
Figure \ref{fig: time complexity}(b) shows that the computational cost of using nonsymmetric window in $\tilde{\T}_n^{(\C)}$ is significantly larger than that of $\T_n^{(\C)}$
because of the larger feasible set in computing the score function \eqref{EQ: LSN stat nonsymm window}. However, the gain in power is insignificant; see Section \ref{sec: justify symm windows}. The incremental power improvement does not justify the huge additional time cost. Therefore, using symmetric windows is recommended.

\begin{table}[t]
\setlength{\tabcolsep}{3pt}
\centering
\footnotesize
\caption{\footnotesize Comparisons among different CP tests in terms of 
(a) finite-sample size accuracy with respect to the nominal type-I error rate; 
(b) powerfulness under different actual numbers of CP $M$; 
(c) time complexity of computing the test statistics based on a sample of size $n$; 
(d) robustness against outliers; and 
(e) requirements of computing a LRV estimate $\widehat{\sigma}^2_n$ and 
specifying a target number of CPs $m$.
The comparisons in (a) and (b) are based on our simulation results 
in Section \ref{sec: simulation} and the appendix.} 
\label{tab: tests comparison}
\begin{tabular}{c c cc c c cc}
\toprule
 & (a) size accuracy & \multicolumn{2}{c}{(b) powerfulness} & (c) time complexity  & (d) robustness& \multicolumn{2}{c}{(e) requirement} \\
\cmidrule(r){3-4}  \cmidrule(r){7-8}
&& $M=1$ & $M>1$ &&  & $~~~\widehat{\sigma}^2_n~~~$ & $m$  \\
\cmidrule(r){2-8}  
$\KS_n$ & low & high & low & $O(n)$ & low & yes & no \\
$\shao_n^{(1)}$ & high & high & low & $O(n)$ & low& no & yes  \\
$\shao_n^{(m)}$ & medium & medium & depends on $m$ & $O(n^m)$  & low& no & yes \\
$\zhang_n$ & medium & medium & medium & $O(n^2)$  & low& no & no  \\
$\T_n^{(\C)}$ & high & high & high & $O(n^2)$ & low& no & no \\
$\T_n^{(\W)}$ & high & high & high & $O(n^2)$   & high & no & no \\
$\T_n^{(\H)}$ & high & high & high & $O(n^3\log n)$  & high & no & no \\
\bottomrule
\end{tabular} 
\end{table}

\begin{figure}[t]
\centering
\includegraphics[width=0.7\linewidth]{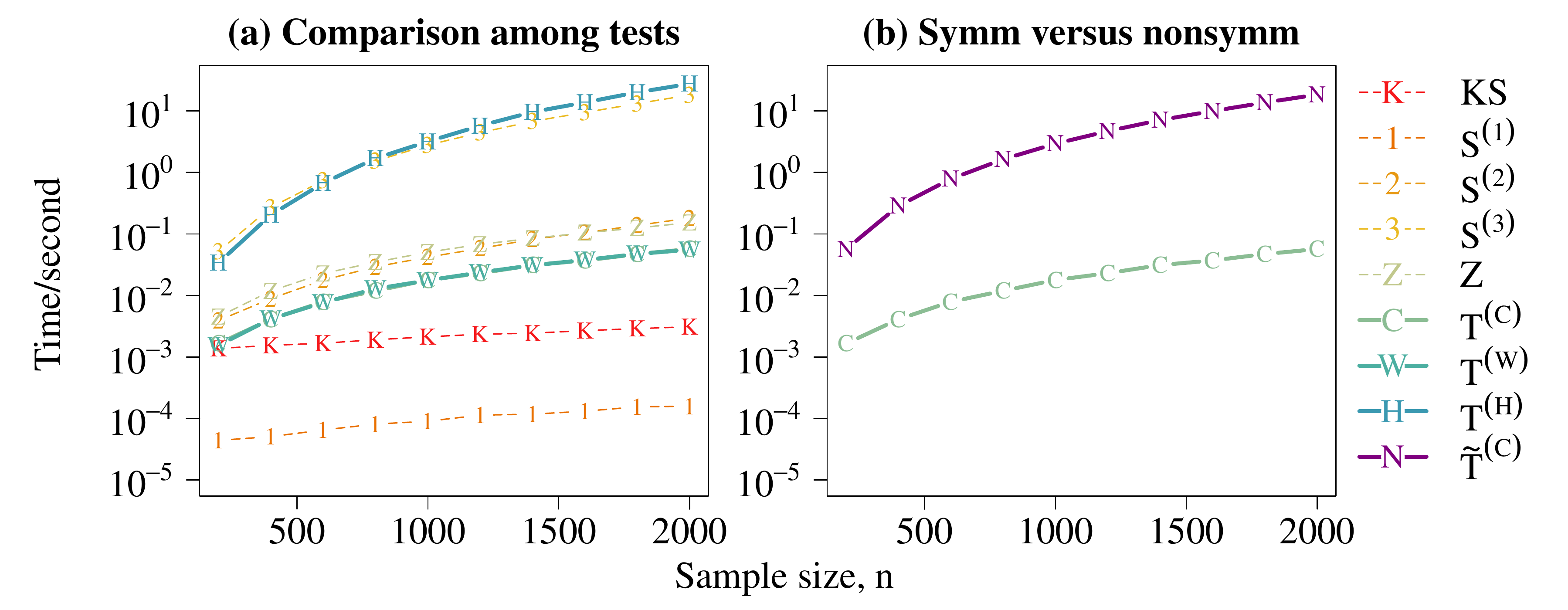}
\vspace{-0.4cm}
\caption{\footnotesize \label{fig: time complexity} 
Average computation times of (a) test statistics considered (including $\KS_n$, $\shao_n^{(1)}$, $\shao_n^{(2)}$, $\shao_n^{(3)}$, $\zhang$, $\T_n^{(\C)}$, $\T_n^{(\W)}$, $\T_n^{(\H)}$) and (b) proposed CUSUM test using symmetric windows (symm) $\T_n^{(\C)}$ versus nonsymmetric windows (nonsymm) $\tilde{\T}_n^{(\C)}$. 
All experiments are performed in R 4.1.0 on Intel Core i9-10900 CPU @ 2.8 GHz, and 64-bit operating system.
}
\end{figure}

\subsection{Finite-$n$ adjusted critical values}\label{sec: adjust critical values}
In the time series context, 
the accuracy of the invariance principle (i.e., Assumption \ref{as:invariance principle}) 
may deteriorate when the sample size is small \citep{kiefer2005new}.  
Therefore, the asymptotic theory in, e.g., Theorem \ref{thm: CUSUM test}, 
may not kick in. It may lead to severe size distortion. To mitigate this problem, finite-sample adjusted critical values can be used.
We propose 
to compute a critical value $c_{\alpha}(n,\rho)$ by matching the autocorrelation function (ACF) at lag one $\rho$ and $n$
for different specified level of significance $\alpha\in(0,1)$.  
The values of $c_{\alpha}(n,\rho)$ are tabulated in Section \ref{sec:finite_n_cVal} of the appendix
for $\alpha\in\{0.01, 0.05, 0.1\}$, $n\in\{100,200, \ldots, 1000, 2000, 3000, \ldots, 10000\}$, and $\rho\in\{0,\pm 0.1, \ldots, \pm, 0.9\}$.
The testing procedure is outlined as follows.

\begin{enumerate}[noitemsep]
	\item  Compute the sample lag-$1$ ACF $\widehat{\rho}$ of 
			$\{ X_{i + \lfloor n^{1/3}\rfloor} - X_i\}_{i =  1}^{n-\lfloor n^{1/3}\rfloor}$.
	\item Obtain the critical value $c_{\alpha}(n, \widehat{\rho})$.			
			Use interpolation if necessary.
	\item Reject the null if $\T_n^{(\aleph)} > c_{\alpha}(n, \widehat{\rho})$, where $\aleph \in \{\C, \W,\H, \G \}$. 
\end{enumerate}
The limiting critical value is approximated by using 
finite-$n$ critical value with $n=10000$. 
Since $\widehat{\rho}$ relies on 
the differenced data with a suitable lag, 
the change points (if any) have negligible effect on $\widehat{\rho}$.
The consistency of $\widehat{\rho}$ is developed on the following framework. 
Let $X_i = \mu_i + Z_i$, where $\mu_i$'s are deterministic and $Z_i$'s are zero-mean stationary noises. 
Define $Z_i = g(\ldots, \varepsilon_{i-1},\varepsilon_i)$, where $\varepsilon_i$'s are independent and identical ($\iid$) random variables and $g$ is some measurable function. Let $\varepsilon_i'$ be $\iid$ copy of $\varepsilon_i$ and $Z_i' = g(\ldots,\varepsilon_{-1},\varepsilon_0',\varepsilon_1,\ldots,\varepsilon_i)$. For $p>1$ and $\norm{Z_i}_p < \infty$, the physical dependence measure \citep{wu2011asymptotic} is defined as $\lambda_p(i) = \norm{Z_i - Z_i'}_p$, where $\norm{\cdot}_p = (\E|\cdot|^p)^{1/p}$. 
Theorem \ref{thm: rho consistency} below states that $\widehat{\rho}$
is a consistent estimator of $\rho := \gamma(1)/\gamma(0)$
even there exist a large number ($M$) of CPs,
where $\gamma(k) = \E(Z_1 Z_{1+k})$.

\begin{theorem}
\label{thm: rho consistency}
Assume $\E(Z_1^4) < \infty$ and $\sum_{i = 1}^{\infty} \lambda_4(i) < \infty$.
Define $|\Delta_{*}| = \sup_{1\leq j\leq M} |\Delta_j|$. 
Let $b = b_n$ be an $\mathbb{N}$-valued sequence. 
If $b/n + 1/b + (bM/n)\Delta_{*}^2 \rightarrow 0$ as 
$n \rightarrow \infty$, then  
$\widehat{\rho} \inP \rho$.
\end{theorem}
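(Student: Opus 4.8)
The plan is to difference away the piecewise-constant mean and then reduce $\widehat{\rho}$ to a ratio of sample autocovariances of a stationary noise sequence. Write the differenced data as $Y_i := X_{i+b}-X_i = \delta_i + W_i$ for $i=1,\ldots,n-b$, where $\delta_i := \mu_{i+b}-\mu_i$ is deterministic and $W_i := Z_{i+b}-Z_i$ is a zero-mean stationary sequence. A direct calculation gives the population autocovariances $\E(W_1^2)=2\gamma(0)-2\gamma(b)$ and $\E(W_1 W_2)=2\gamma(1)-\gamma(b-1)-\gamma(b+1)$. Since $\sum_i\lambda_4(i)<\infty$ implies $\sum_k|\gamma(k)|<\infty$ and hence $\gamma(k)\to 0$, while $1/b\to 0$ forces $b\to\infty$, these quantities tend to $2\gamma(0)$ and $2\gamma(1)$, respectively. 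It therefore suffices to show that the sample lag-$0$ and lag-$1$ autocovariances of $\{Y_i\}$ converge in probability to $2\gamma(0)$ and $2\gamma(1)$; the claim $\widehat{\rho}\inP\gamma(1)/\gamma(0)=\rho$ then follows from Slutsky's theorem and the continuous mapping theorem, using $\gamma(0)=\E(Z_1^2)>0$.

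First I would dispose of the deterministic part. Because $\mu_i$ has only $M$ jumps, $\delta_i\neq 0$ only when a change point lies in $(i,i+b]$, so there are at most $Mb$ nonzero terms, each a sum of jumps whose magnitude is controlled by $|\Delta_*|$ and the number of change points falling in the window. A counting argument over the overlapping windows then yields $(n-b)^{-1}\sum_i\delta_i^2 = O(bM\Delta_*^2/n)$, which vanishes precisely by the hypothesis $(bM/n)\Delta_*^2\to 0$. Consequently the sample mean $\overline{Y}$, the purely deterministic lag-$1$ term $\sum_i\delta_i\delta_{i+1}$, and the cross terms $\sum_i\delta_i W_{i+1}$ and $\sum_i\delta_{i+1} W_i$ are all negligible by Cauchy--Schwarz: the deterministic factor tends to $0$, and the noise factor is bounded in probability by the next step.

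The main work is the law of large numbers for the noise. Here I would exploit the physical dependence framework. Although $\{W_i\}=\{Z_{i+b}-Z_i\}$ is a triangular array (it changes with $n$ through $b=b_n$), its physical dependence measure satisfies $\lambda_4^{W}(i)\le \lambda_4(i+b)+\lambda_4(i)$, so $\sum_i\lambda_4^{W}(i)\le 2\sum_i\lambda_4(i)<\infty$ uniformly in $b$. The products $W_i^2$ and $W_iW_{i+1}$ then inherit summable $L^2$ physical dependence measures, bounded by a constant multiple of $\lambda_4^{W}(\cdot)$ via $\norm{W_1}_4<\infty$ (which holds because $\E(Z_1^4)<\infty$). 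The standard variance bound in this framework gives $\Var\{(n-b)^{-1}\sum_i W_i^2\}=O(1/n)$ and likewise for the lag-$1$ product, whence $L^2$ and in-probability convergence to the population values $2\gamma(0)-2\gamma(b)$ and $2\gamma(1)-\gamma(b-1)-\gamma(b+1)$ computed above.

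I expect the main obstacle to be handling this uniformity over the growing lag $b_n$, that is, the triangular-array nature of $\{W_i\}$. The decisive point is that the physical dependence bound on $\{W_i\}$ is independent of $b$, so the $O(1/n)$ variance control holds uniformly and the triangular-array law of large numbers goes through despite $b\to\infty$. Expanding $Y_iY_{i+j}=\delta_i\delta_{i+j}+\delta_iW_{i+j}+\delta_{i+j}W_i+W_iW_{i+j}$ for $j\in\{0,1\}$, combining the negligible deterministic and cross contributions with the two noise limits, subtracting $\overline{Y}^2\inP 0$, and applying Slutsky then completes the proof.
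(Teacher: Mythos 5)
Your proposal is correct and follows essentially the same route as the paper's proof: difference the data at lag $b$, split each sample autocovariance into a noise part, a deterministic part, and cross terms, kill the deterministic and cross contributions using the bound $O(bM\Delta_*^2/n)$ together with Cauchy--Schwarz/Markov, and establish a law of large numbers for the noise part within Wu's physical dependence framework. The only (minor) difference is in that last step: the paper expands $(Z_{t+b}-Z_t)(Z_{t+b+1}-Z_{t+1})$ into sample autocovariances of $Z$ at the fixed lag $1$ and the divergent lags $b\pm 1$ and invokes Theorems 7 and 8 of \cite{wu2011asymptotic}, whereas you treat the differenced noise $W_i=Z_{i+b}-Z_i$ directly as a triangular array whose physical dependence measure is summable uniformly in $b$ and apply the generic variance bound --- both arguments are valid and rest on the same machinery.
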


In particular, if $b=\lfloor n^{1/3} \rfloor$ and $|\Delta_{*}| <\infty$, then 
Theorem \ref{thm: rho consistency} guarantees that $\widehat{\rho}$ is consistent for $\rho$
provided that the actual number of CPs satisfies $M = o(n^{2/3})$. 
This testing procedure does not assume any parametric structure as we only intend to match the sample ACF at lag one. Although it is possible to further match ACF at a higher lag, the proposal is to balance the accuracy and the computational cost. Simulation result shows that the tests have accurate size under the above critical value adjustment procedure. We emphasize that this adjustment only affects the finite-sample performance because the critical value $\mathit{c}_{\alpha}(n, \rho)$ is a constant as a function of $\rho$ when $n\rightarrow\infty$.

\section{Simulation experiments}
\label{sec: simulation}
\subsection{Setting and overview}
Throughout Section \ref{sec: simulation}, the experiments are designed as follows.
The time series is generated from a signal-plus-noise model:
$X_i = \mu_i + Z_i$. 
The values of $\mu_i$'s will be specified in
Sections \ref{sec:sizePower} and \ref{sec:sim_allCPsEffect}. 
The zero-mean noises $Z_i$'s are simulated from a stationary
bilinear autoregressive (BAR) model: 
$Z_i =  (\varpi + \vartheta \varepsilon_i)Z_{i-1} + \varepsilon_i$ for $i=1, \ldots,n$, 
where $\varepsilon_i$'s are independent standard normal random variables, and
$\varpi, \vartheta \in\mathbb{R}$ such that $\varpi^2 + \vartheta^2 <1$.
Similar conclusions are obtained under other noise models, including 
autoregressive-moving-average model, 
threshold AR model, and 
absolute value nonlinear AR model.
Due to space constraint, their results are deferred to the appendix. 
The critical value is chosen according to the adjustment procedure described in Section \ref{sec: adjust critical values}.

\subsection{Size and power}\label{sec:sizePower}
In this subsection, we examine the size and power
of different CP tests when there exist various numbers of CPs. 
Following the suggestion of \cite{huang2015self} and \cite{zhang2018unsupervised}, 
we choose $\epsilon = 0.1$ in $\shao_n^{(2)}$, $\shao_n^{(3)}$, $\zhang_n$ and our proposals 
in order to have a fair comparison. 
Suppose that the CPs are evenly spaced and 
the mean change directions are alternating (increasing or decreasing):
\begin{equation}
\label{EQ: CP models, mean level}
	\mu_i = \Delta \sum_{j = 1}^M (-1)^{j+1} \mathbb{1}_{\{ i/n > j/(M+1) \}},
\end{equation} 
where $M \in \{1,\ldots, 6\}$ denotes the number of CPs,
and $\Delta\in\mathbb{R}$ controls the magnitude of the mean changes. 
Clearly, $\mu_i \equiv 0$ under $H_0$.

All tests are computed at nominal size $\alpha=5\%$.
The null rejection rates $\widehat{\alpha}$, 
for sample sizes $n \in\{ 200, 400\}$, are presented in Table \ref{tab: BAR11 size}. To summarize the result, we further report the sample root mean squared error (RMSE) of $\widehat{\alpha}$ 
over all cases of $\vartheta$ and $\varpi$ for each test and each $n$. 
The self-normalized tests generally have more accurate size 
than the non-self-normalized test $\KS_n$. The finding agrees with \cite{shao2015self}. 
The self-normalized approach is a special case of the fixed bandwidth approach 
with the largest possible bandwidth and thus achieves the smallest size distortion according to \cite{kiefer2002heteroskedasticity}. 
In comparison among the self-normalized tests, $\T_n^{(\W)}$ and $\T_n^{(\H)}$ control the size most accurately.
In particular, our proposed tests have the least severe under-size problem, 
which is observed in the existing self-normalized tests (i.e., $\shao_n^{(1)}$, $\shao_n^{(2)}$, $\shao_n^{(3)}$,
and $\zhang_n$) when $\varpi<0$. 
Moreover, \cite{shao2010testing}'s proposal $\shao_n^{(m)}$ suffers from 
an increasing size distortion as $m$ increases. 
This finding is in line with \cite{zhang2018unsupervised}.
The outperformance of $\T_n^{(\W)}$ and $\T_n^{(\H)}$ over $\T_n^{(\C)}$ 
can be attributed to the outlier robustness achieved by rank and order statistics. This 
advantage is particularly obvious in bilinear time series as this model 
is well known to produces sudden high amplitude oscillations to mimic structures in, e.g., 
explosion and earthquake data in seismology; 
see Section 5.2 of \cite{rao2012introduction}.
However, in the more standard ARMA models, 
$\T_n^{(\C)}$ performs as well as $\T_n^{(\W)}$ and $\T_n^{(\H)}$; see the appendix for the detailed 
results.

\begin{table}[t]
\setlength{\tabcolsep}{4pt}
\centering
\footnotesize
\caption{\footnotesize \label{tab: BAR11 size} Null rejection rates $\widehat{\alpha}$ at nominal size $\alpha=5\%$  under BAR model and mean function (\ref{EQ: CP models, mean level}). The sample RMSE of $\widehat{\alpha}$ over all cases of $\vartheta$ and $\varpi$ is computed for each test and each $n$. 
The values of $\widehat{\alpha}$ and RMSE are expressed in percentage. 
The smaller the RMSE, the better is the test in controlling the type-I error.
}
\begin{tabular}{ rr | r rrr r rrr | r rrr r rrr }
\toprule
\multicolumn{1}{c}{ } & \multicolumn{1}{c}{ } &  \multicolumn{8}{c}{$n=200$} & \multicolumn{8}{c}{$n=400$}\\ 
\cmidrule(r){3-10}\cmidrule(r){11-18}
$\vartheta$ &$\varpi$& $\KS_n$ & $\shao_n^{(1)}$ & $\shao_n^{(2)}$ & $\shao_n^{(3)}$ & $\zhang_n$ & $\T_n^{(\C)}$ & $\T_n^{(\W)}$ & $\T_n^{(\H)}$ & $\KS_n$ &  $\shao_n^{(1)}$ & $\shao_n^{(2)}$ & $\shao_n^{(3)}$ & $\zhang_n$ & $\T_n^{(\C)}$ & $\T_n^{(\W)}$ & $\T_n^{(\H)}$  \\ 
\cmidrule(r){1-18}
0.8 & 0.5 & 14.4 & 7.5 & 9.6 & 5.3 & 7.6 & 1.1 & 4.0 & 3.4 & 11.3 & 5.3 & 5.3 & 3.1 & 5.4 & 0.8 & 3.8 & 3.3\\
 & 0.3 & 13.5 & 5.6 & 5.8 & 3.1 & 5.1 & 1.2 & 4.5 & 3.4 & 10.7 & 4.9 & 4.5 & 2.7 & 4.5 & 1.3 & 4.0 & 3.4\\
 & 0 & 15.4 & 4.7 & 2.8 & 1.3 & 2.4 & 2.3 & 5.6 & 4.6 & 10.6 & 4.8 & 3.7 & 1.9 & 3.0 & 2.1 & 4.1 & 3.7\\
 & $-0.3$ & 23.9 & 4.6 & 1.7 & 0.4 & 1.5 & 2.1 & 7.8 & 6.3 & 15.6 & 3.9 & 2.1 & 0.8 & 1.9 & 2.1 & 5.6 & 5.0\\
 & $-0.5$ & 31.2 & 3.9 & 1.6 & 0.4 & 1.2 & 2.0 & 7.7 & 7.1 & 25.6 & 4.5 & 1.4 & 0.3 & 1.4 & 1.4 & 6.4 & 6.3\\
 \hline
0.5 & 0.8 & 31.8 & 11.9 & 26.1 & 27.2 & 27.1 & 3.6 & 11.0 & 9.0 & 31.8 & 8.0 & 14.9 & 12.8 & 14.3 & 2.2 & 5.7 & 3.6\\
 & 0.5 & 13.7 & 6.5 & 9.5 & 8.8 & 8.1 & 2.1 & 4.5 & 3.9 & 10.0 & 6.0 & 6.6 & 6.1 & 6.4 & 2.1 & 3.9 & 3.5\\
 & 0.3 & 11.0 & 5.1 & 5.5 & 4.9 & 5.6 & 3.0 & 5.0 & 4.6 & 7.6 & 4.8 & 5.7 & 4.1 & 5.2 & 2.7 & 4.5 & 4.1\\
 & 0 & 9.4 & 3.9 & 3.8 & 2.0 & 2.7 & 3.8 & 6.5 & 6.1 & 6.6 & 4.8 & 4.1 & 3.1 & 3.8 & 3.2 & 4.3 & 4.4\\
 & $-0.3$ & 17.9 & 3.5 & 1.7 & 0.4 & 1.5 & 4.3 & 6.3 & 5.9 & 10.0 & 4.2 & 3.2 & 1.8 & 2.5 & 3.1 & 5.3 & 4.8\\
 & $-0.5$ & 29.8 & 3.1 & 1.0 & 0.0 & 0.8 & 4.1 & 7.2 & 6.0 & 19.8 & 3.9 & 2.7 & 1.5 & 2.0 & 3.3 & 6.0 & 6.0\\
 & $-0.8$ & 44.1 & 2.6 & 0.2 & 0.0 & 0.1 & 4.3 & 8.7 & 7.3 & 37.3 & 2.9 & 0.9 & 0.1 & 0.4 & 3.2 & 6.9 & 6.5\\
\hline
$-0.5$ & 0.8 & 29.8 & 10.1 & 24.0 & 27.7 & 24.8 & 2.8 & 11.4 & 8.5 & 29.3 & 8.2 & 11.9 & 9.6 & 12.4 & 1.6 & 4.9 & 3.6\\
 & 0.5 & 12.2 & 5.6 & 8.6 & 7.1 & 9.0 & 1.7 & 3.4 & 2.6 & 10.0 & 5.1 & 5.4 & 4.4 & 5.9 & 2.3 & 4.0 & 3.4\\
 & 0.3 & 9.3 & 4.9 & 6.3 & 3.6 & 5.8 & 2.4 & 3.7 & 3.4 & 7.4 & 4.3 & 5.3 & 3.7 & 4.6 & 3.2 & 4.1 & 3.6\\
 & 0 & 9.1 & 5.1 & 3.4 & 2.1 & 3.0 & 3.6 & 4.7 & 4.3 & 6.4 & 4.8 & 4.4 & 3.2 & 4.0 & 3.3 & 5.8 & 5.3\\
 & $-0.3$ & 15.8 & 3.9 & 1.6 & 1.0 & 1.8 & 3.1 & 6.1 & 5.2 & 10.1 & 4.6 & 3.7 & 2.0 & 3.0 & 3.8 & 5.3 & 5.0\\
 & $-0.5$ & 27.9 & 3.6 & 1.0 & 0.6 & 1.1 & 4.3 & 6.2 & 6.0 & 19.5 & 4.4 & 3.1 & 1.6 & 3.0 & 4.3 & 5.4 & 5.1\\
 & $-0.8$ & 44.0 & 1.8 & 0.7 & 0.0 & 0.3 & 4.7 & 6.3 & 6.3 & 38.7 & 3.5 & 1.0 & 0.2 & 0.9 & 3.1 & 6.1 & 5.6\\
 \hline
$-0.8$ & 0.5 & 13.6 & 7.4 & 7.4 & 4.5 & 6.3 & 0.8 & 3.1 & 2.5 & 11.0 & 6.2 & 4.5 & 2.8 & 3.9 & 1.1 & 3.2 & 2.5\\
 & 0.3 & 13.2 & 5.9 & 4.6 & 2.3 & 3.6 & 1.0 & 3.2 & 2.7 & 9.1 & 5.7 & 3.3 & 2.1 & 2.7 & 1.9 & 4.2 & 3.4\\
 & 0 & 14.4 & 5.1 & 2.6 & 1.1 & 2.4 & 1.3 & 4.1 & 3.2 & 11.2 & 4.5 & 3.6 & 1.6 & 3.1 & 2.4 & 5.8 & 5.1\\
 & $-0.3$ & 21.1 & 5.4 & 1.9 & 0.5 & 1.3 & 1.6 & 6.3 & 4.7 & 17.9 & 4.9 & 2.4 & 1.0 & 1.8 & 2.7 & 6.8 & 5.5\\
 & $-0.5$ & 31.4 & 5.2 & 1.3 & 0.2 & 1.3 & 1.9 & 8.3 & 7.0 & 28.2 & 5.6 & 1.9 & 0.5 & 1.4 & 2.1 & 6.0 & 5.2\\
  \hline
&RMSE & 18.9 & 2.2 & 6.5 & 7.4 & 6.8 & 2.7 & 2.5 & 1.8 & 15.1 & 1.2 & 3.2 & 3.6 & 3.3 & 2.7 & 1.0 & 1.2\\
\bottomrule
\end{tabular} 
\end{table}

\begin{figure}[t]
\centering
\includegraphics[width=\linewidth]{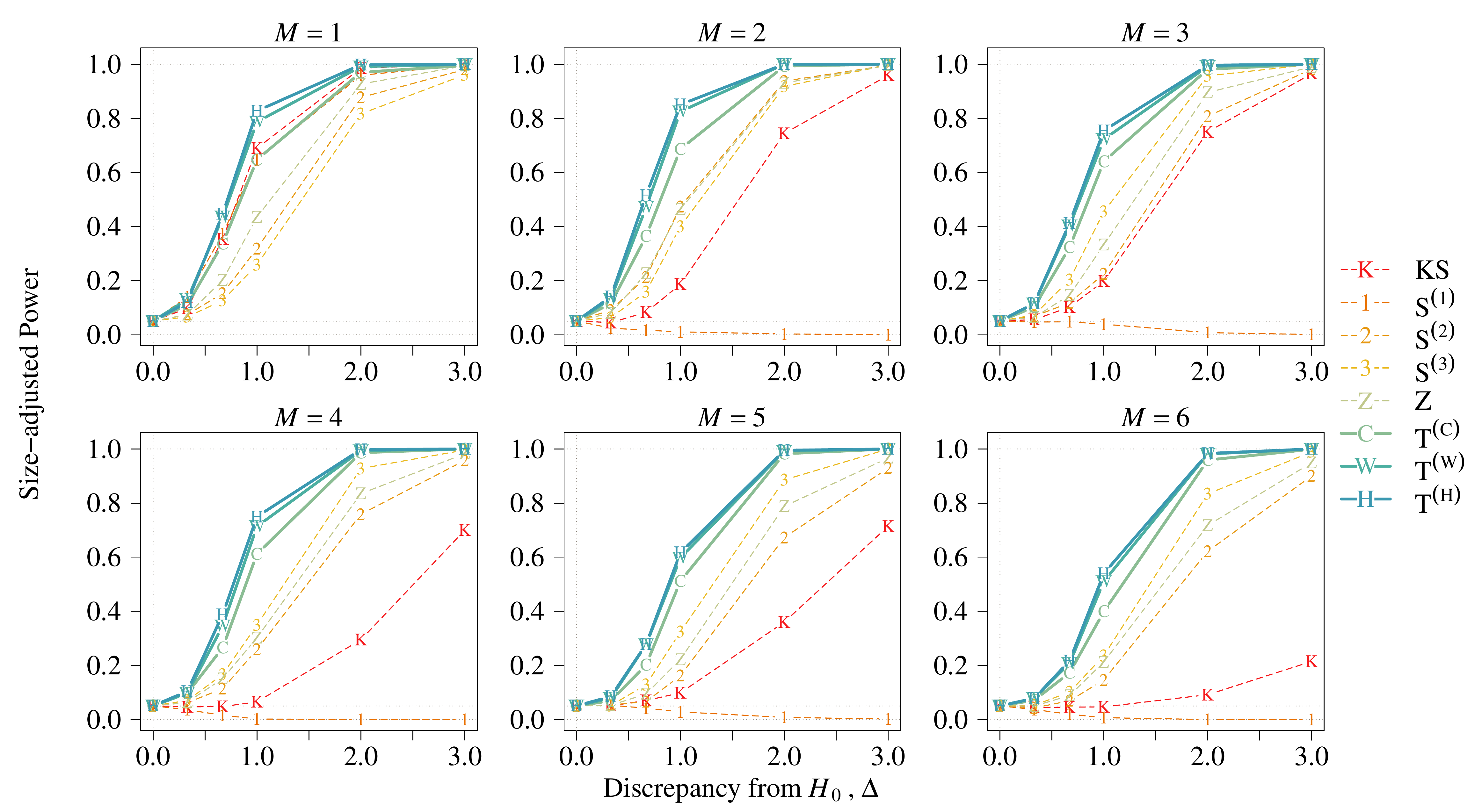}
\vspace{-0.3cm}
\caption{\label{fig: size adjusted power ar0.5g0.5} 
\footnotesize Size-adjusted power 
under BAR model with $\varpi = \vartheta = 0.5$, $n=200$ and mean function (\ref{EQ: CP models, mean level}).}
\end{figure}

The power curve against $\Delta$ is computed at 5\% nominal size
by using $2^{10}$ replications with $n = 200$. Figure \ref{fig: size adjusted power ar0.5g0.5} 
shows the size-adjusted power when $\varpi = \vartheta=0.5$.
The (unadjusted) power are presented in appendix.
We also repeat the experiments with different values of $\varpi$ and $\vartheta$. 
The results are similar, thus, they are deferred to the appendix.
In general, $\T_n^{(\W)}$ and $\T_n^{(\H)}$ have the highest power for all the cases. Their outperformance in power over $\T_n^{(\C)}$ is again attributed to the robustness against outliers. If only CUSUM-type tests are considered, the non-self-normalized test $\KS_n$ performs better than self-normalized tests when $M=1$,
however, it seriously under-performs when $M>1$ as $\KS_n$ is 
tailor-made for the one-CP alternative (\ref{eqt:H1_oneCP}). 
Although $\shao_n^{(1)}$ has the highest power among the self-normalized CUSUM tests when $M=1$, 
it suffers from the notorious non-monotonic power issue (with respect to $\Delta$) when $M>1$.
It is because its SN is not robust to the multiple CPs. 
Thus,
$\shao_n^{(1)}$ is no longer a consistent test when $M>1$.

It is surprising that our proposal $\T_n^{(\C)}$ outperforms 
\cite{shao2010testing}'s test $\shao^{(m)}_n$ even when 
the number of CPs is well-specified, i.e., $M=m$.  
The message behind this observation is that knowing the true number of CP does not give advantage because one can identify a structural change 
by observing the data around it without looking at the whole segmented series. 
It, indeed, partially motivates our localized framework. 
Moreover, 
as we discussed in Section \ref{sub-sec: test comparison discussion},
$\shao_n^{(m)}$ defines the local windows restrictively. 
It accumulates errors if the boundaries of the local windows are off from the actual CPs. 
It is also interesting to see that, compared with $\shao_n^{(1)}$, 
the tests $\shao_n^{(2)}$ and $\shao_n^{(3)}$ 
are less sensitive to misspecification of $M$
although they are still less powerful than our proposals.

\subsection{Ability to capture effects of all changes}\label{sec:sim_allCPsEffect}
This subsection investigates why our proposed 
tests have a higher power in a broader CP structures than the existing unsupervised multiple-CP test $\zhang_n$. Consider the 3-CP setting with the magnitudes of changes at the first and the last CPs are only half of the second CP. Specifically, we consider two mean functions: 

\begin{itemize}[noitemsep]
	\item (Case 1) equal magnitude: $\mu_i = \Delta \left\{ \mathbb{1}_{\{i > \lfloor  n/4 \rfloor\}} - \mathbb{1}_{\{i > \lfloor  2n/4 \rfloor\}} + \mathbb{1}_{\{ i > \lfloor  3n/4 \rfloor\}} \right \}$; and 
	\item (Case 2) unequal magnitude: $\mu_i = \Delta \left\{ \frac{1}{2}\cdot \mathbb{1}_{\{i > \lfloor  n/4 \rfloor\}} - \mathbb{1}_{\{i >  \lfloor  2n/4  \rfloor\}} + \frac{1}{2}\cdot \mathbb{1}_{\{i > \lfloor  3n/4  \rfloor\}} \right \}.
$
\end{itemize}

Figure \ref{fig: size adjusted power ar0.5g0.5, S3CP} shows the result when $\varpi=\vartheta=0.5$. The result under other time series models is provided in the appendix. Compared to Case 1, $\zhang_n$ and $\shao_n^{(2)}$ lose roughly half of the power in Case 2. In contrast, the power of our tests only reduces by $1/3$ when $\Delta \leq 1$ while remains about the same when $\Delta \geq 2$. Therefore, our tests are more powerful and are more robust to mean change structures than $\zhang_n$. It is because our proposals are able to capture all change structures whereas $\zhang_n$ only takes the first and the last CPs into account; see (\ref{eqt:zhang}).

\begin{figure}[t]
\centering
\includegraphics[width=0.7\linewidth]{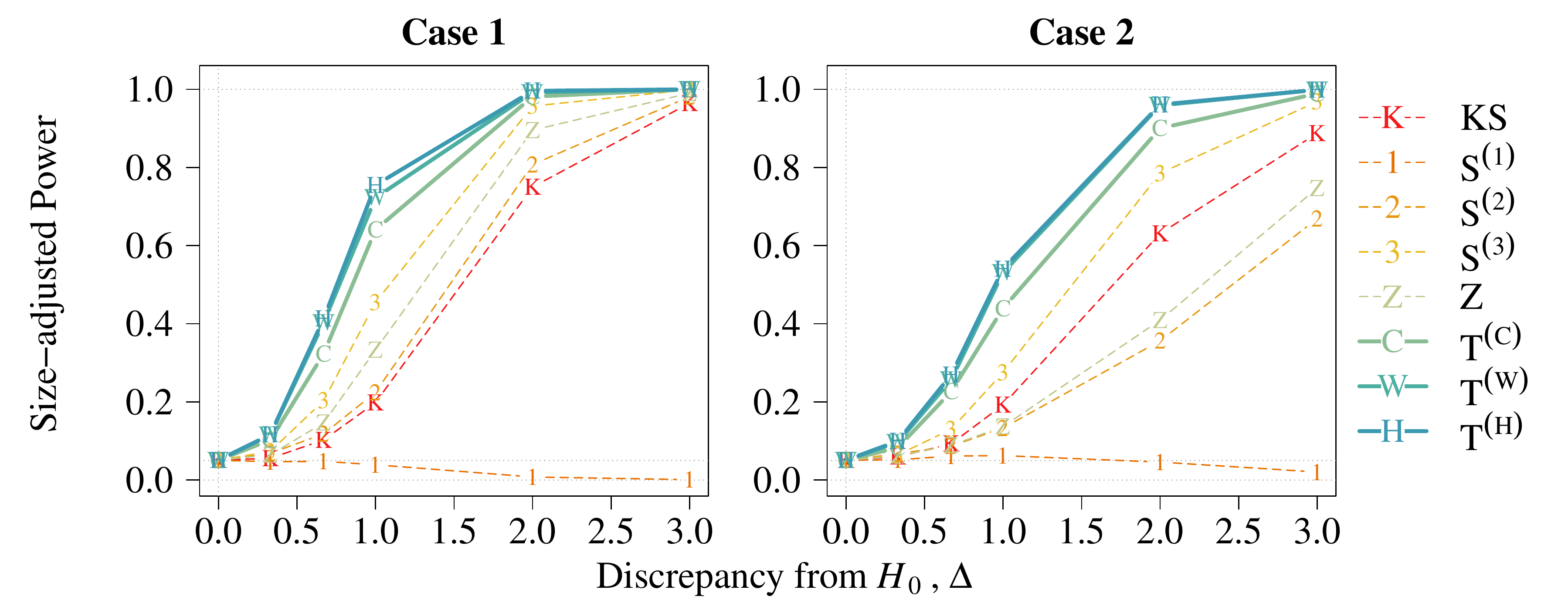}
\vspace{-0.3cm}
\caption{\footnotesize \label{fig: size adjusted power ar0.5g0.5, S3CP} Size-adjusted power under BAR model with $\varpi = \vartheta = 0.5$, $n=200$ and mean functions in Cases 1 and 2.}
\end{figure}

\subsection{Tradeoff of using symmetric windows}
\label{sec: justify symm windows}
To balance computational cost and power, 
we propose to use symmetric windows when constructing the LSN statistic; see Remark \ref{rem:symmetricWindow}. 
Intuitively, changes can still be detected by the contrast between samples of the same size.
In this subsection, we study the power loss of $\T_n^{(\C)}$ due to 
the use of symmetric windows.
We consider three-CP alternative with changes even and unevenly spaced in order to compare $\T_n^{(\C)}$ and $\tilde{\T}_n^{(\C)}$. Specifically, let the mean function with changes evenly spaced be Case 1 defined in Section \ref{sec:sim_allCPsEffect}, while the unevenly spaced CP model is defined as

\begin{itemize}[noitemsep]
	\item (Case 3) unevenly spaced CPs: $\mu_i = \Delta \left\{ \mathbb{1}_{\{i > \lfloor  0.15n \rfloor\}} - \mathbb{1}_{\{i > \lfloor  0.3n \rfloor\}} + \mathbb{1}_{\{ i >  \lfloor  0.85n \rfloor\}} \right \}$.
\end{itemize}

Figure \ref{fig: compare nonsym window, ar0.5g0.5} 
presents the size-adjusted power of $\T_n^{(\C)}$ and $\tilde{\T}_n^{(\C)}$
when $\varpi=\vartheta=0.5$. More simulation result is provided in the appendix. In both cases, $\T_n^{(\C)}$ performs similar to $\tilde{\T}_n^{(\C)}$. This indicates that using nonsymmetric windows does not bring significant advantage in power while it drastically increases computational burden; see Section \ref{sub-sec: test comparison discussion}. 
Therefore, we recommend symmetric windows to balance the power and the computational cost.
\begin{figure}[t]
\centering
\includegraphics[width=0.7\linewidth]{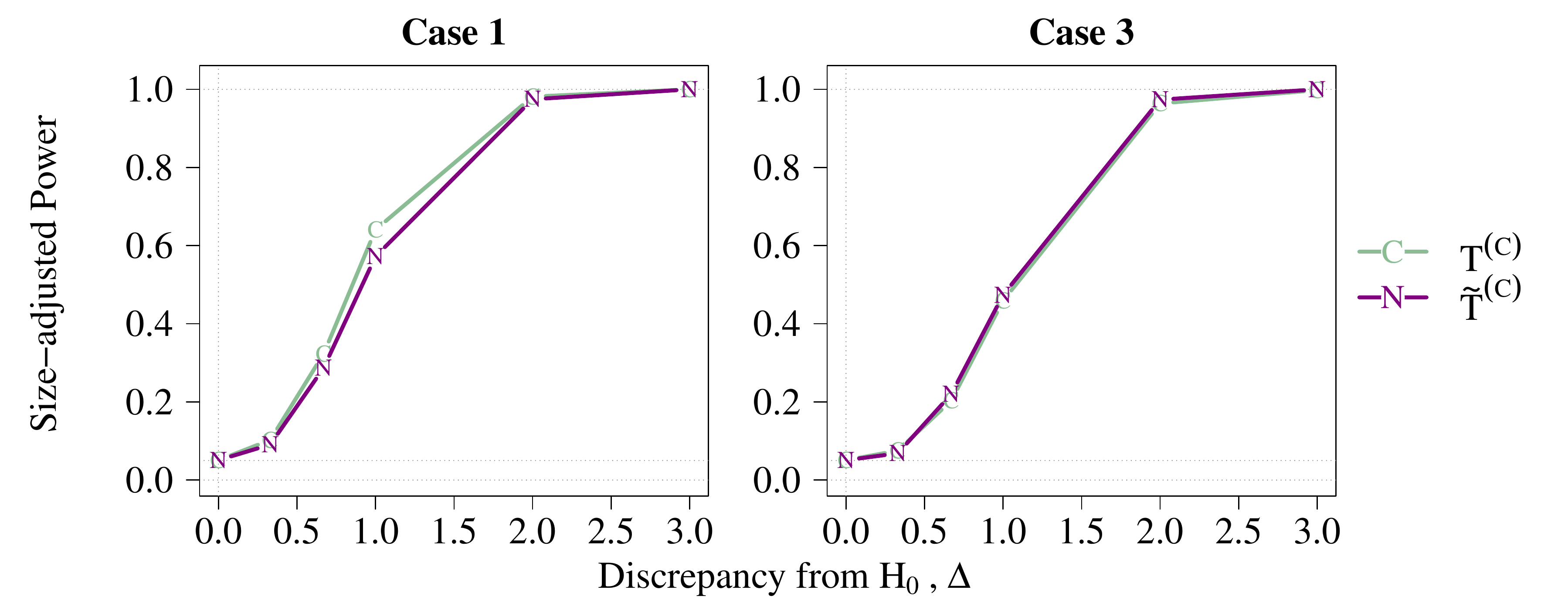}
\vspace{-0.3cm}
\caption{\footnotesize \label{fig: compare nonsym window, ar0.5g0.5} Size-adjusted power under BAR model with $\varpi = \vartheta = 0.5$, $n=200$ and mean functions in
Cases 1 and 3.}
\end{figure}

\section{Real data analysis}
\label{sec: real data analysis}
\subsection{NASDAQ call option volume}
\label{sec: nasdaq call vol}
We perform CP analysis for the Bloomberg's NASDAQ call option volume index (Bloomberg code: OPCVNQC index) from 24 March 2017 to 21 March 2022 ($n=1259$). The data are retrieved from Bloomberg\footnote{https://www.bloomberg.com/professional/product/market-data/}. Researchers and practitioners are interested in using the option volume to analyze the stock market. 
\cite{pan2006information} investigated the market price adjustment and identified strong evidence of informed trading in the option market. \cite{johnson2012option} found that option-to-stock-volume (O/S) ratio is significant in predicting the magnitude and sign of abnormal return resulting from earning surprises. 
It is consistent to the finding that the option volume reflects private information. Further divided the option volume into categories, \cite{ge2016does} founded that higher O/S ratio in general predicts lower stock return since more components of option volume negatively predict return.

Since the option volume is a significant factor of the market analysis, CP analysis on the constant mean option volume is thus of direct interest.
The test statistics and $p$-values are presented in Table \ref{tab: Nasdaq call  mean test result}. Since the result of our proposed tests reveal strong evidence against the hypothesis that no CPs exist, we further conduct mean change point location estimation, defined in \eqref{EQ: cp set pt estimate}, of which the estimates are indicated by red vertical lines in Figure \ref{fig:  Nasdaq call mean CP location}. Using the CUSUM process, the estimated CP dates are 27 November 2017, 30 January 2020, and 2 November 2020. The test using Wilcoxon statistics lead to similar results, while the test using Hodges--Lehmann statistics does not detect the first CP. The most dominant CP is the second CP at the end of January 2020. Studies have found a surge in trading activities worldwide and observed the trading from home effect \citep{chiah2020trading,ortmann2020covid}. \cite{chiah2020trading} explained the sudden growth of trading activities could be attributed to some retail investors regarded the stock market as gambling substitute. The jump in trading volume may also result from changing growth expectation and reaction to the post COVID-19 stimulus policies from the investors \citep{gormsen2020coronavirus}. 

\begin{table}[t]
\setlength{\tabcolsep}{3pt}
\centering
\footnotesize
\caption{\footnotesize \label{tab: Nasdaq call mean test result}Multiple-CP test on the mean of the NASDAQ Call Option Volume.}
\begin{tabular}{cc ccc c ccc}
\toprule
& $\KS_n$ & $\shao_n^{(1)}$ & $\shao_n^{(2)}$ & $\shao_n^{(3)}$ & $\zhang_n$ & $\T_n^{(\C)}$ & $\T_n^{(\W)}$ & $\T_n^{(\H)}$\\
\cmidrule(r){1-9}
Test statistic & 25.24 & 460.06 & 467.18 & 25.03 & 910.33 & 44.94 & 52.67 &31.46\\
$p$-value & $\leq 10^{-3}$ & $\leq 10^{-3}$ & $\leq 10^{-3}$ & $\leq 10^{-3}$& $\leq 10^{-3}$ & $\leq 10^{-3}$ & $\leq 10^{-3}$ & $\leq 10^{-3}$\\
\bottomrule
\end{tabular}  
\end{table}

\begin{figure}[t]
\centering
\includegraphics[width=1\linewidth]{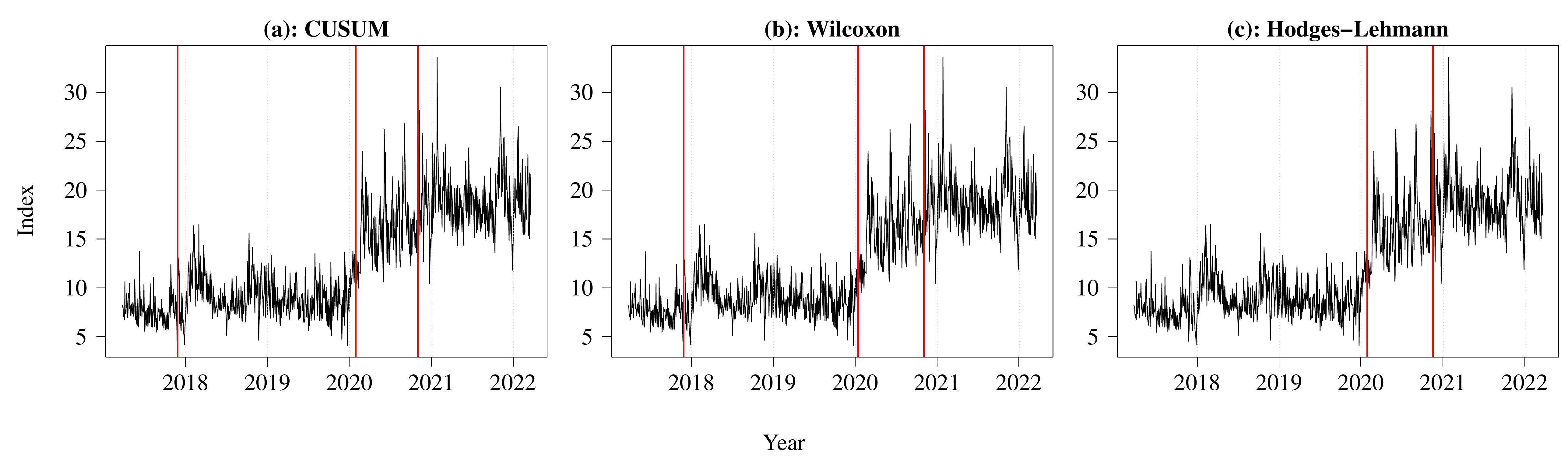}
\vspace{-0.3cm}
\caption{\footnotesize \label{fig: Nasdaq call mean CP location} 
Estimated change point locations are indicated by red vertical lines
by using (a) CUSUM process, (b) Wilcoxon statistics and (c) Hodges--Lehmann statistics. Data retrieved from Bloomberg}
\end{figure}
 
\subsection{Shanghai-Hong Kong Stock Connect turnover}
\label{sec: SHHK stock connect}
We apply our proposed method to perform CP analysis for the Shanghai-Hong Kong Stock Connect Southbound Turnover index (Bloomberg code: AHXHT index) from 23 March 2017 to 22 March 2022 ($n=1232$). The turnover index is based on the Hang Seng Stock Connect China AH (H) index which composites the most liquid shares eligible for the trading mechanism and listed in Hong Kong. The data are also retrieved from Bloomberg. 
The Stock Connect is a channel that allows both Mainland China and Hong Kong investors to access other stock markets mutually. The southbound is the direction for Shanghai investors to invest in Hong Kong stock market. Studies have shown that the Stock Connect improves mutual markets liquidity and capital market liberalization \citep{bai2017shanghai,huo2017return,xu2020stock}. Recently, the effect of the Stock Connect on the market risk and volatility has caught attention. \cite{ruan2018financial} found an increasing cross correlation between Shanghai and Hong Kong stock markets, suggesting higher risks after the implementation of the Stock Connect. \cite{huo2017return} also studied that the cointegration between the stock markets and concluded that volatility spillover effect is strengthened. The result from \cite{chan2018stock} further supported that the Stock Connect turnover is significant to the market volatility. As an indicator of foreign investment, the Stock Connect turnover is important for market volatility modeling. Therefore, we study whether CP exists in the Stock Connect Southbound daily turnover.

\begin{figure}[t]
\centering
\includegraphics[width=0.9\linewidth]{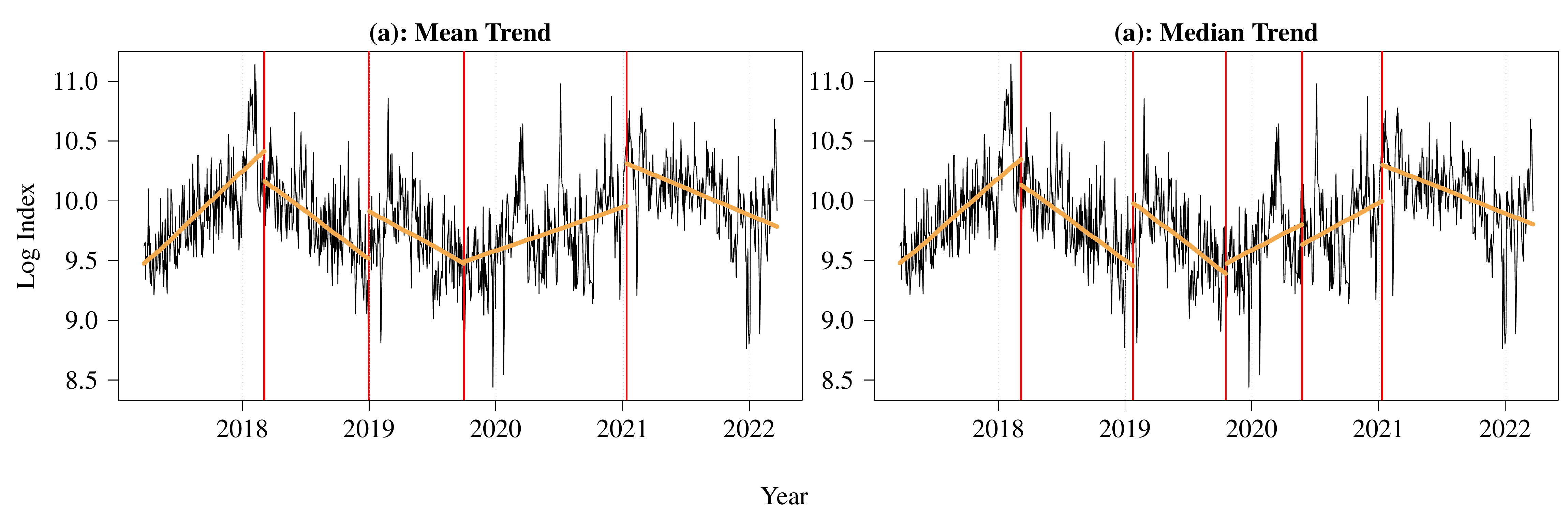}
\vspace{-0.3cm}
\caption{\footnotesize \label{fig: Stock connect trend CP location} 
Estimated (a) mean trend and (b) median trend CPs are indicated by red vertical lines. The orange lines indicate the fitted regression lines within each region separated by the estimated CPs. Data are retrieved from Bloomberg.}
\end{figure}

We investigate whether changes in trend exist in the log daily turnovers. 
Under our framework, it suffices to define a global change detecting process. 
It is, therefore, straightforward for practitioners. 
We consider the global change detecting process defined 
in \eqref{EQ: global parameter process} with 
two possible choices of $\hat{\theta}_{1:k}$ and $\hat{\theta}_{k+1:n}$
specified below.
\begin{enumerate}
    \item (Mean trend model) For each fixed $1\leq k\leq n$, 
    let $\E X_i = \left(\alpha_k^- + \beta_{k}^-\frac{i}{n}\right)\mathbb{1}_{\{i \leq k\}}+\left( \alpha_k^+ + \beta_{k}^+\frac{i}{n} \right)\mathbb{1}_{\{i > k\}} $
    for $i=1, \ldots, n$, 
    where $\alpha_k^-, \beta_k^-\in\mathbb{R}$ 
    are the intercept and slope of the mean trend at or before time $k$,  respectively, 
    whereas $\alpha_k^+, \beta_k^+\in\mathbb{R}$ are corresponding values after time $k$.
    Then $\hat{\theta}_{1:k}$ and $\hat{\theta}_{k+1:n}$ are defined as the ordinary least-squares estimators of $\beta_k^-$ and $\beta_k^+$, respectively. 
    \item (Median trend model) For each fixed $1\leq k\leq n$, let $\median(X_i) = \left(\alpha_k^- + \beta_{k}^-\frac{i}{n}\right)\mathbb{1}_{\{i \leq k\}}+\left( \alpha_k^+ + \beta_{k}^+\frac{i}{n} \right)\mathbb{1}_{\{i > k\}} $ for $i=1, \ldots, n$, where $\alpha_k^-, \beta_k^-, \alpha_k^+,\beta_k^+ \in\mathbb{R}$ are similarly interpreted as in the mean trend model. 
    Then $\hat{\theta}_{1:k}$ and $\hat{\theta}_{k+1:n}$ are  estimators of $\beta_k^-$ and $\beta_k^+$, respectively;
    see the modified Barrodale and Roberts algorithm \citep{koenker1987algorithm}. The estimators are computed by the R package \texttt{"quantreg"}.
    
\end{enumerate}
The resulting test statistics are 47.46 and 40.64. Both $p$-values are less than 1\%. The trend change point location estimates
are indicated by red vertical lines in Figure \ref{fig: Stock connect trend CP location}. 
The mean and median trend analysis agree on four estimated CPs, which are estimated on 7 March 2018, 24 January 2019, 2 October 2019 and 12 January 2021 under the median trend analysis. In same time, it further detects one CP on 25 May 2020. The first CP is likely to be the China-United States trade war. After the trade war began in January 2018, the Stock Connect turnover became downtrend. Until the COVID-19 events, an uptrend is detected after the end of 2019. Similar to the finding in the NASDAQ call option volume, the trading from home effect is observed. The uptrend stopped after the beginning of 2021.

\section{Conclusion}
\label{sec:conc}
Our proposed method improves existing CP tests and brings several advantages: 
(i) it has high power and controls size well, 
(ii) neither specification of number of CPs nor consistent estimation of nuisance parameters is required, 
(iii) change point location estimate can be naturally produced, 
(iv) general change detecting statistics, e.g., rank and order statistics, are allowed 
to enhance robustness
(v) it is capable to test change in general parameter of interest other than mean; and 
(vi) both short-range and long-range dependence are supported. 
Table \ref{tab: tests comparison} summarizes the properties of the proposed tests.  
Besides, our proposal is driven by intuitive principles, so that as long as a one-change detecting statistic is 
provided, our proposal can generalize it to a multiple-change detecting statistic. We anticipate that our framework can be applied to non-time-series data, e.g., spatial data and 
spatial–temporal data, in future work.

\appendix 
\section{Proofs of theorems}
\begin{proof}[Proof of Theorems \ref{thm: CUSUM test} and \ref{thm:localH1}]
(i) Under $H_0$
and by continuous mapping theorem, Assumption \ref{as:invariance principle} implies that 
$\{\C_n(\lfloor nt \rfloor) : t\in[0,1]\} \Rightarrow \{\sigma (\BM(t)- t\BM(1)): t\in [0,1]\}$.
Note that $\T_n^{(\C)}$ is
a composite function of $\C_n(\cdot)$ 
via (\ref{EQ local CUSUM}), (\ref{EQ: CUSUM self normalizer}), (\ref{EQ: LSN stat CUSUM}), (\ref{EQ: CUSUM test score function}), and (\ref{EQ: CUSUM ts}), 
each of which is a continuous and measurable map. 
By continuous mapping theorem, 
we obtain $\T_n^{(\C)}\inD \mathbb{T}$ in 
(\ref{EQ: CUSUM ts limiting dist}).
The limiting distribution $\mathbb{T}$ 
is well-defined
because 
$\V^{(\C)}_n(k \mid k-d,k+1+d)$ converges to a non-negative and non-degenerate distribution for any 
$\lfloor n\epsilon \rfloor \leq d\leq n$ and 
$\lfloor \epsilon n \rfloor + 1 \leq k \leq n-\lfloor \epsilon n \rfloor - 1$, provided that $\epsilon>0$.

(ii) 
For consistency, we consider 
the $j$th relative CP time $\pi_{j}$ such that 
the corresponding change magnitude satisfies 
$|\Delta_j| \asymp n^{\kappa-1}$, where $0.5 < \kappa \leq 1$. 
Denote $\Delta_* = \Delta_j$ and $k_* = \lfloor n\pi_{j} \rfloor$.
Under the assumption, there exist $c \in \mathbb{R}^+$ and $3/2 - \kappa \leq \Upsilon < 1$ such that 
$\epsilon + \lfloor cn^{\Upsilon}\rfloor/n < \min(\pi_{j} - \pi_{j-1},\pi_{j+1} - \pi_{j})$ is satisfied for a large enough $n$. Therefore, there is only one CP in the interval $k_* \pm (\lfloor \epsilon n \rfloor + \lfloor cn^{\Upsilon}\rfloor)$. 
It suffices to consider $d = \lfloor \epsilon n \rfloor$. For all $k$ such that $|k-k_*| \leq \lfloor bn^{\Upsilon}\rfloor$ where $0 < b < c$, we can decompose
\begin{align}
    \L_n^{(\C)}(k \mid k - d, k+d+1)^2 &= \frac{d+1}{8}\left(\bar{X}_{(k-d):k} - \bar{X}_{(k+1):(k+d+1)} \right)^2\nonumber \\
    &= \frac{d+1}{8}\left(\left\{\bar{Z}_{(k-d):k} - \bar{Z}_{(k+1):(k+d+1)} \right\}- \left\{ 1-\frac{|k_* - k|}{d+1}\right\}\Delta_* \right)^2\nonumber \\
    &= \frac{1}{8}\left[\left(\sqrt{d+1}\left\{\bar{Z}_{(k-d):k} - \bar{Z}_{(k+1):(k+d+1)} \right\}\right)-\left\{ \sqrt{d+1}-\frac{\lfloor bn^{\Upsilon}\rfloor}{\sqrt{d+1}}\right\}\Delta_* \right]^2 \nonumber \\
    &\geq \left\{ O_p(1) + \Xi_{1,n} \right\}^2,\label{eqt:LnC_order}
\end{align}
where $|\Xi_{1,n}| \asymp n^{\kappa-1/2}$.
Next, for the SN $\V_n^{(\C)}(k \mid k - d, k+d+1)$, 
we consider two cases: $k \geq k_*$ and $k < k_*$. 
If $k \geq k_*$, there is only one CP in the interval $[k-d,k]$.
Following similar calculations as in \eqref{eqt:LnC_order},
we know that $\L_n^{(\C)}(j\mid k-d,k)^2 \leq \L_n^{(\C)}(k_* \mid k-d, k)$
for all $j = k-d,\ldots,k$, provided that $n$ is large enough.
Moreover, since there is no CP in the interval $[k+1,k+1+d]$,
$\L_n^{(\C)}(j \mid k+1,k+1+d)^2 = O_p(1)$ for all $j=k+1, \ldots, k+1+d$.
Therefore,
\begin{align*}
    \V_n^{(\C)}(k \mid k - d, k+d+1) &= \frac{1}{4(d+1)}\left\{\sum_{j = k-d}^k \L_n^{(\C)}(j\mid k-d,k)^2 + \sum_{j = k+1}^{k+1+d} \L_n^{(\C)}(j\mid k+1,k+1+d)^2  \right\}\\
    &\leq \frac{1}{4}\L_n^{(\C)}(k_* \mid k-d, k)^2 + O_p(1)\\
    &= \frac{(d+1-\lfloor bn^{\Upsilon} \rfloor)^2(\lfloor bn^{\Upsilon} \rfloor)^2}{2(d+1)^3}\left\{\bar{Z}_{(k-d):k_*} - \bar{Z}_{k_*:k} - \Delta_* \right\}^2 + O_p(1)\\
    &\leq \left\{ O_p(1) + \Xi_{2,n} \right\}^2 ,
\end{align*}
where $|\Xi_{2,n}| \asymp n^{\Upsilon+\kappa -3/2}$.
For the case when $k < k_*$, the analysis is similar and $\V_n^{(\C)}$ are of the same order. From above, there exists a constant $c' \neq 0$ such that 
\[
    \T^{(\C)}_n(k) 
    \geq \left\{\frac{O_p(1) + \Xi_{1,n}}{O_p(1) + \Xi_{2,n}}\right\}^2
    = \left\{ o_p(1) + c' n^{1-\Upsilon} \right\}^2,
\]
for all $k = k_* - \lfloor bn^{\Upsilon}\rfloor, \ldots, k_* + \lfloor bn^{\Upsilon}\rfloor$.
Since $\T_n^{(\C)}(k) \geq 0$ for all $k$, we have
\begin{align}\label{eqt:boundTnC}
	\T^{(\C)}_n 
		\geq n^{-1}\sum_{k = k_* - \lfloor bn^{\Upsilon}\rfloor}^{k_*+ \lfloor bn^{\Upsilon}\rfloor} \T^{(\C)}_n(k)
		= c'' n^{1-\Upsilon} \{o_p(1)+1\}^2,
\end{align}
for a large enough $n$, where $c'' > 0$
is a constant. Consequently,  $\T^{(\C)}_n \rightarrow \infty$ in probability as $n \rightarrow \infty$ because $\Upsilon<1$.
\end{proof}

\begin{proof}[Proof of Theorem \ref{thm: rho consistency}]
Define $D_t = (X_{t+b} - X_t)/\sqrt{2}$ for $t = 1, \ldots, N$, where $N = n-b+1$. 
The ACF estimator at lag 1 based on $\{D_t \}$ is 
$\widehat{\rho}=\widehat{\gamma}(1)/\widehat{\gamma}(0)$, where
$\widehat{\gamma}(k) = \sum_{t = 1}^{N-k}(D_t - \bar{D})(D_{t+k} - \bar{D})/N$ for $k=0,1$, 
and $\bar{D} = N^{-1}\sum_{t = 1}^N D_t$.
Define $M_t^{(\alpha)} = \alpha_{t+b} - \alpha_t -  \left( \bar{\alpha}_{1+b,N+b} - \bar{\alpha}_{1,N}\right)$
for $\alpha \in \{\mu, Z\}$, 
where $\bar{\alpha}_{s,e} = (e-s+1)^{-1}\sum_{i = s}^e \alpha_i$. 
Then, we rewrite
\begin{equation}
\label{EQ: gamma 1 decomposition}
\widehat{\gamma}(1) = \frac{1}{2N}\sum_{t = 1}^{N-1} (M_t^{(Z)} + M_t^{(\mu)})(M_{t+1}^{(Z)} + M_{t+1}^{(\mu)}) = \frac{1}{2}(\widehat{\gamma}_N^{(ZZ)} + \widehat{\gamma}_N^{(\mu Z)} + \widehat{\gamma}_N^{(Z\mu)} +\widehat{\gamma}_N^{(\mu\mu)}),
\end{equation}
where $\widehat{\gamma}_N^{(\alpha\beta)} = N^{-1}\sum_{t = 1}^{N-1} M_t^{(\alpha)}M_{t+1}^{(\beta)}$ and $\alpha, \beta \in \{\mu, Z\}$. 
We derive the probability limits of $\widehat{\gamma}_N^{(ZZ)}$, $\widehat{\gamma}_N^{(\mu Z)}$, 
$\widehat{\gamma}_N^{(Z\mu)}$ and $\widehat{\gamma}_N^{(\mu\mu)}$ one by one. 
First, for $\widehat{\gamma}_N^{(ZZ)}$, we have
\begin{align*}
\widehat{\gamma}_N^{(ZZ)} &= \frac{1}{N}\sum_{t = 1}^{N-1}(Z_{t+b} - \bar{Z}_{1+b,N+b})(Z_{t+b+1} - \bar{Z}_{1+b,N+b}) - \frac{1}{N}\sum_{t = 1}^{N-1}(Z_{t+b} - \bar{Z}_{1+b,N+b})(Z_{t+1} - \bar{Z}_{1,N})\\
&\quad - \frac{1}{N}\sum_{t = 1}^{N-1}(Z_{t} - \bar{Z}_{1,N})(Z_{t+b+1} - \bar{Z}_{1+b,N+b}) + \frac{1}{N}\sum_{t = 1}^{N-1}(Z_{t} - \bar{Z}_{1,N})(Z_{t+1} - \bar{Z}_{1,N})\\
&=: \widehat{\gamma}_{N,1}^{(ZZ)} - \widehat{\gamma}_{N,2}^{(ZZ)}  - \widehat{\gamma}_{N,3}^{(ZZ)}  + \widehat{\gamma}_{N,4}^{(ZZ)}.
\end{align*}
Note that $\sum_{i = 1}^{\infty} \lambda_4(i) < \infty$ implies $\sum_{k = 1}^{\infty} |\gamma(k)| < \infty$. 
By Theorem 7 in \cite{wu2011asymptotic}, ergodic theorem and Slutsky's lemma,
\begin{align*}
\widehat{\gamma}_{N,1}^{(ZZ)} = \frac{1}{N}\sum_{t = 1}^{N-1} Z_{t+b}Z_{t+b+1} - \bar{Z}_{1+b,N+b}\left\{\frac{1}{N}\sum_{t = 1}^{N-1} \left( Z_{t+b} + Z_{t+1+b} \right) - \left(1- \frac{1}{N}\right)\bar{Z}_{1+b,N+b} \right\} \inP \gamma(1),
\end{align*}
where ``$\inP$'' denotes convergence in probability. Similarly, we have $\widehat{\gamma}_{N,4}^{(ZZ)}  \inP \gamma(1)$. The asymptotic behavior for sample autocovariance function at a divergent lag is different to that at a fixed lag. 
Nevertheless, by Theorem 8 in \cite{wu2011asymptotic}, ergodic theorem and Slutsky's lemma, 
we have $\widehat{\gamma}_{N,2}^{(ZZ)} - \gamma(b-1) \inP 0$ and $\widehat{\gamma}_{N,3}^{(ZZ)} - \gamma(b+1) \inP 0$. By Slutsky's lemma again, we have $\widehat{\gamma}_N^{(ZZ)} \inP 2\gamma(1)$. 
Next, we consider $\widehat{\gamma}_N^{(\mu Z)}$ and $\widehat{\gamma}_N^{(Z\mu)}$. 
Note that $M_t^{(\mu)}$ is deterministic and by Markov inequality. 
So,  for all $\epsilon > 0$, we have
\begin{equation*}
\pr(|\widehat{\gamma}_n^{(\mu Z)} | > \epsilon) \leq \frac{\E(|\widehat{\gamma}_n^{(\mu Z)} |)}{\epsilon} \leq \frac{1}{2N\epsilon}\sum_{t = 1}^{N-1} \big[ \big(M_t^{(\mu)}\big)^2 + \E\big\{\big(M_{t+1}^{(Z)}\big)^2 \big\} \big].
\end{equation*}
Since $\{\mu_{t+b} - \mu_t \}_{1 \leq t \leq N}$ has at most $bM$ non-zero elements, we have
\begin{equation}
\label{EQ average of M^mu square negligible}
0 \leq \frac{1}{N}\sum_{t=1}^{N-1} \big(M_t^{(\mu)}\big)^2 \leq \frac{1}{N}\sum_{t = 1}^N \big(\mu_{t+b} - \mu_t \big)^2 \leq \frac{bM}{N}\Delta^2 \rightarrow 0.
\end{equation}
Moreover, expanding $\sum_{t=1}^{N-1} \E \big\{\big(M_{t+1}^{(Z)}\big)^2 \big\}/N$, we have 
\begin{align*}
	 \frac{2}{N^2}\sum\limits_{t=2}^{N}\sum\limits_{i=1}^N \{\gamma(|t+b-i|) + \gamma(|t-b-i|)\} 
	- \frac{2N-2}{N}\left\{\gamma(b) + \frac{1}{N^2}\sum\limits_{i,j = 1}^N \gamma(|i - j - b|) \right\},
\end{align*}
which converges to 0 in probability by the absolute summability of $\gamma(k)$. 
Therefore, we have $\widehat{\gamma}_N^{(\mu Z)} \inP 0$. By similar arguments, we also have $\widehat{\gamma}_N^{(Z\mu)} \inP 0$. 
Then, we consider $\widehat{\gamma}_N^{(\mu\mu)}$. 
By the result in (\ref{EQ average of M^mu square negligible}), we have 
$
	0 \leq \widehat{\gamma}_N^{(\mu\mu)} 
	\leq \sum_{t=1}^{N-1} \big\{ \big(M_t^{(\mu)}\big)^2 + \big(M_{t+1}^{(\mu)}\big)^2 \big\} /N\rightarrow 0. 
$
Putting all these results back to (\ref{EQ: gamma 1 decomposition}),
we obtain $\widehat{\gamma}(1) \inP \gamma(1)$. For $\widehat{\gamma}(0)$, we have
\begin{equation}
\label{EQ: gamma 0 decomposition}
\widehat{\gamma}(0) = \frac{1}{2N}\sum_{t=1}^{N} \big\{ \big(M_t^{(Z)}\big)^2 + \big(M_t^{(\mu)}\big)^2 + 2M_t^{(Z)}M_t^{(\mu)} \big\}.
\end{equation}
Consider the first term of (\ref{EQ: gamma 0 decomposition}), we have, by Theorems 7 and 8 in \cite{wu2011asymptotic}, that
\begin{equation*}
\frac{1}{N}\sum\limits_{t=1}^{N} \big(M_t^{(Z)}\big)^2 = \frac{\sum_{t = 1}^N (Z_{t+b}^2 + Z_t^2 - 2Z_{t+b}Z_{t})}{N} - (\bar{Z}_{1+b,N+b} - \bar{Z}_{1,N})^2 \inP 2\gamma(0).
\end{equation*}
By similar arguments as above, we also have $N^{-1}\sum_{t=1}^{N} \big(M_t^{(\mu)}\big)^2  \rightarrow 0$ and $N^{-1}\sum_{t=1}^{N}  M_t^{(Z)}M_t^{(\mu)} \inP 0$ as  $b \rightarrow \infty$. Therefore, $\widehat{\gamma}(0) \inP \gamma(0)$ as $b \rightarrow \infty$. 
Hence, we have $\widehat{\rho} \inP \gamma(1)/\gamma(0) = \rho$.
\end{proof}

\section{Proofs of Corollaries}

\begin{proof}[Proof of Corollaries \ref{thm: wilcoxon test} and \ref{thm: HL test}] 
The proof is similar to the CUSUM case and we highlight the difference. 
By Theorem 3.1 in \cite{dehling2013SRD}, 
we have $\W_n(\lfloor nt \rfloor) \inD \sigma_{\W}^2\{\BM(t)- t\BM(1)\}$. 
Moreover by Theorem 1 in \cite{dehling2015robust}, 
we have $\{\H_n(\lfloor nt \rfloor) : t\in[0,1] \}\inD \{\left(\sigma_{\H}/u(0)\right)(\BM(t) - t\BM(1) \}) : t\in[0,1]\}$. 
The result follows by using similar argument as in the proof of Theorem \ref{thm: CUSUM test} and 
replacing $\sigma$ by $\sigma_{\W}^2$ for $\T_n^{(\W)}$ and $\sigma_{\H}/u(0)$ for $\T_n^{(\H)}$.
\end{proof}

\begin{proof}[Proof of Corollary \ref{thm: general parameter test}] 
Using \eqref{EQ: parameter decomposition}, we can rewrite
\begin{equation*}
\begin{split}
\G_n(k) =\frac{k(n-k)}{n^{3/2}}\left( \widehat{\theta}_{1:k} - \widehat{\theta}_{k+1:n}\right)
	= I_n(t)  - \frac{k}{n}I_n(1) + \frac{k(n-k)}{n^{3/2}}(R_{1:k} - R_{k+1:n}).
\end{split}
\end{equation*}
The last term vanishes uniformly over $k$ as $n \rightarrow \infty$ since $\sup_{k = 1, \ldots, n-1} \left| R_{1:k}\right| + | R_{(k+1):n}|= o_p(n^{-1/2})$. 
Therefore, $\{ \G_n(\lfloor nt \rfloor) : t\in[0,1] \}\inD \{\sigma (\BM(t) - t\BM(1)):t\in[0,1]\}$ 
by the continuous mapping theorem. The result follows by using similar arguments as in the proof of Theorem \ref{thm: CUSUM test}.
\end{proof}

\begin{proof}[Proof of Corollaries \ref{thm: LRD test} and \ref{thm: multivartiate test}] 
Note that $\{\C_n^*(\lfloor nt \rfloor): t\in[0,1] \} \Rightarrow \{\sigma_{\C^*}\BM_H(t) - t\BM_H(1)\}: t\in[0,1] $ and 
$\{\W_n^*(\lfloor nt \rfloor): t\in[0,1] \} \Rightarrow \{(2\sqrt{\pi})^{-1}\BM_H(t) - t\BM_H(1) : t\in[0,1] \}$. Moreover, $\T_n^{(\C^*)}$ and $\T_n^{(\W^*)}$ is a continuous and measurable function of $\C_n^*(\cdot)$ and $\W_n^*(\cdot)$ respectively. The above arguments also apply and the limiting distribution is a functional of standard fractional Brownian motion $\BM_H(t)$ instead of the standard Brownian motion $\BM(t)$.

Regarding to $\T_n^{(\Q)}$, since 
$\{\Q_n(\lfloor(nt\rfloor): t\in[0,1] \} \Rightarrow \{\sigma_{\Q}\{\BM^q(t) - t\BM^q(1) \} : t\in[0,1] \}$ and 
$\T_n^{(\Q)}$ is a continuous and measurable function of $\Q_n(\cdot)$, we have the result by 
continuous mapping theorem.
\end{proof}

\section{Supplementary Simulation Result}
\subsection{Setting}
In this session, the size and power of our proposed tests ($\T_n^{(\C)}$, $\T_n^{(\W)}$ and $\T_n^{(\H)}$), \cite{shao2010testing} proposed tests ($\shao_n^{(1)}$, $\shao_n^{(2)}$ and $\shao_n^{(3)}$) as well as \cite{zhang2018unsupervised} proposed test ($\zhang_n$) are further investigated under the standard autoregressive (AR) model, autoregressive-moving-average (ARMA) model, threshold AR (TAR) model and absolute value nonlinear AR (NAR) model in Sections \ref{sec: AR model}, \ref{sec: ARMA model}, \ref{sec: TAR model} and \ref{sec: NAR model} respectively. Extra results under the bilinear AR (BAR) model are also provided in \ref{sec: BAR model}. The performance for the tests under non-Gaussian time series is investigated in Section \ref{sec: non-g time series}. The power comparison between all the test concerned when the magnitudes of changes are not equal is reported in Section \ref{sec: Unequal change}. Last but not least, the power comparison after the use of nonsymmetric windows is presented in Section \ref{sec: Non-symmetric windows}. All results are obtained using $2^{10}$ replications and the nominal size is 5\%. The sample size is $n = 200$ unless specified. We define $\varepsilon_i$ to be independent standard normal random variables for $i = 1, \ldots, n$. We again consider the signal-plus-noise model, $X_i = \mu_i + Z_i$, and let $\mu_i$ to be the signal, specified in equation \eqref{EQ: CP models, mean level} from the main article, i.e.,
\begin{equation}
\label{EQ mean function normal}
\mu_i = \Delta \sum_{j = 1}^M (-1)^{j+1} \mathbb{1}_{\{ i/n \geq j/(M+1) \}},
\end{equation}
where $M \in \{1,\ldots, 6\}$ denotes the number of CPs, and $\Delta\in\mathbb{R}$ is the magnitude of the mean changes. 

\subsection{AR Model}
\label{sec: AR model}
In this subsection, we investigate the performance under the AR(1) model. Specifically, the data has the form $X_i - \mu_i =  \varpi (X_{i-1} - \mu_{i-1}) + \varepsilon_i$, where the AR parameter $|\varpi|<1$. The null rejection rates are documented in Table \ref{tab: AR size}. Our proposed tests have accurate size in most cases. Although $\shao_n^{(1)}$ has more accurate size when $\varpi =  0.8$, our tests are not severely under-sized when $\varpi < 0$ as observed in other self-normalized tests. The non-self-normalized Kolmogorov--Smirnov (KS) test has larger size distortion than the self-normalized tests.

Figures \ref{fig: power ar0.5} and \ref{fig: power ar-0.5} present the power while Figures \ref{fig: size adjusted power ar0.5} and \ref{fig: size adjusted power ar-0.5} present the size-adjusted power. In the single CP case, our proposed tests lose the least power among multiple-CP tests comparing to $\shao_n^{(1)}$ and $\KS_n$, which are tailored for the at-most-one-change (AMOC) problem. However, $\shao_n^{(1)}$ suffers non-monotonic power in multiple-CP case. The issue is also observed in the KS test when $\varpi = -0.5$, yet it does not appear in the unadjusted power. Under multiple-CP case, our proposed tests have the largest power. In particular, $\shao_n^{(2)}$ and $\shao_n^{(3)}$ have lower power than our proposed tests even the number of CP is well-specified.

\begin{table}[t]
\setlength{\tabcolsep}{3pt}
\centering
\footnotesize
\caption{\footnotesize \label{tab: AR size} Null rejection rates (\%) at 5\% nominal size under AR model and mean function \eqref{EQ mean function normal}.}
\vspace{-0.3cm}
\begin{tabular}{ r  r rrr r rrr  r rrr r rrr}
\toprule
&  \multicolumn{8}{c}{$n=200$} & \multicolumn{8}{c}{$n=400$}\\
\cmidrule(r){2-9}\cmidrule(r){10-17}
$\varpi$& $\KS_n$ & $\shao_n^{(1)}$ & $\shao_n^{(2)}$ & $\shao_n^{(3)}$ & $\zhang_n$ & $\T_n^{(\C)}$ & $\T_n^{(\W)}$ & $\T_n^{(\H)}$ & $\KS_n$ &  $\shao_n^{(1)}$ & $\shao_n^{(2)}$ & $\shao_n^{(3)}$ & $\zhang_n$ & $\T_n^{(\C)}$ & $\T_n^{(\W)}$ & $\T_n^{(\H)}$  \\ 
\cmidrule(r){1-17}
0.8 & 53.6 & 9.0 & 25.9 & 43.3 & 27.2 & 16.1 & 23.9 & 21.8 & 52.6 & 6.0 & 12.9 & 23.4 & 14.6 & 9.3 & 12.7 & 12.1\\
0.5 & 13.7 & 5.7 & 9.8 & 10.9 & 10.4 & 5.5 & 7.3 & 6.5 & 9.4 & 4.5 & 5.7 & 8.3 & 5.4 & 5.1 & 5.5 & 5.1\\
0.3 & 10.1 & 5.4 & 6.7 & 5.6 & 6.8 & 4.5 & 5.3 & 5.4 & 7.7 & 4.2 & 5.0 & 5.8 & 4.8 & 5.2 & 5.0 & 4.8\\
0 & 7.4 & 5.2 & 3.5 & 2.5 & 3.1 & 4.1 & 4.7 & 4.6 & 5.0 & 4.2 & 4.5 & 4.6 & 3.8 & 5.2 & 4.7 & 4.2\\
$-0.3$ & 13.0 & 4.3 & 2.5 & 0.8 & 2.2 & 4.4 & 5.3 & 4.7 & 6.6 & 3.3 & 2.8 & 3.0 & 2.5 & 5.4 & 4.7 & 4.9\\
$-0.5$ & 23.7 & 3.7 & 1.5 & 0.4 & 1.4 & 5.0 & 5.8 & 5.3 & 14.7 & 3.1 & 2.1 & 1.9 & 2.1 & 5.0 & 4.4 & 4.8\\
$-0.8$ & 46.0 & 2.4 & 0.5 & 0.1 & 0.1 & 7.8 & 9.2 & 9.3 & 38.9 & 2.3 & 1.0 & 0.1 & 1.0 & 6.6 & 7.7 & 7.3\\

\bottomrule
\end{tabular} 
\end{table}

\begin{figure}[t]
\centering
\includegraphics[width=0.95\linewidth]{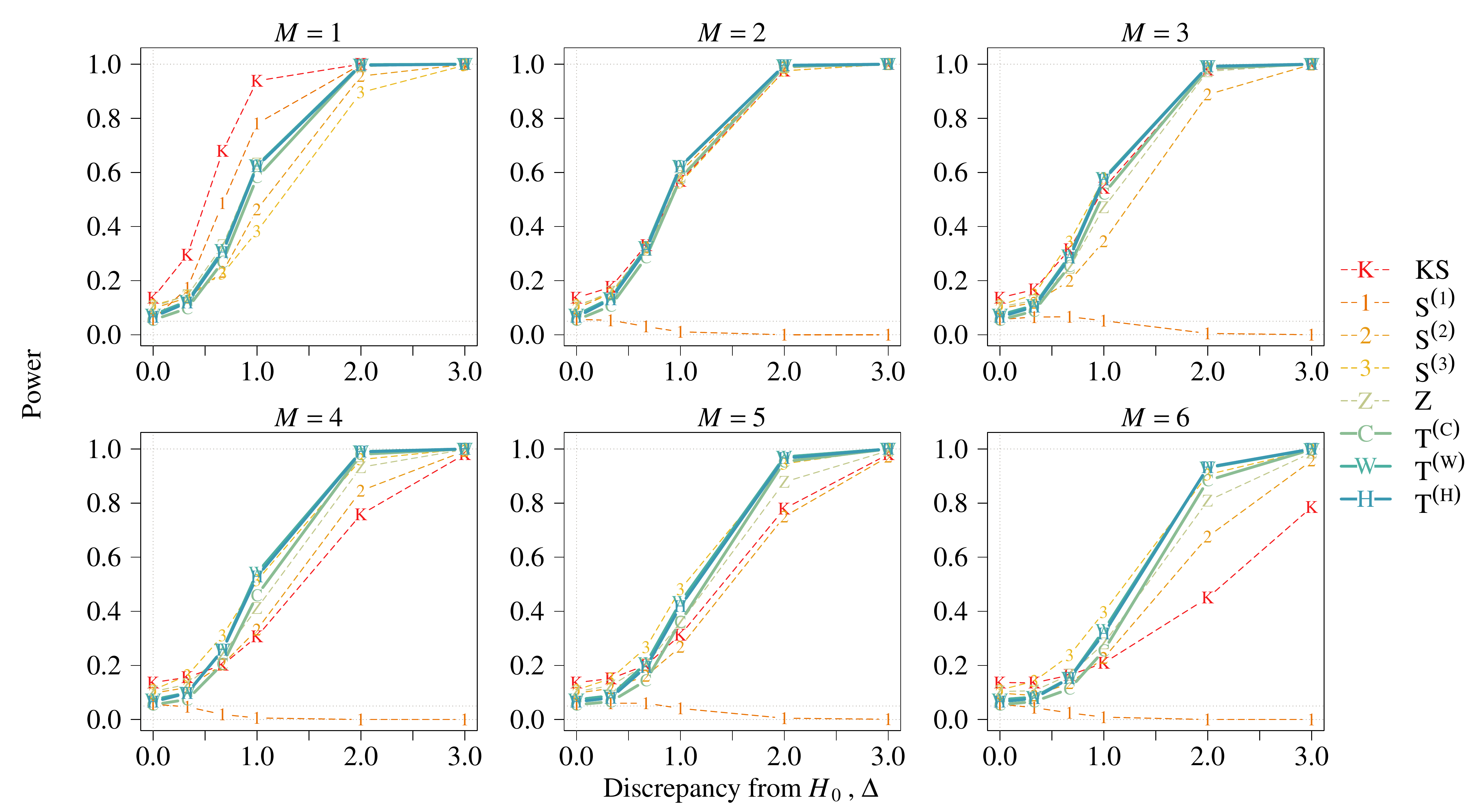}
\vspace{-0.3cm}
\caption{\footnotesize \label{fig: power ar0.5} Power under AR model with $\varpi = 0.5$ and mean function \eqref{EQ mean function normal}.}
\end{figure}

\begin{figure}[H]
\centering
\includegraphics[width=\linewidth]{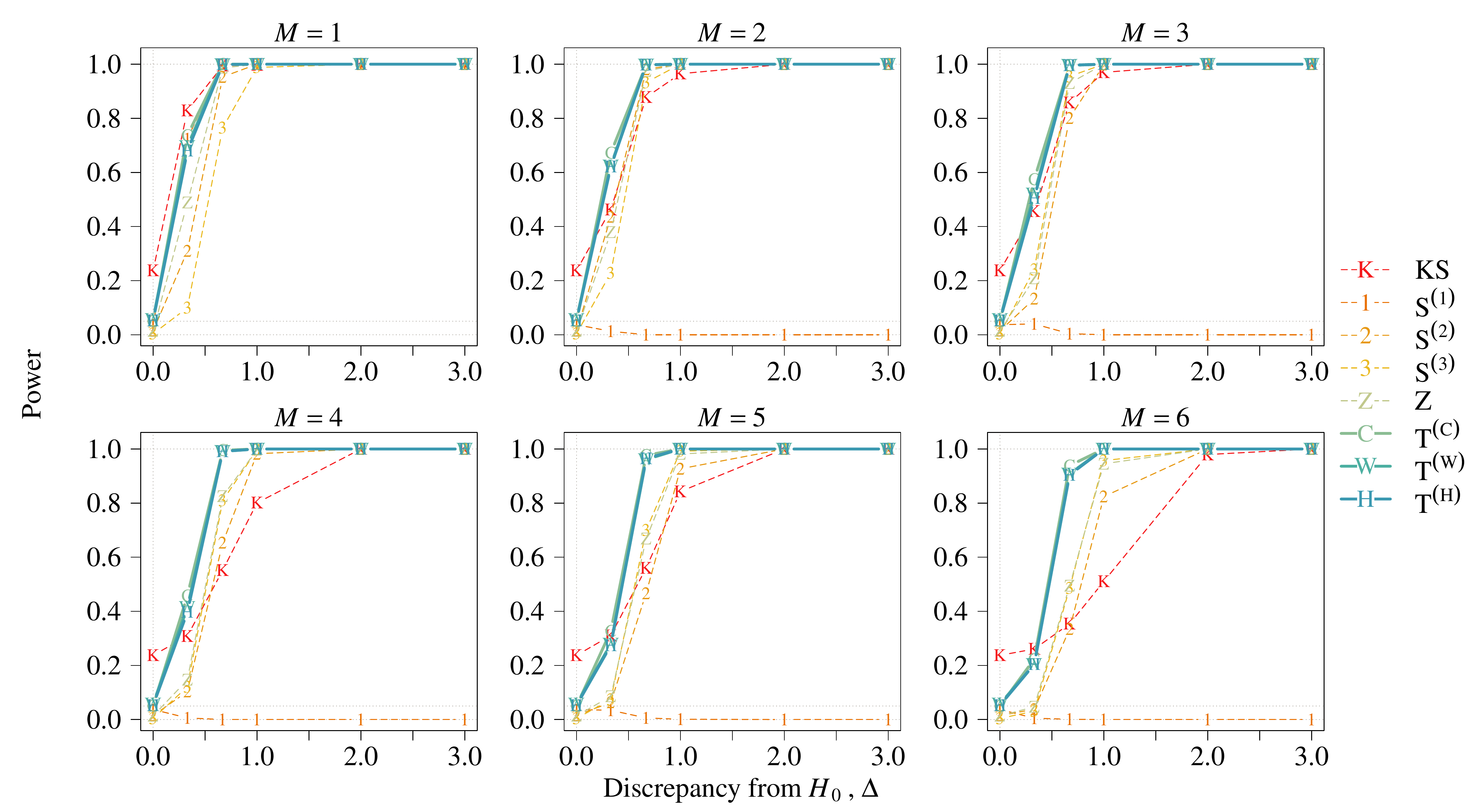}
\vspace{-0.3cm}
\caption{\footnotesize \label{fig: power ar-0.5} Power under AR model with $\varpi = -0.5$ and mean function \eqref{EQ mean function normal}.}
\end{figure}

\begin{figure}[H]
\centering
\includegraphics[width=\linewidth]{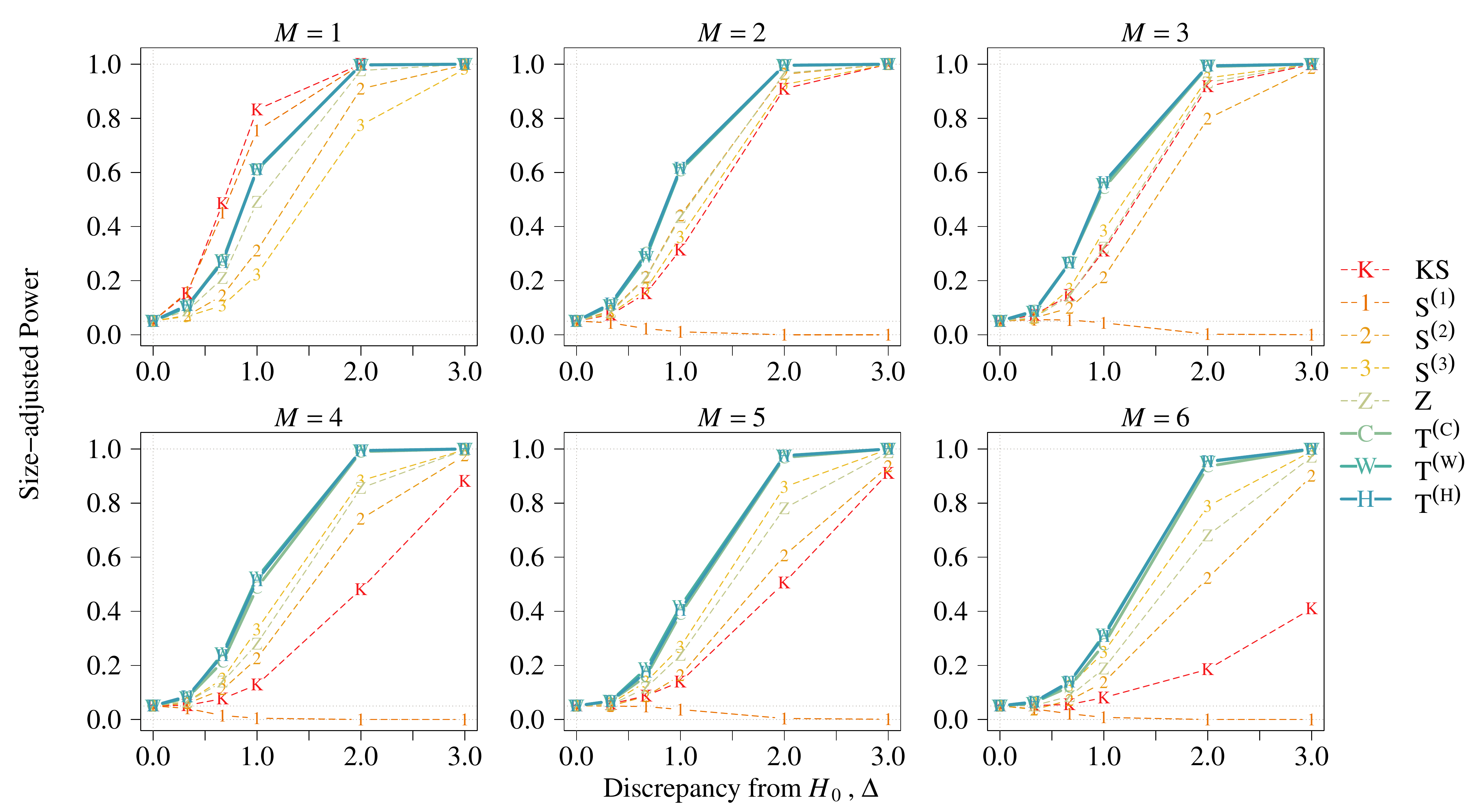}
\vspace{-0.3cm}
\caption{\footnotesize \label{fig: size adjusted power ar0.5} Size-adjusted power under AR model with $\varpi = 0.5$ and mean function \eqref{EQ mean function normal}.}
\end{figure}

\begin{figure}[H]
\centering
\includegraphics[width=0.95\linewidth]{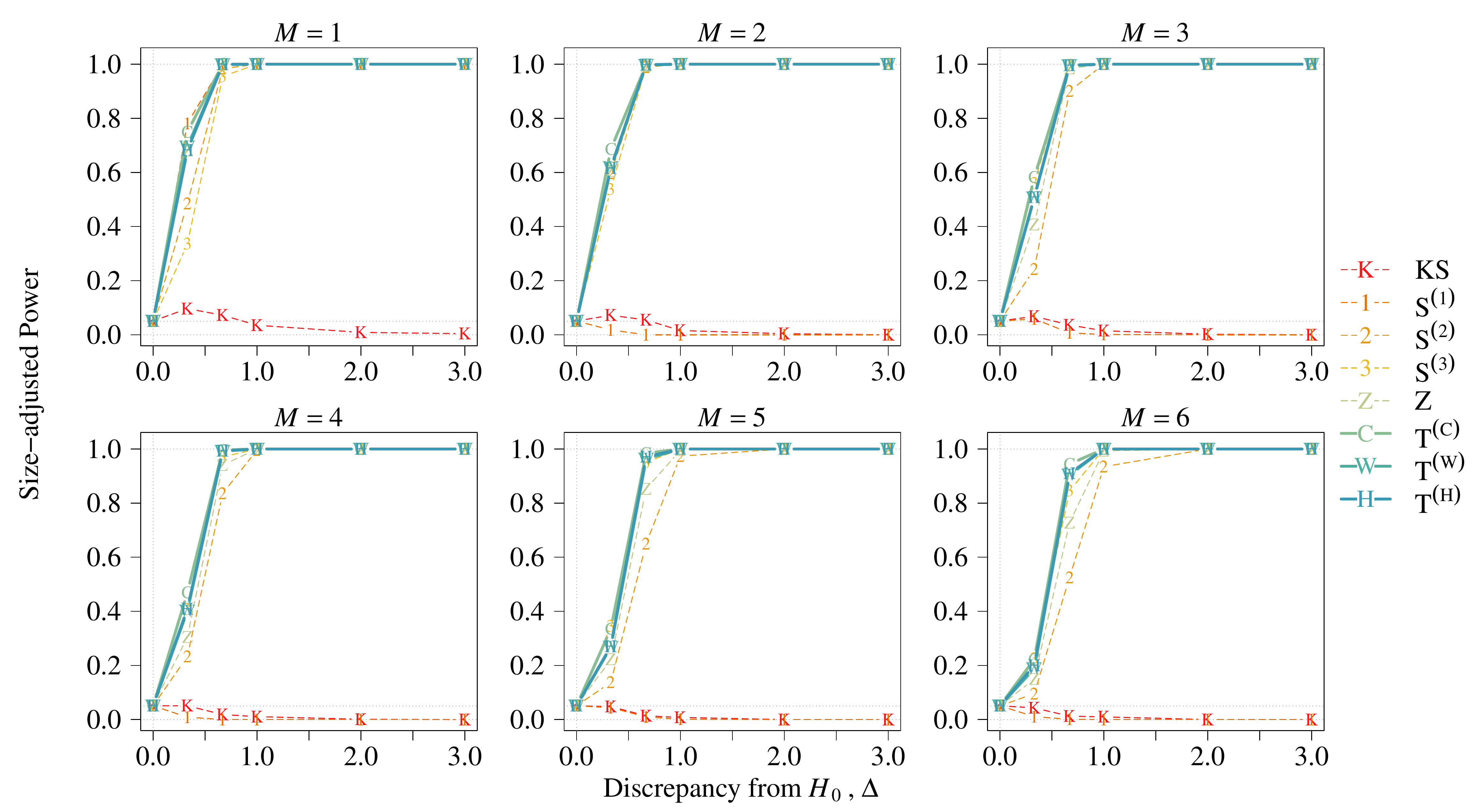}
\vspace{-0.3cm}
\caption{\footnotesize \label{fig: size adjusted power ar-0.5} Size-adjusted power under AR model with $\varpi = -0.5$ and mean function \eqref{EQ mean function normal}.}
\end{figure}

\subsection{ARMA model}
\label{sec: ARMA model}
The ARMA model is defined as $X_i - \mu_i =  \varpi (X_{i-1} - \mu_{i-1}) + \varepsilon_i + \varphi \varepsilon_{i-1}$, where $\varpi, \varphi \in (-1, 1)$. The size accuracy is reported in Table \ref{tab: ARMA size}. The self-normalized approach generally has more accurate size than the non-self-normalized approach. Roughly speaking, $\KS_n$ is seriously over-sized in most cases, while the self-normalized tests are under-sized in some cases. Overall, $\shao_n^{(1)}$ has the most accurate size, but the size distortion of $\shao_n^{(m)}$ is increasing with m. The size performance of self-normalized multiple-CP tests is mixed. Our proposed tests in general are under-sized while other multiple-CP tests, i.e., $\shao_n^{(2)}$, $\shao_n^{(3)}$ and $\zhang_n$, are over-sized when both $\varphi$ and $\varpi$ are positive. Consider when $\varpi = 0.8$. Our proposed tests have significantly smaller size distortion than other multiple-CP tests when $\varphi$ is positive. Except when $\varphi = -0.5$, our proposed tests have larger size distortion, but nevertheless all multiple-CP tests are seriously over-sized. When both $\varpi$ and $\varphi$ are negative, self-normalized tests commit lower type-I error than the nominal size while non-self-normalized $\KS_n$ commits higher type-I error.

Figures \ref{fig: raw power ar0.5ma0.5}--\ref{fig: size adjusted power ar-0.5ma-0.5} present the power. A similar conclusion in Section \ref{sec: AR model} is drawn. Our proposed tests have largest size-adjusted power in the multiple-CP case and lose the least power comparing to AMOC type tests, i.e., $\shao_n^{(1)}$ and $\KS_n$, in the one-CP case.

\begin{table}[t]
\setlength{\tabcolsep}{3pt}
\centering
\footnotesize
\caption{\footnotesize \label{tab: ARMA size} Null rejection rates (\%) at 5\% nominal size under ARMA model and mean function \eqref{EQ mean function normal}.}
\vspace{-0.3cm}
\begin{tabular}{ rr  r rrr r rrr  r rrr r rrr}
\toprule
&  \multicolumn{8}{c}{$n=200$} & \multicolumn{8}{c}{$n=400$}\\
\cmidrule(r){3-10}\cmidrule(r){11-18}
$\varphi$&$\varpi$& $\KS_n$ & $\shao_n^{(1)}$ & $\shao_n^{(2)}$ & $\shao_n^{(3)}$ & $\zhang_n$ & $\T_n^{(\C)}$ & $\T_n^{(\W)}$ & $\T_n^{(\H)}$ & $\KS_n$ &  $\shao_n^{(1)}$ & $\shao_n^{(2)}$ & $\shao_n^{(3)}$ & $\zhang_n$ & $\T_n^{(\C)}$ & $\T_n^{(\W)}$ & $\T_n^{(\H)}$  \\ 
\cmidrule(r){1-18}
0.8 & 0.8 & 54.5 & 9.0 & 26.0 & 44.4 & 27.8 & 2.6 & 7.5 & 7.3 & 53.6 & 6.1 & 13.0 & 23.4 & 14.4 & 1.7 & 3.5 & 3.0\\
 & 0.5 & 13.5 & 6.1 & 10.4 & 13.0 & 11.3 & 0.8 & 1.9 & 1.6 & 9.0 & 4.6 & 5.9 & 8.8 & 5.8 & 1.6 & 2.4 & 2.5\\
 & 0.3 & 11.7 & 5.5 & 7.8 & 8.1 & 8.1 & 0.8 & 1.7 & 1.4 & 7.8 & 4.5 & 5.3 & 6.5 & 5.2 & 2.6 & 2.9 & 2.7\\
 & 0 & 7.4 & 5.2 & 3.5 & 2.5 & 3.1 & 4.1 & 4.7 & 4.6 & 5.0 & 4.2 & 4.5 & 4.6 & 3.8 & 5.2 & 4.7 & 4.2\\
 & $-0.3$ & 8.7 & 5.1 & 5.1 & 3.7 & 5.2 & 2.1 & 2.7 & 2.6 & 5.8 & 4.1 & 4.0 & 4.9 & 3.9 & 4.0 & 4.2 & 4.2\\
 & $-0.5$ & 7.8 & 4.6 & 5.1 & 3.8 & 5.2 & 3.0 & 2.9 & 3.2 & 5.4 & 4.0 & 4.2 & 4.7 & 3.6 & 4.5 & 4.3 & 4.2\\

\cmidrule(r){1-18}

0.5 & 0.8 & 54.6 & 9.0 & 26.1 & 44.3 & 28.2 & 4.1 & 10.1 & 9.7 & 53.7 & 6.1 & 13.1 & 23.5 & 14.5 & 2.6 & 4.5 & 4.5\\
 & 0.5 & 13.3 & 6.0 & 10.4 & 12.5 & 11.0 & 1.5 & 2.3 & 2.1 & 9.0 & 4.5 & 5.7 & 8.7 & 6.0 & 2.1 & 3.0 & 3.1\\
 & 0.3 & 11.7 & 5.4 & 7.7 & 7.8 & 7.8 & 1.4 & 2.2 & 2.4 & 7.4 & 4.5 & 5.1 & 6.4 & 5.2 & 3.2 & 3.7 & 3.1\\
 & 0 & 7.4 & 5.2 & 3.5 & 2.5 & 3.1 & 4.1 & 4.7 & 4.6 & 5.0 & 4.2 & 4.5 & 4.6 & 3.8 & 5.2 & 4.7 & 4.2\\
 & $-0.3$ & 8.1 & 4.9 & 4.5 & 3.2 & 5.1 & 3.3 & 3.7 & 3.6 & 5.1 & 3.9 & 3.8 & 4.7 & 3.9 & 4.4 & 4.5 & 4.7\\

 & $-0.8$ & 10.2 & 4.4 & 3.3 & 1.5 & 2.5 & 8.4 & 9.0 & 8.7 & 6.1 & 4.0 & 3.4 & 3.1 & 2.6 & 6.7 & 6.6 & 6.4\\
\cmidrule(r){1-18}

$-0.5$ & 0.8 & 43.7 & 8.6 & 20.3 & 29.9 & 22.5 & 32.4 & 37.8 & 36.6 & 43.5 & 6.0 & 12.3 & 19.7 & 14.0 & 17.5 & 20.3 & 20.7\\

 & 0.3 & 11.8 & 3.6 & 1.4 & 0.4 & 1.4 & 1.5 & 2.1 & 2.2 & 6.3 & 2.9 & 2.0 & 1.6 & 2.1 & 2.7 & 2.7 & 2.9\\
 & 0 & 7.4 & 5.2 & 3.5 & 2.5 & 3.1 & 4.1 & 4.7 & 4.6 & 5.0 & 4.2 & 4.5 & 4.6 & 3.8 & 5.2 & 4.7 & 4.2\\
 & $-0.3$ & 43.9 & 0.9 & 0.0 & 0.0 & 0.0 & 0.2 & 0.2 & 0.1 & 40.0 & 1.8 & 0.1 & 0.0 & 0.3 & 0.2 & 0.9 & 0.9\\
 & $-0.5$ & 53.3 & 1.0 & 0.0 & 0.0 & 0.0 & 0.1 & 0.4 & 0.2 & 50.7 & 1.0 & 0.1 & 0.0 & 0.0 & 0.0 & 0.4 & 0.2\\
 & $-0.8$ & 51.6 & 0.0 & 0.0 & 0.0 & 0.0 & 0.0 & 0.2 & 0.0 & 53.2 & 0.2 & 0.0 & 0.0 & 0.0 & 0.0 & 0.1 & 0.1\\
\cmidrule(r){1-18}

$-0.8$ & 0.5 & 24.2 & 1.2 & 0.1 & 0.0 & 0.0 & 0.0 & 0.1 & 0.0 & 20.3 & 1.3 & 0.0 & 0.0 & 0.0 & 0.0 & 0.1 & 0.1\\
 & 0.3 & 38.7 & 0.4 & 0.0 & 0.0 & 0.0 & 0.0 & 0.0 & 0.0 & 35.3 & 0.7 & 0.0 & 0.0 & 0.0 & 0.0 & 0.0 & 0.0\\
 & 0 & 7.4 & 5.2 & 3.5 & 2.5 & 3.1 & 4.1 & 4.7 & 4.6 & 5.0 & 4.2 & 4.5 & 4.6 & 3.8 & 5.2 & 4.7 & 4.2\\
 & $-0.3$ & 55.8 & 0.0 & 0.0 & 0.0 & 0.0 & 0.0 & 0.0 & 0.0 & 52.8 & 0.0 & 0.0 & 0.0 & 0.0 & 0.0 & 0.0 & 0.0\\
 & $-0.5$ & 57.9 & 0.0 & 0.0 & 0.0 & 0.0 & 0.0 & 0.0 & 0.0 & 60.0 & 0.0 & 0.0 & 0.0 & 0.0 & 0.0 & 0.0 & 0.0\\
 & $-0.8$ & 52.1 & 0.0 & 0.0 & 0.0 & 0.0 & 0.0 & 0.0 & 0.0 & 54.6 & 0.0 & 0.0 & 0.0 & 0.0 & 0.0 & 0.0 & 0.0\\

\bottomrule
\end{tabular} 
\end{table}

\begin{figure}[H]
\centering
\includegraphics[width=0.9\linewidth]{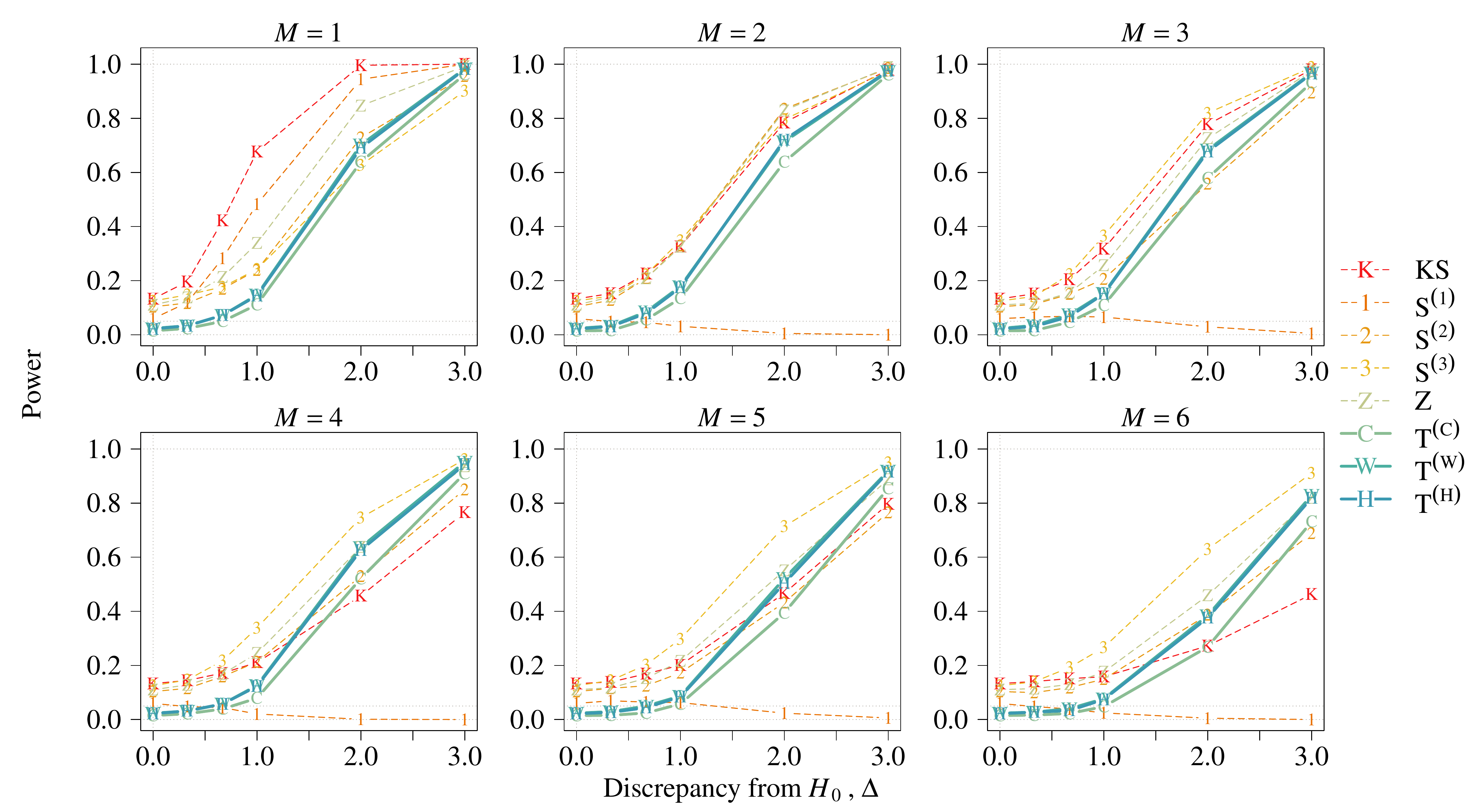}
\vspace{-0.3cm}
\caption{\footnotesize \label{fig: raw power ar0.5ma0.5} Power under ARMA model with $\varpi = \varphi = 0.5$ and mean function \eqref{EQ mean function normal}.}
\end{figure}

\begin{figure}[H]
\centering
\includegraphics[width=0.9\linewidth]{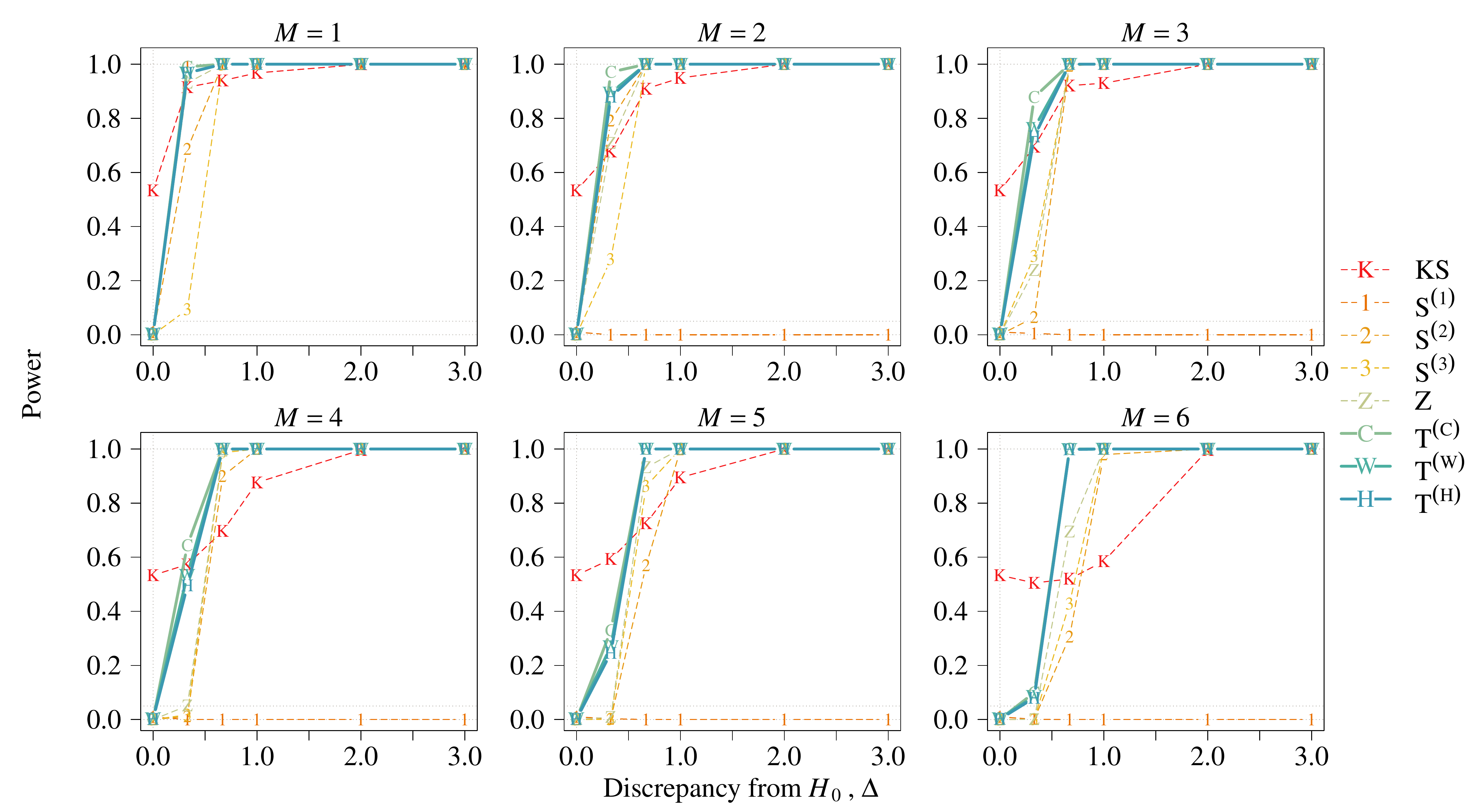}
\vspace{-0.3cm}
\caption{\footnotesize \label{fig: raw power ar-0.5ma-0.5} Power under ARMA model with $\varpi = \varphi = -0.5$ and mean function \eqref{EQ mean function normal}.}
\end{figure}

\begin{figure}[H]
\centering
\includegraphics[width=\linewidth]{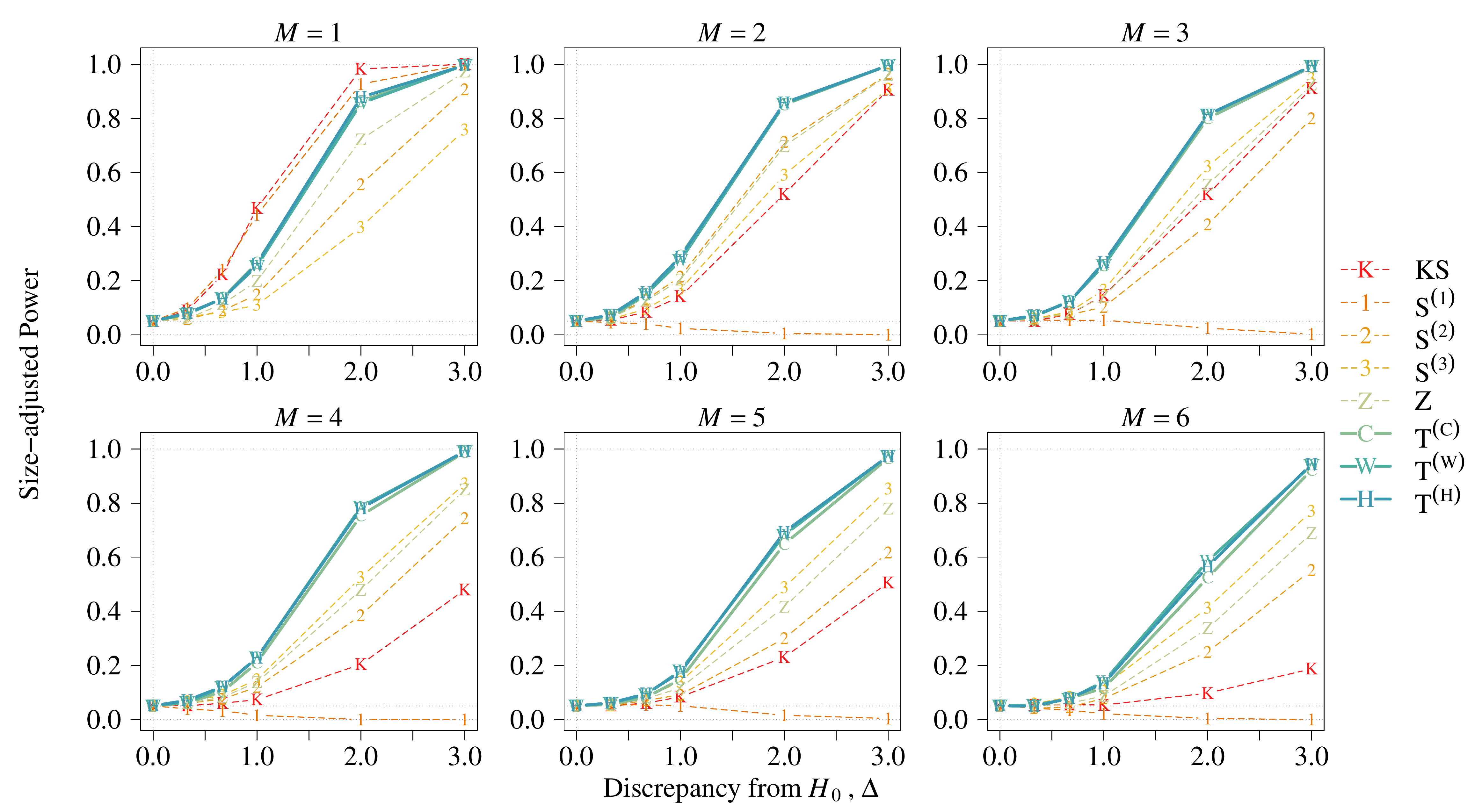}
\vspace{-0.3cm}
\caption{\footnotesize \label{fig: size adjusted power ar0.5ma0.5} Size-adjusted power under ARMA model with $\varpi = \varphi = 0.5$ and mean function \eqref{EQ mean function normal}.}
\end{figure}

\begin{figure}[H]
\centering
\includegraphics[width=\linewidth]{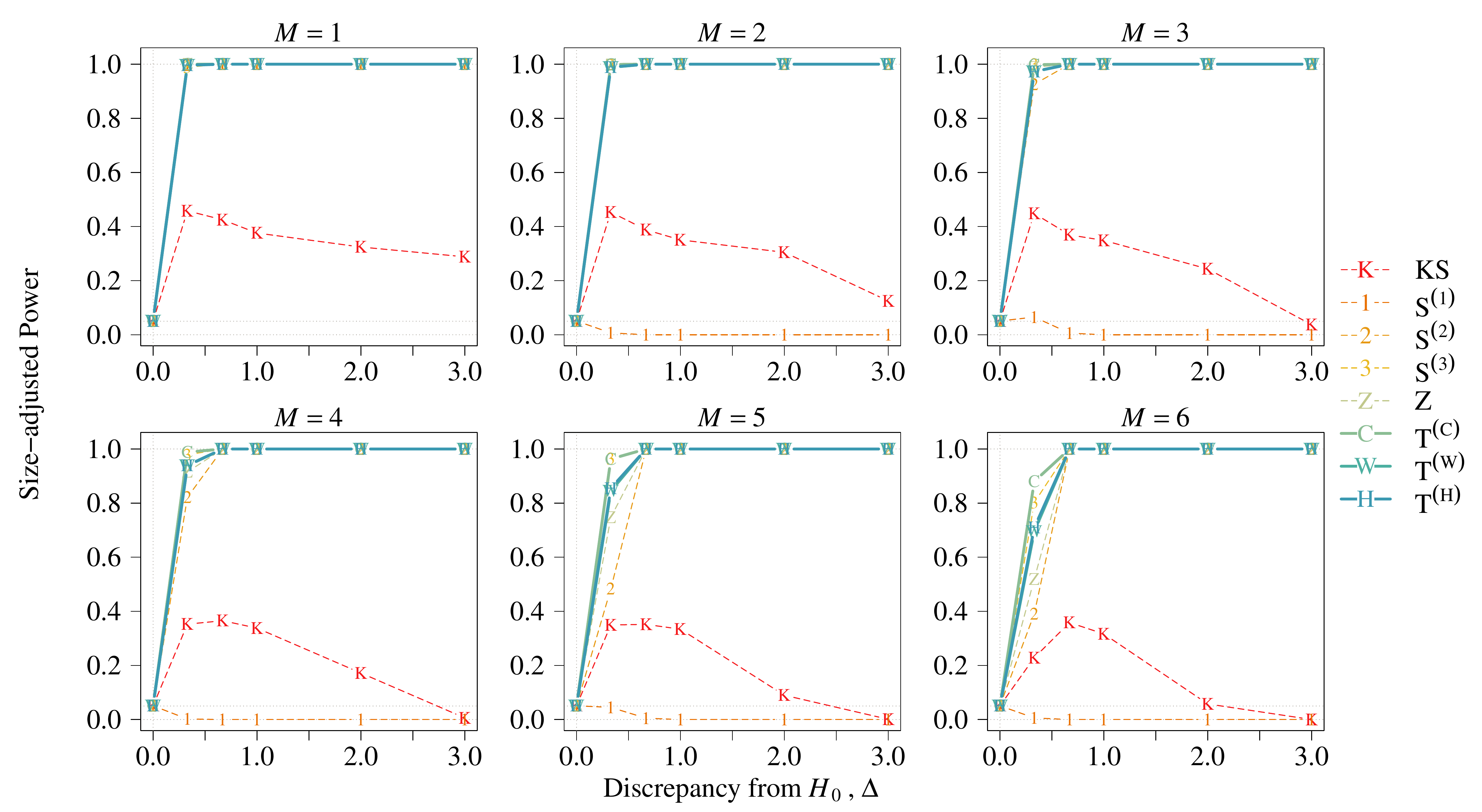}
\vspace{-0.3cm}
\caption{\footnotesize \label{fig: size adjusted power ar-0.5ma-0.5} Size-adjusted power under ARMA model with $\varpi = \varphi = -0.5$ and mean function \eqref{EQ mean function normal}.}
\end{figure}

\subsection{TAR Model}
\label{sec: TAR model}
The threshold AR model is defined to be
\begin{equation*}
X_i - \mu_i=  \{\varpi_1 I_{i-1} + \varpi_2 (1-I_{i-1})\}(X_{i-1} - \mu_{i-1}) + \varepsilon_i,
\end{equation*}
where $I_i = \mathbb{1}_{\{X_{i} - \mu_i \geq 0 \}}$ and $\varpi_1, \varpi_2  \in (-1,1)$. In this subsection, we report the performance under different choices of $\varpi_1$ and $\varpi_2$. Table \ref{tab: TAR size} tabulates the null rejection rates. Similar to Section \ref{sec: AR model}, $\KS_n$ has larger size distortion than the self-normalized tests. Our proposed tests have more accurate size in general. Although our tests have larger size distortion than $\shao_n^{(1)}$ when $\varpi_1 = 0.8$, our proposed tests have accurate size while other self-normalized tests are seriously under-sized when both $\varpi_1$ and $\varpi_2$ are negative. In contrast to the result in BAR model, $\T_n^{(\C)}$ has more accurate size than $\T_n^{(\W)}$ and $\T_n^{(\H)}$ in general.  A similar result in power is also observed in Figures \ref{fig: power TAR ar0.5 TAR0.5}--\ref{fig: size adjusted power TAR ar-0.5 tar-0.5}. Our proposed tests perform the best under multiple-CP case but they slightly underperform $\shao_n^{(1)}$ and $\KS_n$ under the single CP case.

\begin{figure}[H]
\centering
\includegraphics[width=\linewidth]{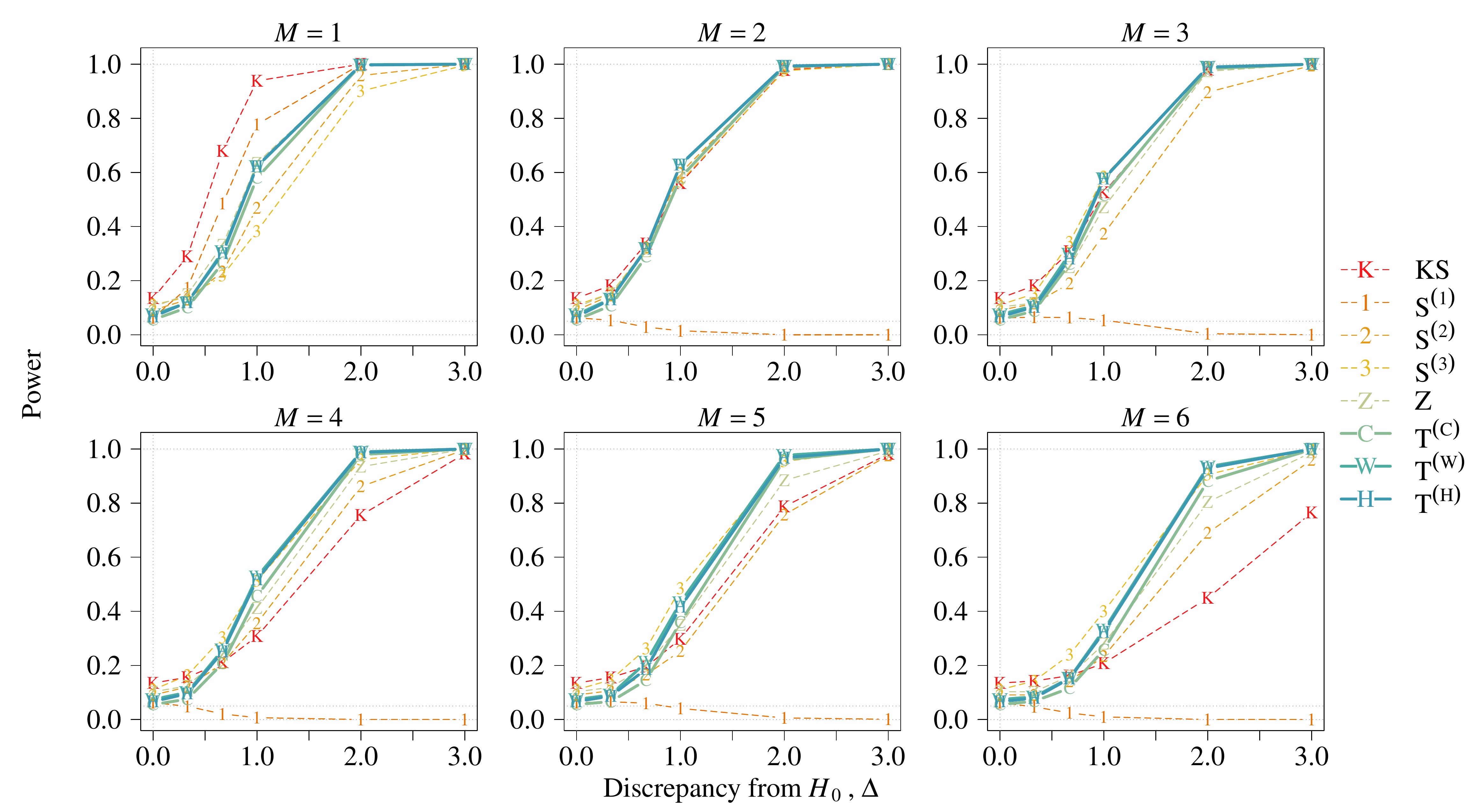}
\vspace{-0.3cm}
\caption{\footnotesize \label{fig: power TAR ar0.5 TAR0.5} Power under TAR model with $\varpi_1 = \varpi_2 = 0.5$ and mean function \eqref{EQ mean function normal}.}
\end{figure}

\begin{table}[t]
\setlength{\tabcolsep}{3pt}
\centering
\footnotesize
\caption{\footnotesize \label{tab: TAR size} Null rejection rates (\%) at 5\% nominal size under TAR model and mean function \eqref{EQ mean function normal}.}
\vspace{-0.3cm}
\begin{tabular}{ rr  r rrr r rrr  r rrr r rrr}
\toprule
&  \multicolumn{8}{c}{$n=200$} & \multicolumn{8}{c}{$n=400$}\\
\cmidrule(r){3-10}\cmidrule(r){11-18}
$\varpi_2$ & $\varpi_1$& $\KS_n$ & $\shao_n^{(1)}$ & $\shao_n^{(2)}$ & $\shao_n^{(3)}$ & $\zhang_n$ & $\T_n^{(\C)}$ & $\T_n^{(\W)}$ & $\T_n^{(\H)}$ & $\KS_n$ &  $\shao_n^{(1)}$ & $\shao_n^{(2)}$ & $\shao_n^{(3)}$ & $\zhang_n$ & $\T_n^{(\C)}$ & $\T_n^{(\W)}$ & $\T_n^{(\H)}$  \\ 
\cmidrule(r){1-18}

0.8 & 0.8 & 52.0 & 8.7 & 27.1 & 43.0 & 30.9 & 15.0 & 23.6 & 21.6 & 52.8 & 6.2 & 14.1 & 23.3 & 14.5 & 9.6 & 13.1 & 12.2\\
 & 0.5 & 30.5 & 7.2 & 18.8 & 29.7 & 19.6 & 10.0 & 15.7 & 14.6 & 26.7 & 5.0 & 11.7 & 14.8 & 11.9 & 7.0 & 9.1 & 8.8\\
 & 0.3 & 26.4 & 6.9 & 17.2 & 27.1 & 18.1 & 8.9 & 14.1 & 12.6 & 23.7 & 5.2 & 11.6 & 14.3 & 11.6 & 6.3 & 8.3 & 7.9\\
 & 0 & 23.6 & 6.8 & 16.1 & 24.5 & 17.3 & 8.3 & 12.4 & 11.8 & 20.2 & 5.4 & 11.5 & 12.8 & 11.4 & 5.8 & 8.4 & 7.9\\
 & $-0.3$ & 21.1 & 6.7 & 14.5 & 22.6 & 15.7 & 7.7 & 11.6 & 11.4 & 18.5 & 5.3 & 10.8 & 12.4 & 10.4 & 6.0 & 8.1 & 8.2\\
 & $-0.5$ & 20.0 & 6.7 & 14.3 & 21.1 & 14.5 & 7.3 & 10.5 & 10.6 & 17.2 & 5.7 & 10.0 & 12.0 & 10.2 & 6.0 & 8.0 & 7.7\\
 & $-0.8$ & 18.2 & 6.4 & 13.4 & 18.7 & 14.4 & 6.4 & 10.3 & 9.9 & 16.3 & 5.4 & 9.3 & 10.6 & 9.5 & 6.0 & 7.1 & 6.9\\
\cmidrule(r){1-18}

0.5 & 0.8 & 30.2 & 7.1 & 18.8 & 28.9 & 22.0 & 10.9 & 14.6 & 14.9 & 27.8 & 5.6 & 10.1 & 16.7 & 9.8 & 7.2 & 8.3 & 9.0\\
 & 0.5 & 13.5 & 6.3 & 9.1 & 11.0 & 10.3 & 5.6 & 7.4 & 6.6 & 9.5 & 4.4 & 5.5 & 8.1 & 5.7 & 5.1 & 5.4 & 5.2\\
 & 0.3 & 11.7 & 5.5 & 7.9 & 9.0 & 8.5 & 4.7 & 5.8 & 5.9 & 9.0 & 4.6 & 4.8 & 6.9 & 5.5 & 5.4 & 5.0 & 5.4\\
 & 0 & 10.6 & 5.4 & 6.8 & 7.0 & 7.6 & 4.3 & 5.4 & 5.3 & 8.2 & 4.3 & 5.3 & 6.9 & 5.1 & 5.1 & 4.8 & 5.1\\
 & $-0.3$ & 8.5 & 5.0 & 5.9 & 5.7 & 5.9 & 4.1 & 4.9 & 5.2 & 7.5 & 4.9 & 5.2 & 5.9 & 4.4 & 5.1 & 4.6 & 5.2\\
 & $-0.5$ & 8.3 & 4.9 & 5.4 & 5.3 & 5.1 & 4.2 & 4.8 & 5.1 & 6.5 & 5.2 & 5.2 & 5.8 & 4.2 & 5.5 & 4.9 & 5.1\\
 & $-0.8$ & 7.3 & 4.3 & 4.8 & 4.1 & 4.2 & 4.1 & 4.9 & 4.8 & 5.1 & 5.4 & 4.8 & 4.8 & 3.8 & 5.0 & 4.8 & 4.8\\
\cmidrule(r){1-18}

$-0.5$ & 0.8 & 19.8 & 6.9 & 14.7 & 21.0 & 14.8 & 8.3 & 10.7 & 10.4 & 16.4 & 5.3 & 7.7 & 10.4 & 9.0 & 5.1 & 7.1 & 6.7\\
 & 0.5 & 8.2 & 4.3 & 5.8 & 4.4 & 5.3 & 4.1 & 5.2 & 4.5 & 6.3 & 3.9 & 4.1 & 4.6 & 3.8 & 4.4 & 5.0 & 4.9\\
 & 0.3 & 7.6 & 4.2 & 3.5 & 2.4 & 3.7 & 4.0 & 4.5 & 4.6 & 5.4 & 4.0 & 3.6 & 3.5 & 3.2 & 4.5 & 4.8 & 4.6\\
 & 0 & 10.3 & 4.4 & 2.4 & 1.4 & 2.0 & 4.1 & 4.7 & 4.3 & 5.9 & 3.8 & 2.7 & 2.8 & 2.7 & 5.3 & 4.8 & 5.0\\
 & $-0.3$ & 16.6 & 4.2 & 1.9 & 0.6 & 1.4 & 4.1 & 5.1 & 5.1 & 9.5 & 3.3 & 2.6 & 2.7 & 2.2 & 5.6 & 5.0 & 5.0\\
 & $-0.5$ & 24.4 & 3.5 & 1.4 & 0.5 & 1.3 & 4.8 & 5.6 & 5.6 & 14.4 & 3.1 & 2.3 & 2.0 & 2.1 & 5.1 & 4.6 & 4.9\\
 & $-0.8$ & 36.6 & 3.4 & 0.9 & 0.2 & 0.5 & 6.0 & 6.4 & 6.3 & 28.4 & 4.1 & 1.6 & 0.9 & 1.4 & 5.2 & 4.3 & 4.8\\
\cmidrule(r){1-18}

$-0.8$ & 0.8 & 17.9 & 6.5 & 13.1 & 19.4 & 14.1 & 7.6 & 10.2 & 9.6 & 16.1 & 4.9 & 7.1 & 9.6 & 7.5 & 4.8 & 7.1 & 6.4\\
 & 0.5 & 8.5 & 4.8 & 4.7 & 4.3 & 4.8 & 3.9 & 4.5 & 4.3 & 5.9 & 4.2 & 3.4 & 4.3 & 3.3 & 4.2 & 4.7 & 4.9\\
 & 0.3 & 8.4 & 4.3 & 3.5 & 2.2 & 3.3 & 3.6 & 4.0 & 4.8 & 5.9 & 3.9 & 3.5 & 2.8 & 2.8 & 4.3 & 4.3 & 4.4\\
 & 0 & 14.6 & 4.3 & 1.8 & 0.9 & 1.8 & 3.9 & 4.3 & 4.2 & 7.3 & 4.1 & 2.9 & 2.1 & 2.5 & 4.8 & 4.9 & 5.2\\
 & $-0.3$ & 26.6 & 3.4 & 1.4 & 0.4 & 1.1 & 4.3 & 5.4 & 5.1 & 17.0 & 3.1 & 2.1 & 2.0 & 2.1 & 5.0 & 4.6 & 4.7\\
 & $-0.5$ & 36.7 & 3.1 & 1.2 & 0.3 & 0.9 & 5.6 & 6.3 & 5.8 & 27.1 & 3.2 & 2.1 & 1.1 & 1.4 & 5.4 & 5.0 & 5.2\\
 & $-0.8$ & 43.7 & 2.1 & 0.4 & 0.1 & 0.2 & 8.3 & 9.5 & 8.8 & 39.0 & 2.4 & 0.9 & 0.2 & 0.6 & 7.2 & 7.7 & 6.9\\

\bottomrule
\end{tabular} 
\end{table}

\clearpage

\begin{figure}[H]
\centering
\includegraphics[width=\linewidth]{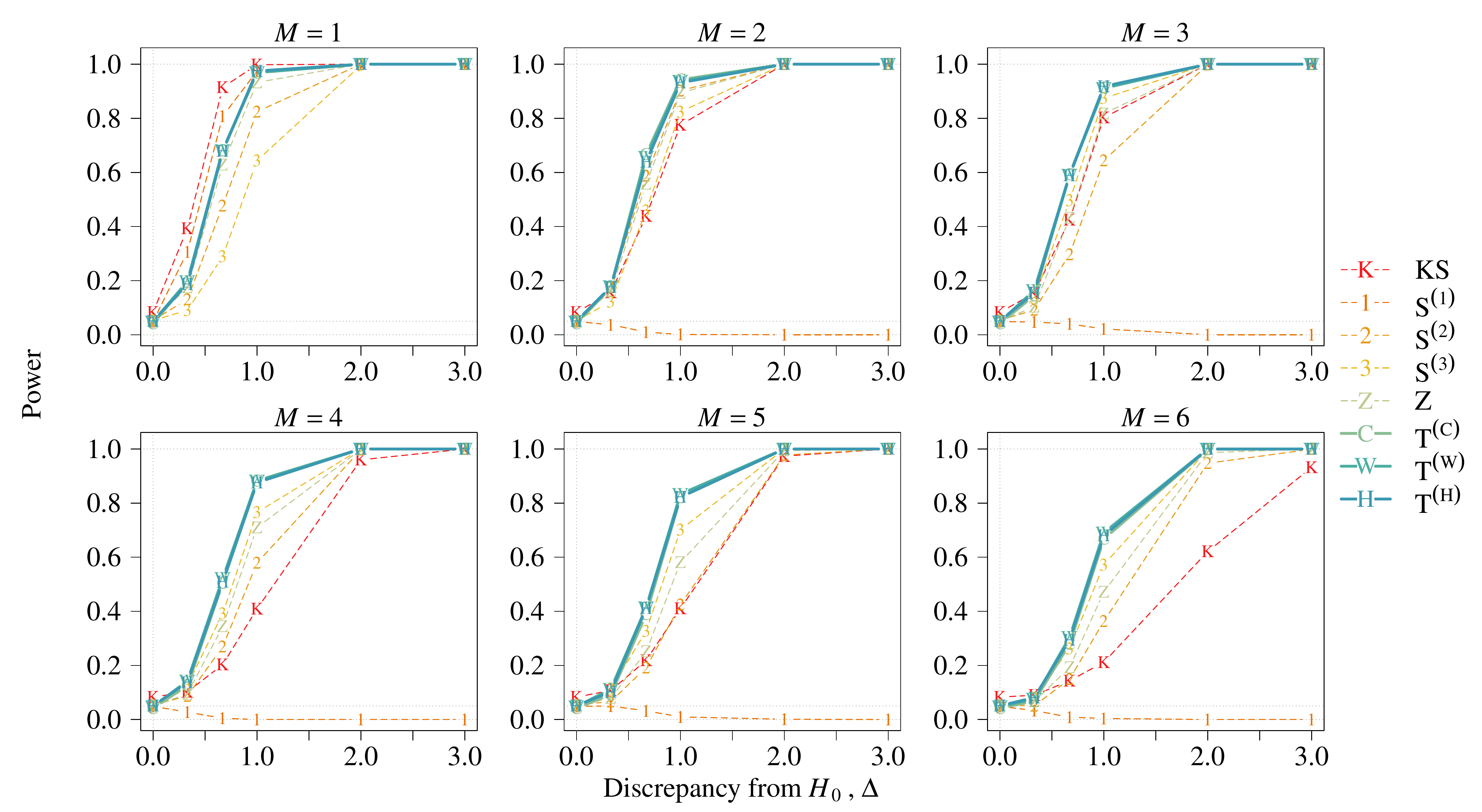}
\vspace{-0.3cm}
\caption{\footnotesize \label{fig: power TAR ar-0.5 TAR0.5} Power under TAR model with $-\varpi_1 = \varpi_2 = 0.5$ and mean function \eqref{EQ mean function normal}.}
\end{figure}

\begin{figure}[H]
\centering
\includegraphics[width=\linewidth]{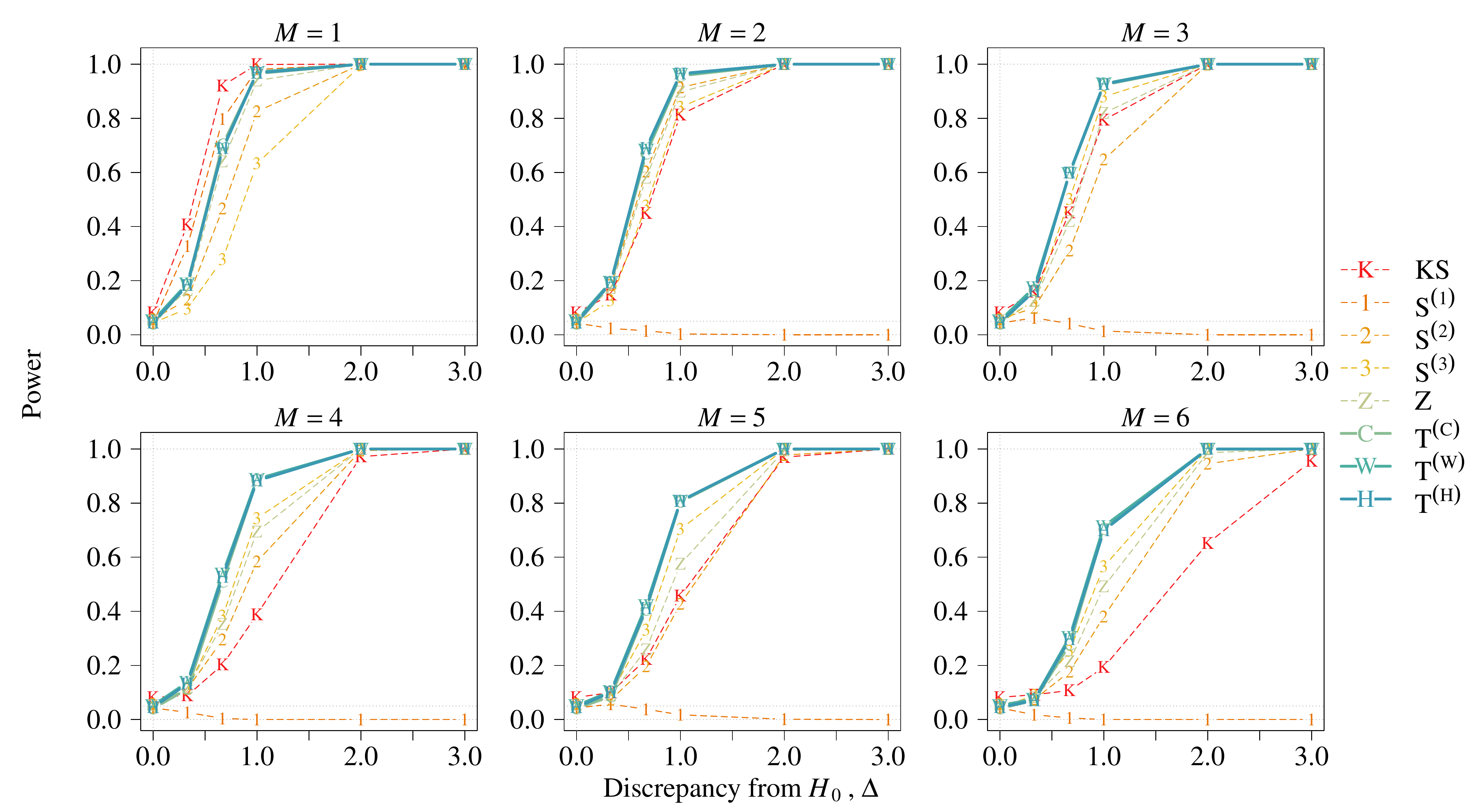}
\vspace{-0.3cm}
\caption{\footnotesize \label{fig: power TAR ar0.5 TAR-0.5} Power under TAR model with $\varpi_1 = -\varpi_2 = 0.5$ and mean function \eqref{EQ mean function normal}.}
\end{figure}

\begin{figure}[H]
\centering
\includegraphics[width=\linewidth]{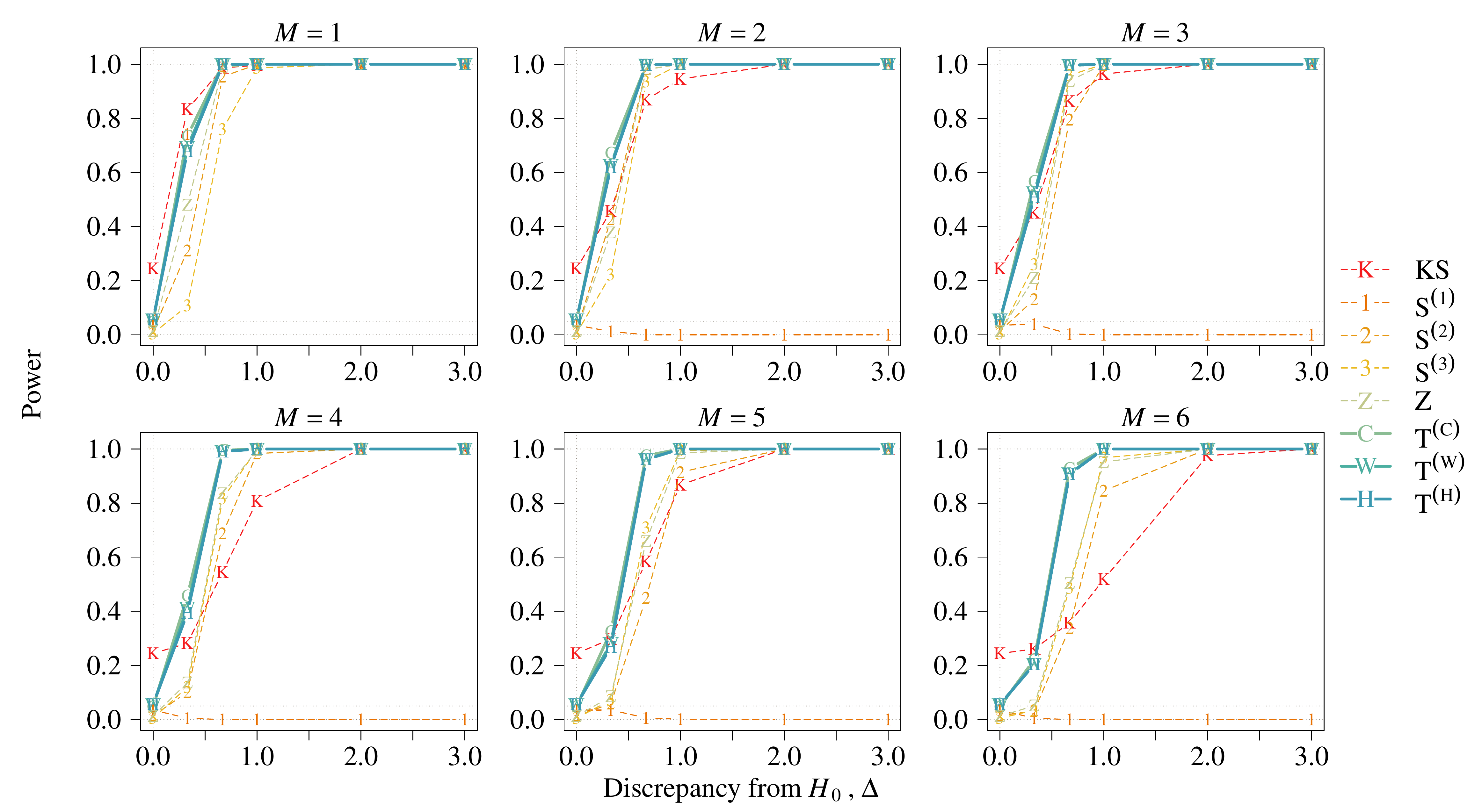}
\vspace{-0.3cm}
\caption{\footnotesize \label{fig: power TAR ar-0.5 tar-0.5} Power under TAR model with $\varpi_1 = \varpi_2 = -0.5$ and mean function \eqref{EQ mean function normal}.}
\end{figure}

\begin{figure}[H]
\centering
\includegraphics[width=\linewidth]{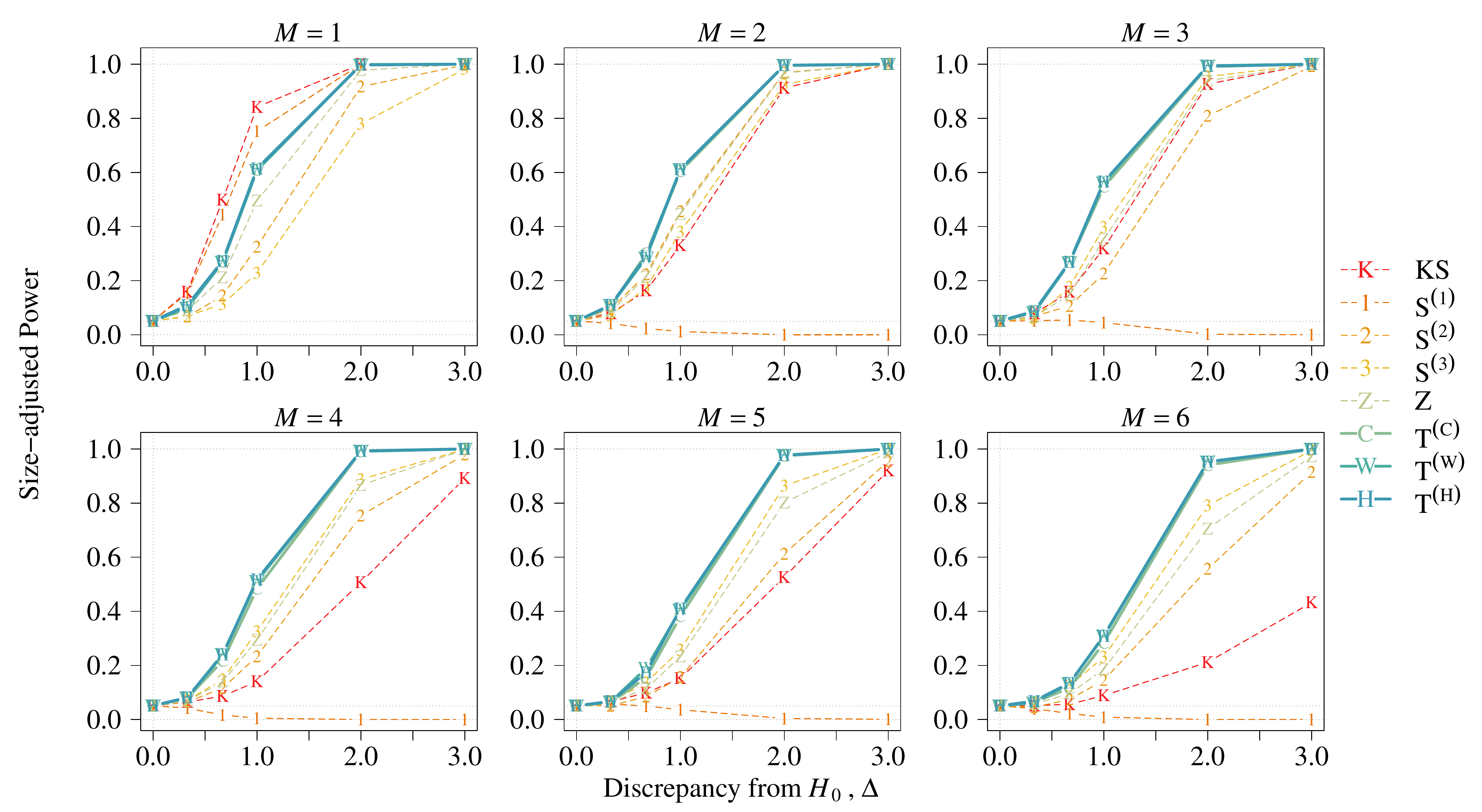}
\vspace{-0.3cm}
\caption{\footnotesize \label{fig: size adjusted power TAR ar0.5 TAR0.5} Size-adjusted power under TAR model with $\varpi_1 = \varpi_2 = 0.5$ and mean function \eqref{EQ mean function normal}.}
\end{figure}

\begin{figure}[H]
\centering
\includegraphics[width=\linewidth]{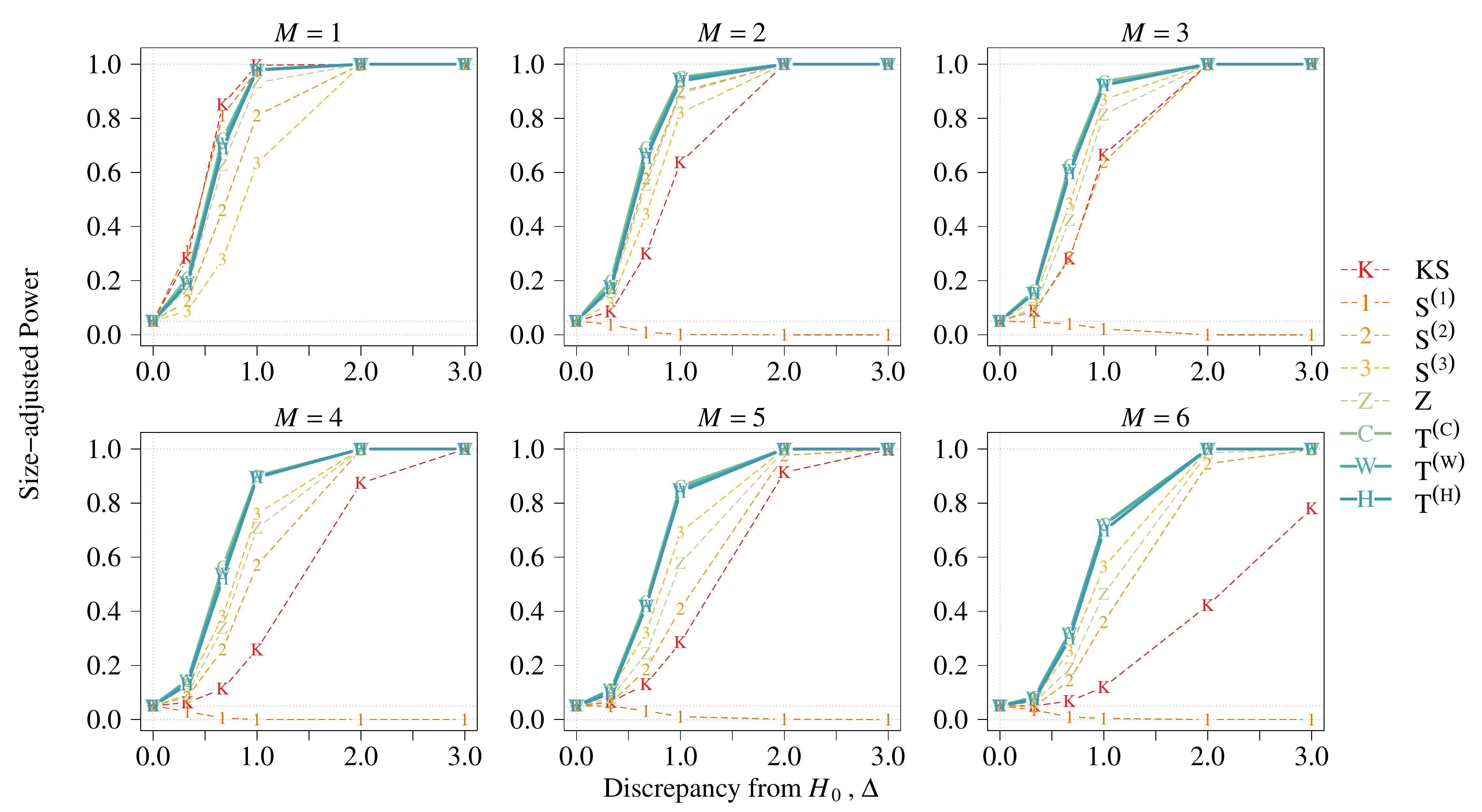}
\vspace{-0.3cm}
\caption{\footnotesize \label{fig: size adjusted power TAR ar-0.5 TAR0.5} Size-adjusted power under TAR model with $-\varpi_1 = \varpi_2 = 0.5$ and mean function \eqref{EQ mean function normal}.}
\end{figure}

\begin{figure}[H]
\centering
\includegraphics[width=\linewidth]{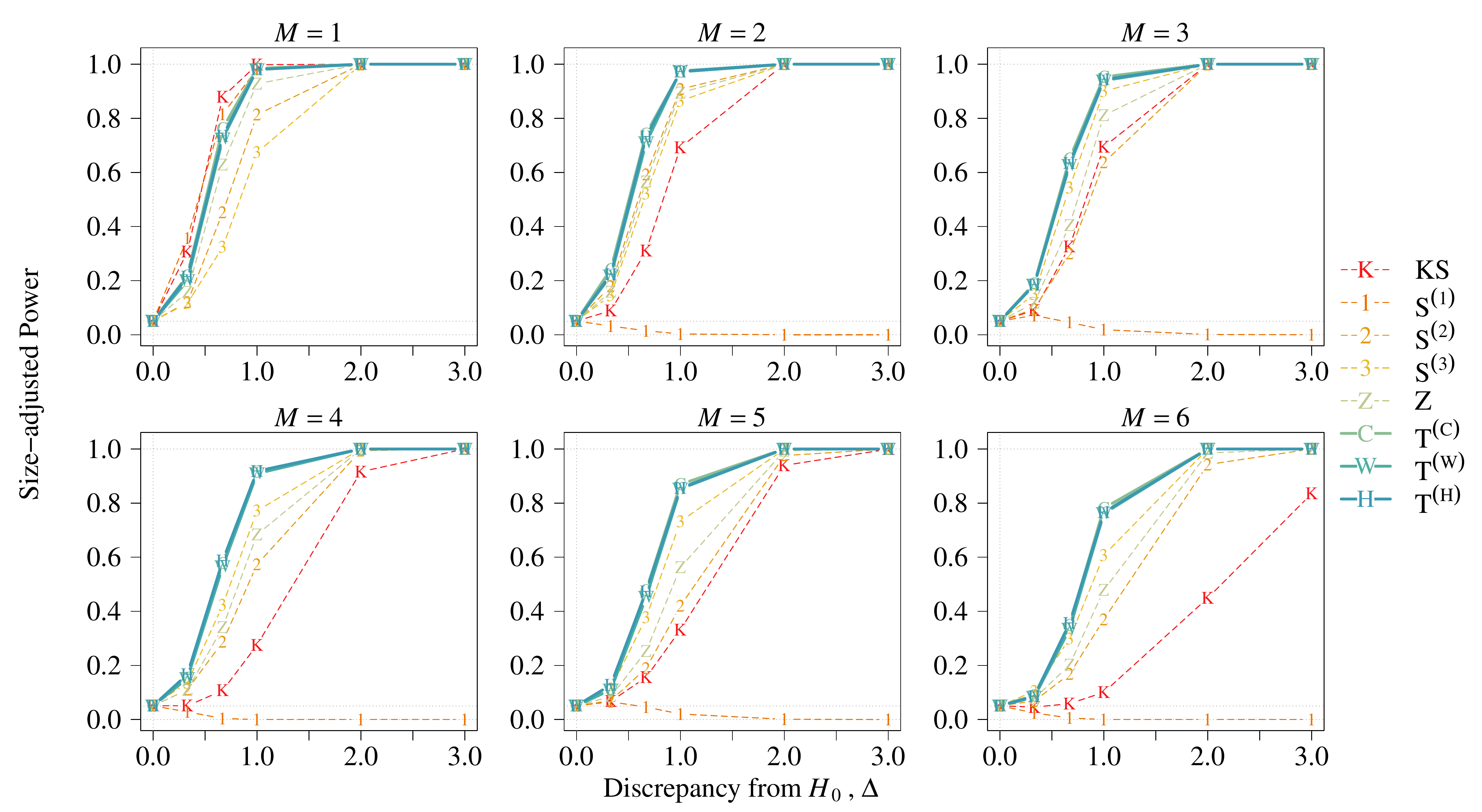}
\vspace{-0.3cm}
\caption{\footnotesize \label{fig: size adjusted power TAR ar0.5 TAR-0.5} Size-adjusted power under TAR model with $\varpi_1 = -\varpi_2 = 0.5$ and mean function \eqref{EQ mean function normal}.}
\end{figure}

\begin{figure}[H]
\centering
\includegraphics[width=\linewidth]{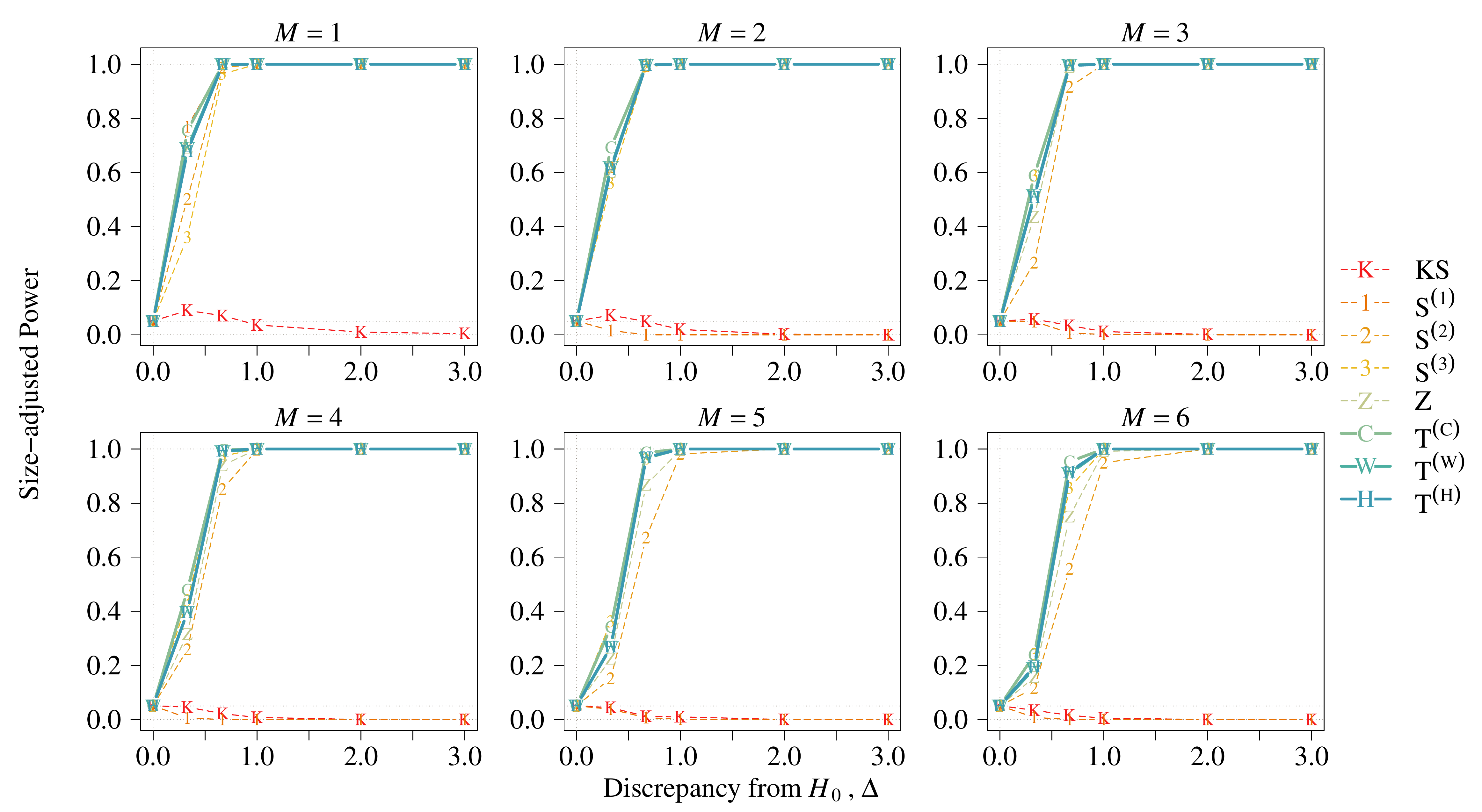}
\vspace{-0.3cm}
\caption{\footnotesize \label{fig: size adjusted power TAR ar-0.5 tar-0.5} Size-adjusted power under TAR model with $\varpi_1 = \varpi_2 = -0.5$ and mean function \eqref{EQ mean function normal}.}
\end{figure}

\subsection{NAR Model}
\label{sec: NAR model}
The data under the NAR model has form $X_i - \mu_i = \varpi\left| X_{i-1} - \mu_{i-1} \right| + \varepsilon_i\sqrt{1 - \varpi^2}$, where $\varpi \in (0,1)$. From Table \ref{tab: NAR size}, the self-normalized tests have similar size distortion when $\left| \varpi \right| \leq 0.5$, while $\KS_n$ has slightly larger size. When $\left| \varpi \right| = 0.8$, $\shao_n^{(1)}$ and $\T_n^{(\C)}$ are more accurate than other tests while $\T_n^{(\W)}$ and $\T_n^{(\H)}$ are less size distorted comparing to $\shao_n^{(2)}$, $\shao_n^{(3)}$ and $\zhang_n$. A similar result regarding to the power is observed in Figures \ref{fig: power nar0.5}--\ref{fig: size adjusted power nar-0.5}. After adjusting the size, our proposed tests have the highest power in the multiple-CP case and lose the least power among the multiple-CP tests in the single CP case.

\begin{table}[t]
\setlength{\tabcolsep}{3pt}
\centering
\footnotesize
\caption{\footnotesize \label{tab: NAR size} Null rejection rates (\%) at 5\% nominal size under NAR model and mean function \eqref{EQ mean function normal}.}
\vspace{-0.3cm}
\begin{tabular}{ r  r rrr r rrr  r rrr r rrr}
\toprule
&  \multicolumn{8}{c}{$n=200$} & \multicolumn{8}{c}{$n=400$}\\ 
\cmidrule(r){2-9}\cmidrule(r){10-17}
$\varpi$& $\KS_n$ & $\shao_n^{(1)}$ & $\shao_n^{(2)}$ & $\shao_n^{(3)}$ & $\zhang_n$ & $\T_n^{(\C)}$ & $\T_n^{(\W)}$ & $\T_n^{(\H)}$ & $\KS_n$ &  $\shao_n^{(1)}$ & $\shao_n^{(2)}$ & $\shao_n^{(3)}$ & $\zhang_n$ & $\T_n^{(\C)}$ & $\T_n^{(\W)}$ & $\T_n^{(\H)}$  \\ 
\cmidrule(r){1-17}
0.8 & 18.1 & 6.4 & 13.1 & 19.3 & 14.2 & 7.6 & 10.2 & 9.6 & 16.0 & 4.9 & 7.2 & 9.5 & 7.6 & 4.8 & 7.1 & 6.4\\
0.5 & 8.2 & 4.3 & 5.8 & 4.4 & 5.3 & 4.1 & 5.2 & 4.5 & 6.3 & 3.9 & 4.1 & 4.6 & 3.8 & 4.4 & 5.0 & 4.9\\
0.3 & 7.2 & 4.7 & 4.0 & 3.1 & 3.9 & 4.1 & 4.6 & 4.7 & 5.7 & 4.2 & 3.4 & 4.4 & 3.3 & 5.0 & 4.8 & 4.7\\
0 & 7.3 & 4.8 & 4.4 & 2.5 & 3.3 & 4.1 & 5.0 & 4.9 & 5.0 & 3.9 & 3.2 & 4.8 & 3.3 & 5.1 & 4.6 & 4.5\\
$-0.3$ & 7.0 & 5.1 & 4.7 & 3.1 & 4.6 & 4.3 & 4.5 & 4.7 & 4.5 & 4.2 & 4.1 & 4.6 & 3.7 & 4.8 & 4.0 & 4.3\\
$-0.5$ & 8.3 & 4.9 & 5.4 & 5.3 & 5.1 & 4.2 & 4.8 & 5.1 & 6.5 & 5.2 & 5.2 & 5.8 & 4.2 & 5.5 & 4.9 & 5.1\\
$-0.8$ & 18.4 & 6.3 & 13.2 & 18.7 & 14.2 & 6.4 & 10.4 & 9.9 & 16.4 & 5.4 & 9.3 & 10.6 & 9.5 & 6.0 & 7.1 & 6.8\\

\bottomrule
\end{tabular} 
\end{table}

\begin{figure}[t]
\centering
\includegraphics[width=\linewidth]{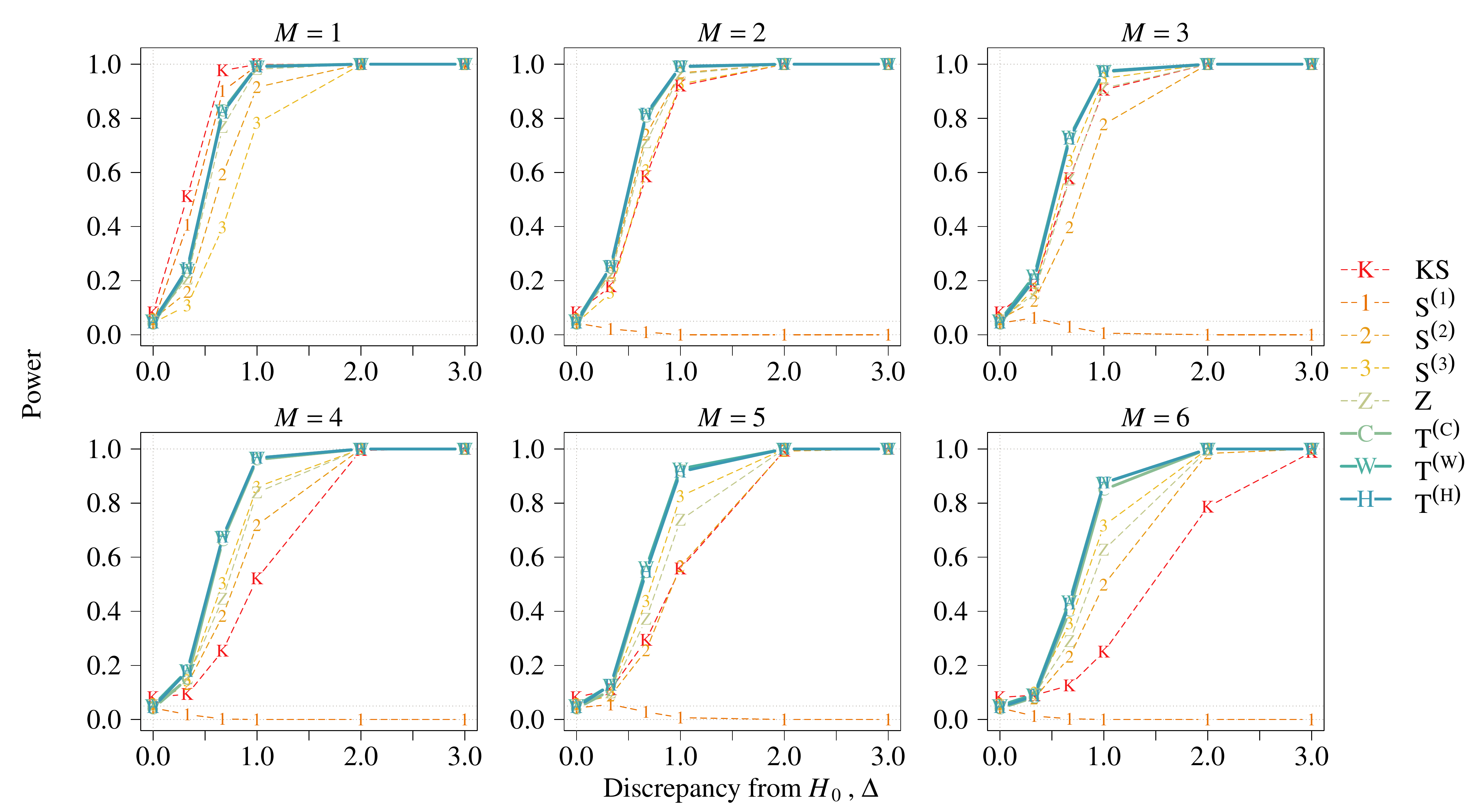}
\vspace{-0.3cm}
\caption{\footnotesize \label{fig: power nar0.5} Power under NAR model with $\varpi = 0.5$ and mean function \eqref{EQ mean function normal}.}
\end{figure}

\begin{figure}[H]
\centering
\includegraphics[width=\linewidth]{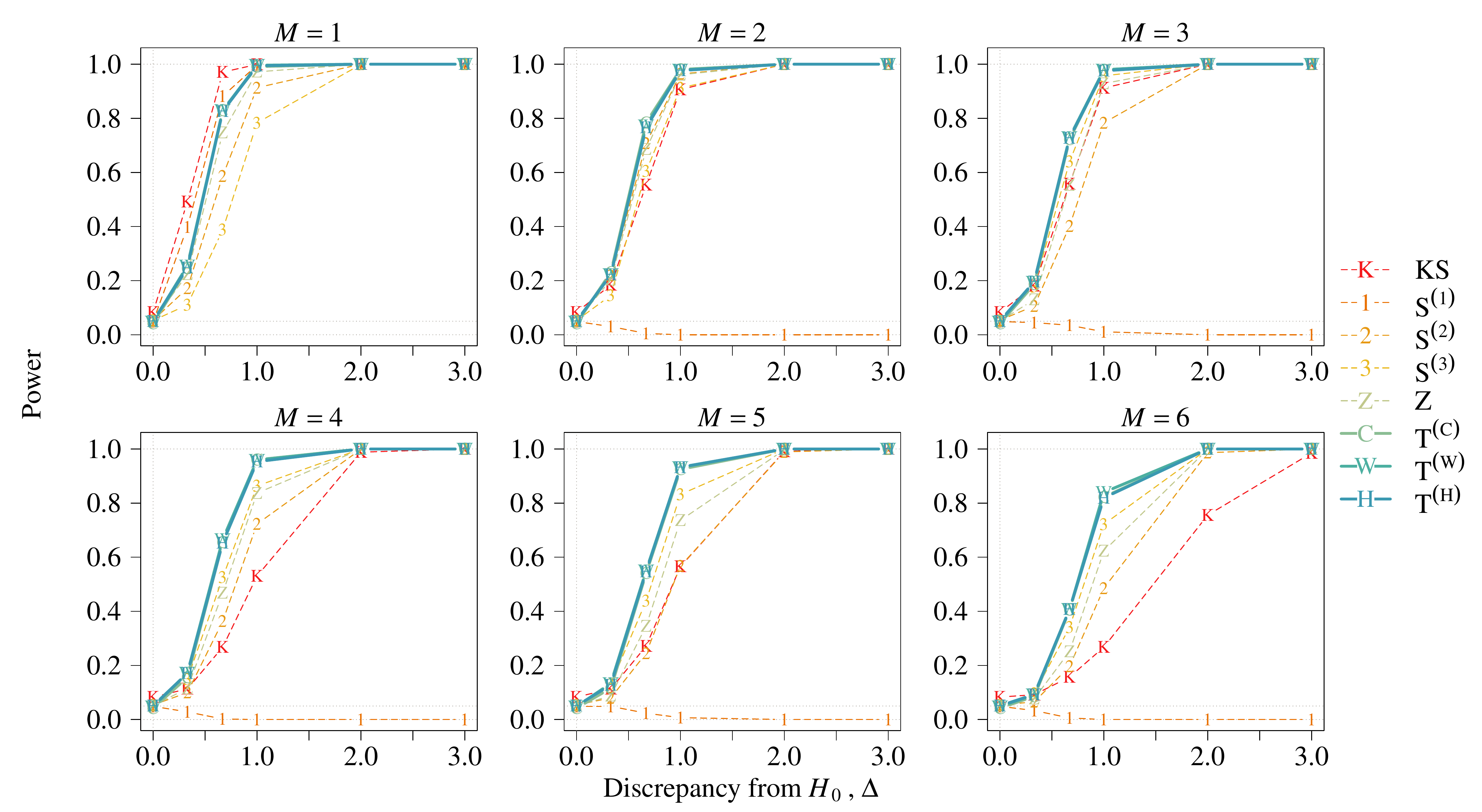}
\vspace{-0.3cm}
\caption{\footnotesize \label{fig: power nar-0.5} Power under NAR model with $\varpi = -0.5$ and mean function \eqref{EQ mean function normal}.}
\end{figure}

\begin{figure}[H]
\centering
\includegraphics[width=\linewidth]{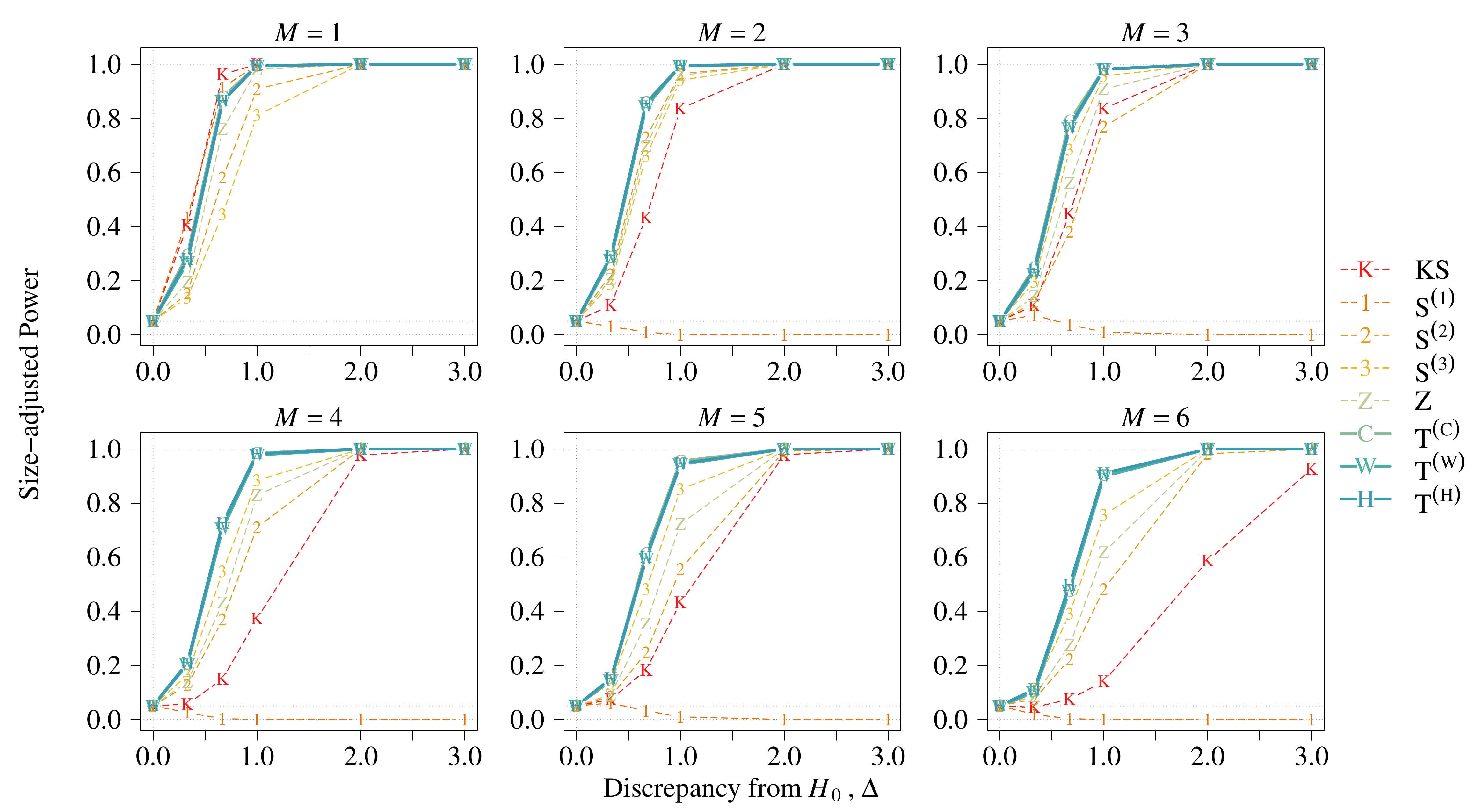}
\vspace{-0.3cm}
\caption{\footnotesize \label{fig: size adjusted power nar0.5} Size-adjusted power under NAR model with $\varpi = 0.5$ and mean function \eqref{EQ mean function normal}.}
\end{figure}

\begin{figure}[H]
\centering
\includegraphics[width=\linewidth]{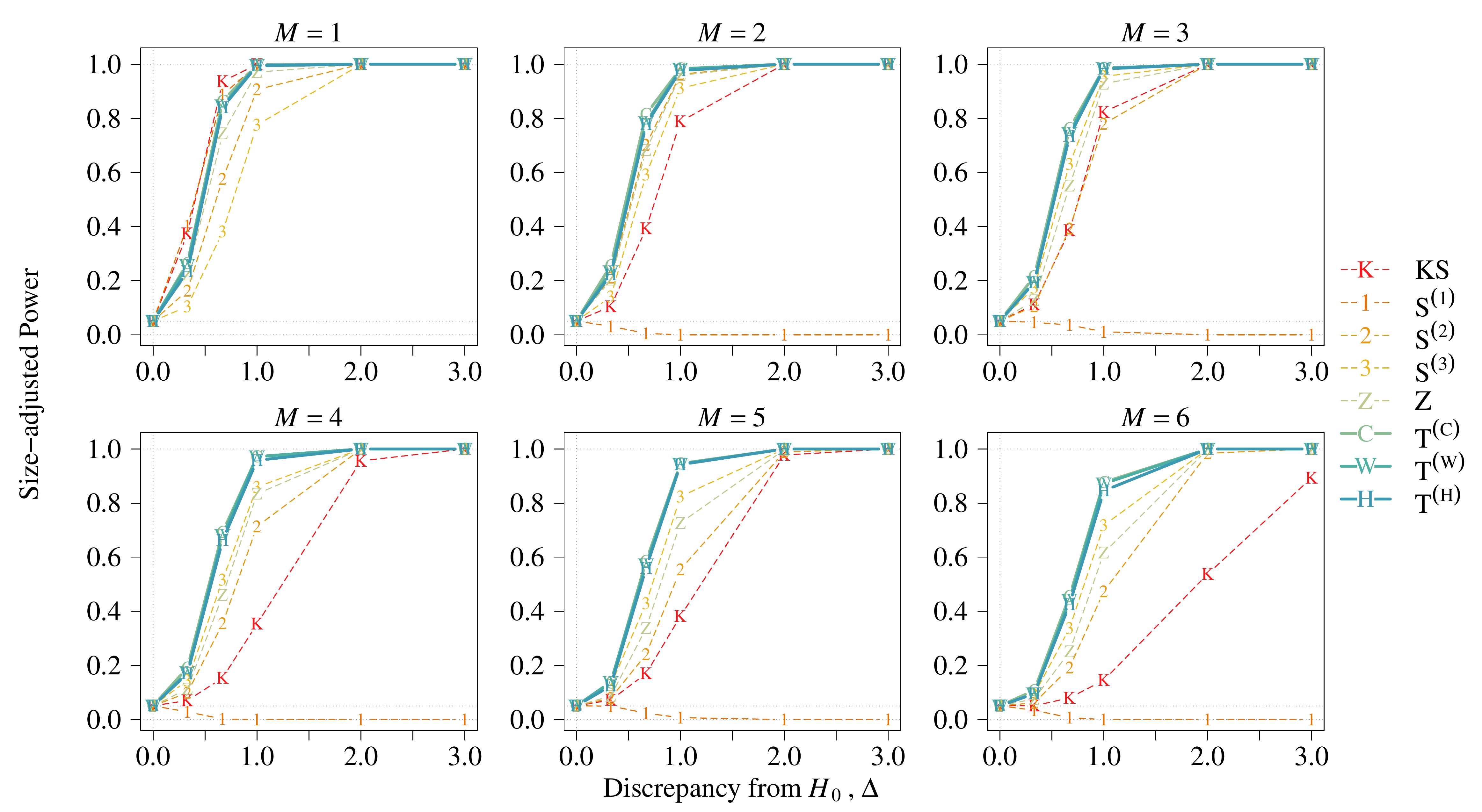}
\vspace{-0.3cm}
\caption{\footnotesize \label{fig: size adjusted power nar-0.5} Size-adjusted power under NAR model with $\varpi = -0.5$ and mean function \eqref{EQ mean function normal}.}
\end{figure}

\subsection{BAR Model}
\label{sec: BAR model}
Supplementing the result in the main text, we further report the power when $\vartheta = 0.8, -0.8$. 

\begin{figure}[H]
\centering
\includegraphics[width=0.95\linewidth]{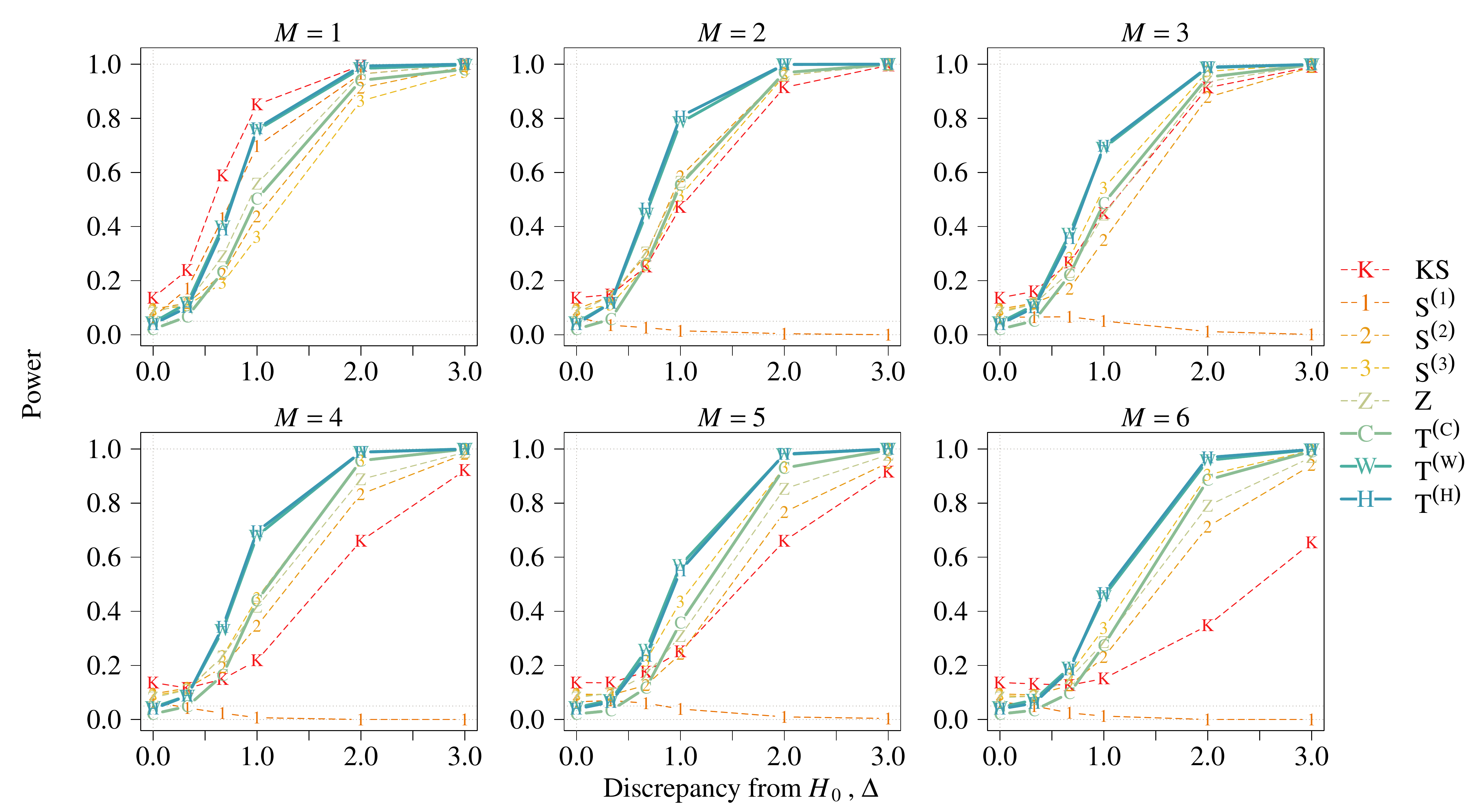}
\vspace{-0.3cm}
\caption{\footnotesize \label{fig: raw power ar0.5g0.5} Power under BAR model with $\varpi = \vartheta = 0.5$ and mean function \eqref{EQ mean function normal}.}
\end{figure}

\begin{figure}[H]
\centering
\includegraphics[width=\linewidth]{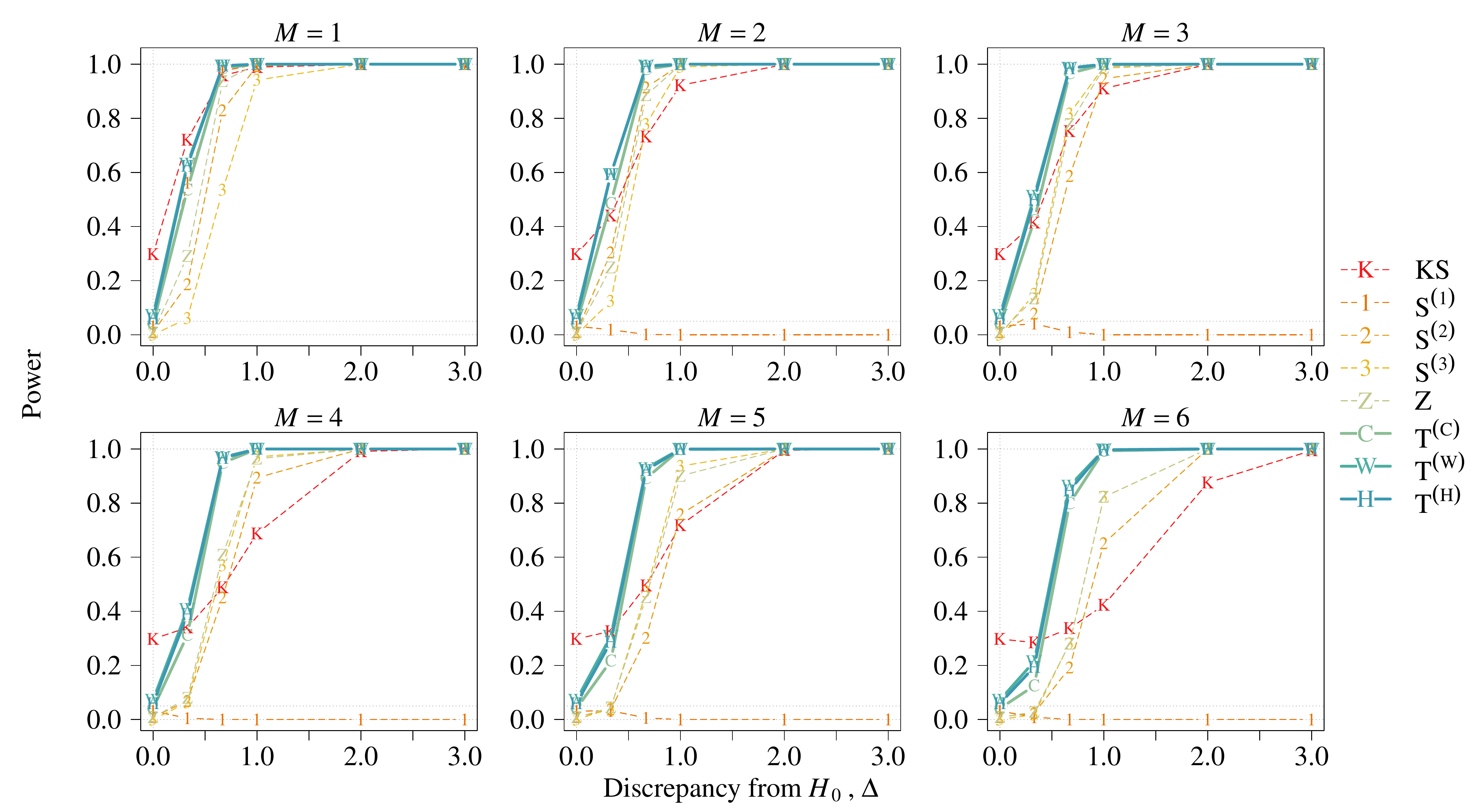}
\vspace{-0.3cm}
\caption{\footnotesize \label{fig: power bar ar-0.5 gamma0.5} Power under BAR model with $-\varpi = \vartheta = 0.5$ and mean function \eqref{EQ mean function normal}.}
\end{figure}

\begin{figure}[H]
\centering
\includegraphics[width=\linewidth]{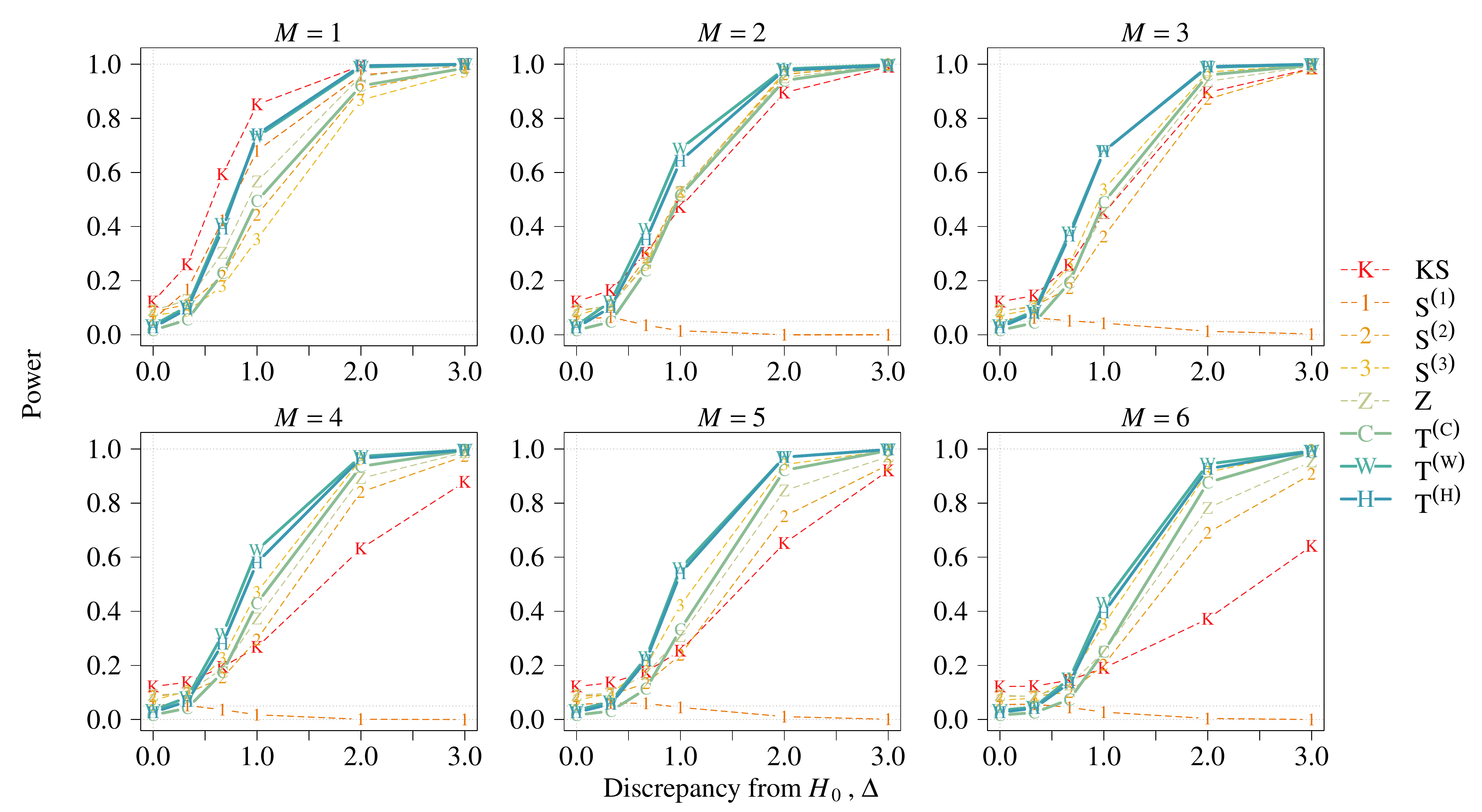}
\vspace{-0.3cm}
\caption{\footnotesize \label{fig: raw power ar0.5g-0.5} Power under BAR model with $\varpi = -\vartheta= 0.5$ and mean function \eqref{EQ mean function normal}.}
\end{figure}

\begin{figure}[H]
\centering
\includegraphics[width=\linewidth]{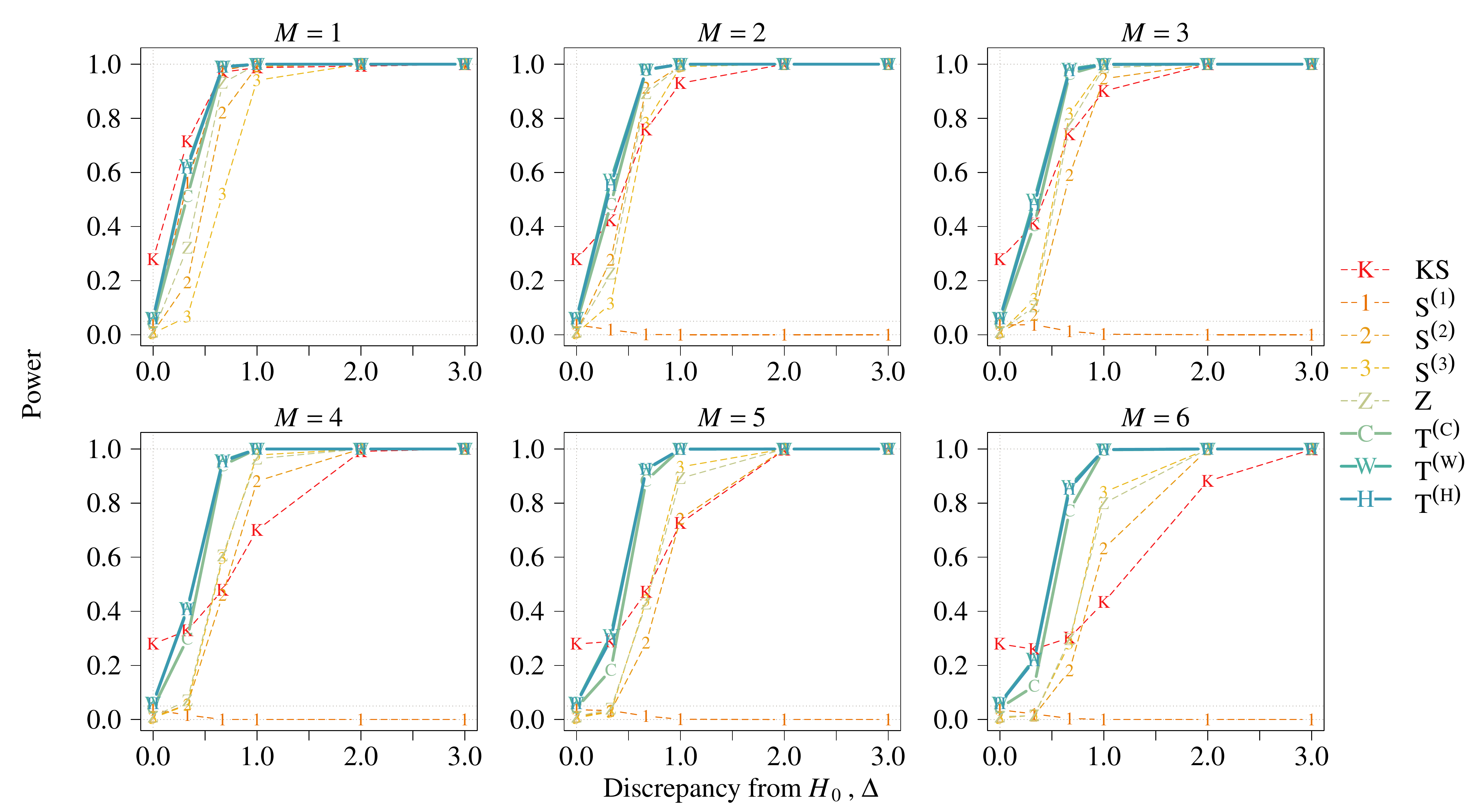}
\vspace{-0.3cm}
\caption{\footnotesize \label{fig: power bar ar-0.5 gamma-0.5} Power under BAR model with $\varpi = \vartheta = -0.5$ and mean function \eqref{EQ mean function normal}.}
\end{figure}

\begin{figure}[H]
\centering
\includegraphics[width=\linewidth]{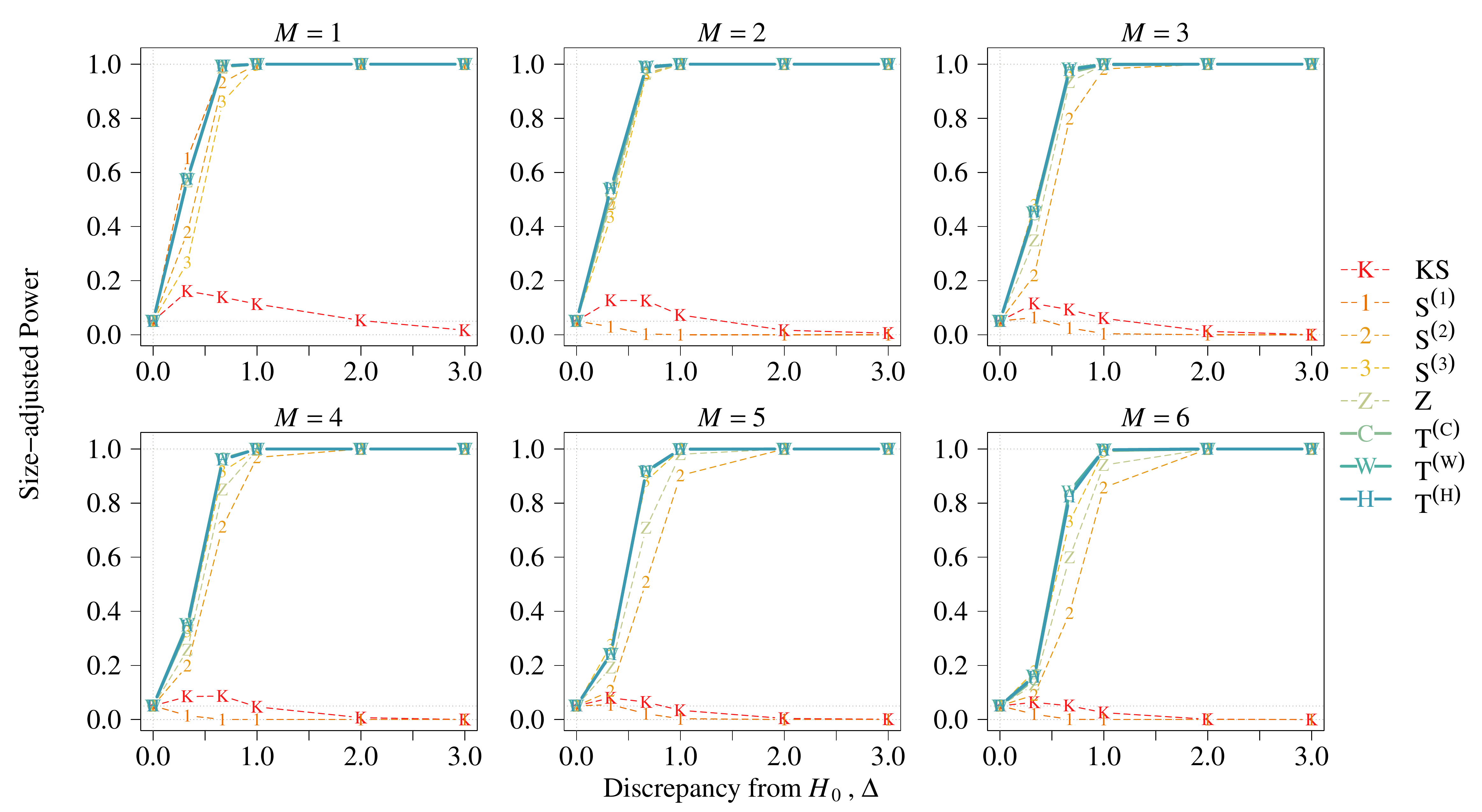}
\vspace{-0.3cm}
\caption{\footnotesize \label{fig: size adjusted power bar ar-0.5 gamma0.5} Size-adjusted power under BAR model with $-\varpi = \vartheta = 0.5$ and mean function \eqref{EQ mean function normal}.}
\end{figure}

\begin{figure}[H]
\centering
\includegraphics[width=\linewidth]{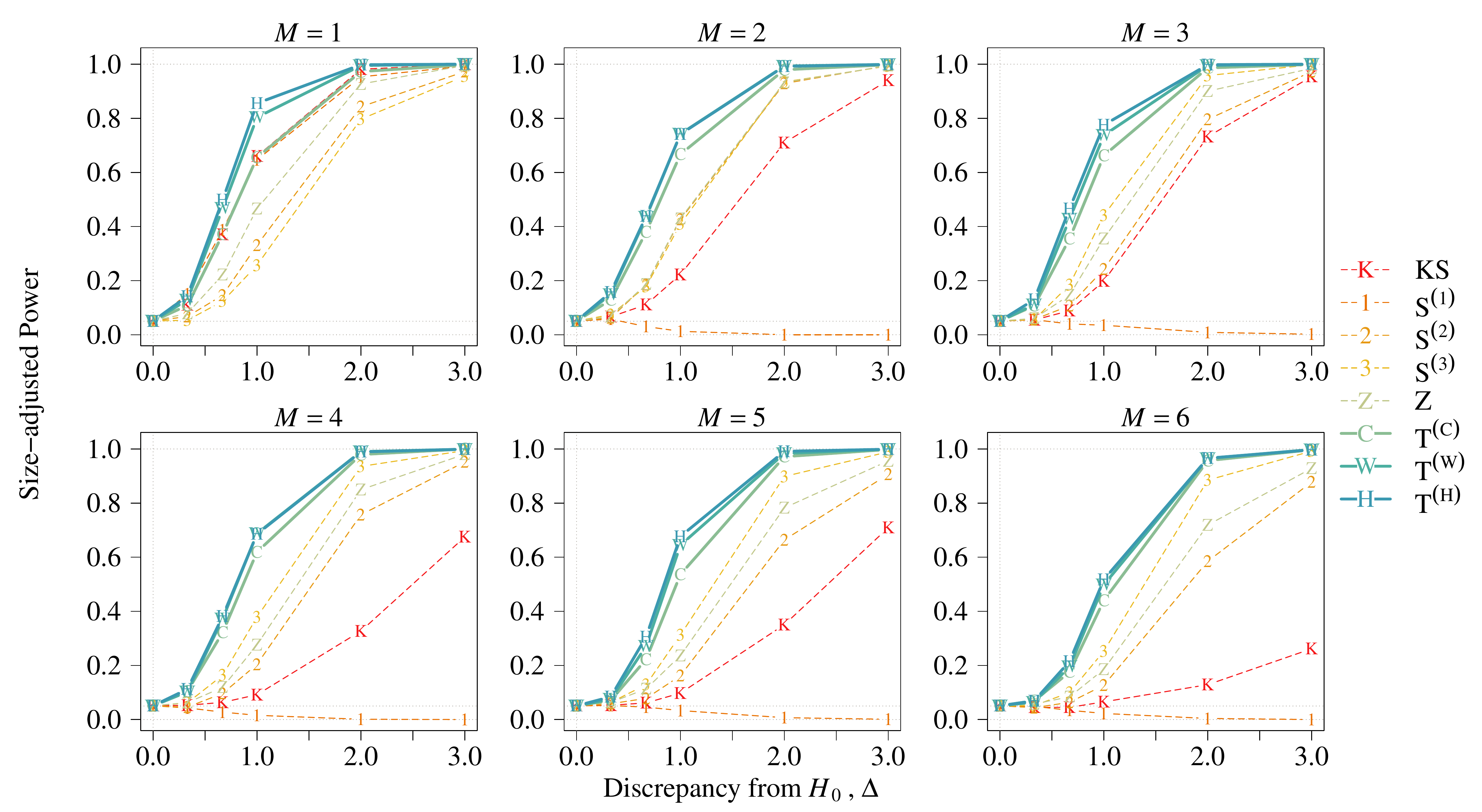}
\vspace{-0.3cm}
\caption{\footnotesize \label{fig: size adjusted power ar0.5g-0.5}  Size-adjusted power under BAR model with $\varpi = -\vartheta = 0.5$ and mean function \eqref{EQ mean function normal}.}
\end{figure}

\begin{figure}[H]
\centering
\includegraphics[width=\linewidth]{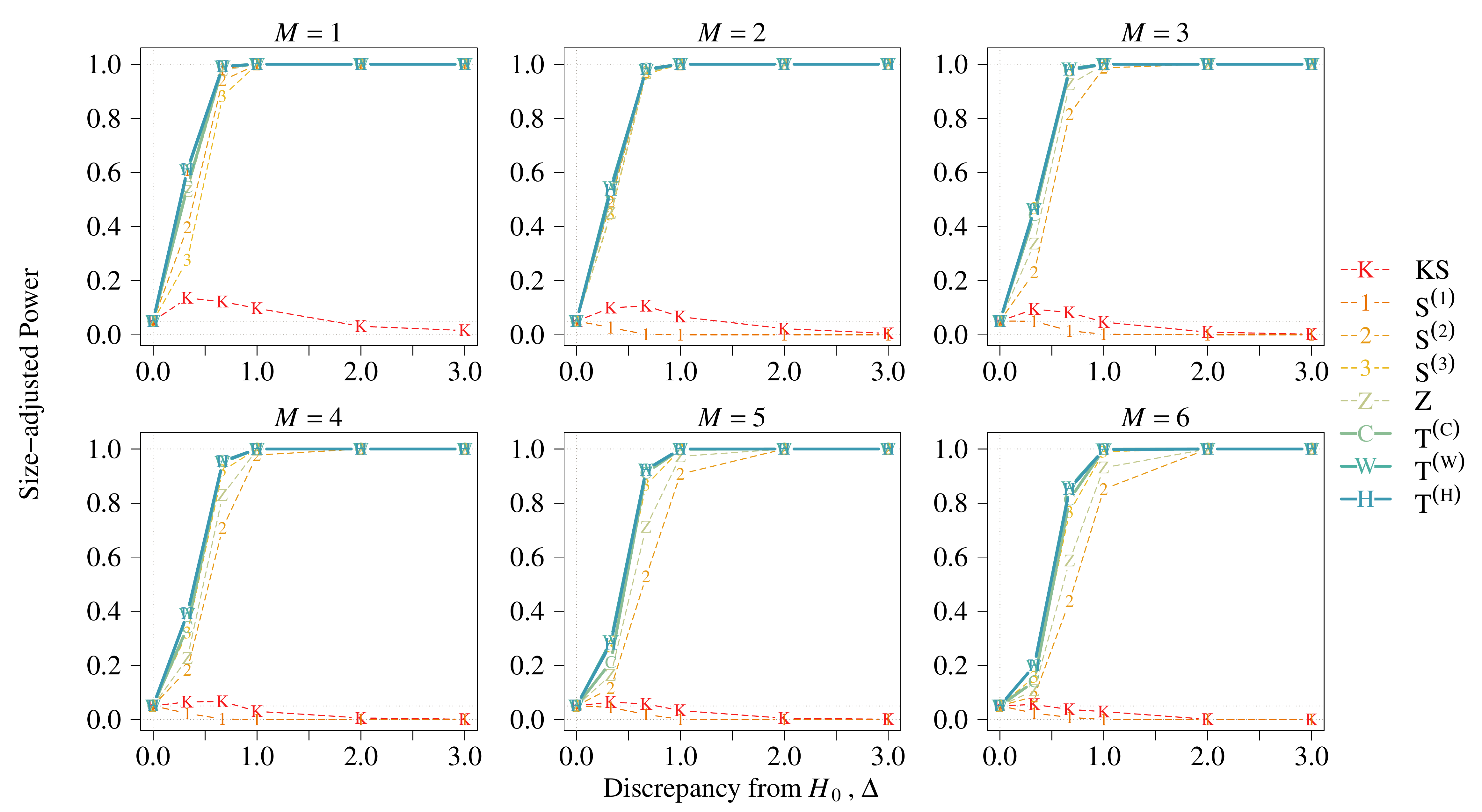}
\vspace{-0.3cm}
\caption{\footnotesize \label{fig: size adjusted power bar ar-0.5 gamma-0.5} Size-adjusted power under BAR model with $\varpi = \vartheta = -0.5$ and mean function \eqref{EQ mean function normal}.}
\end{figure}

\subsection{Non-Gaussian time series}
\label{sec: non-g time series}
Consider again the signal-plus-noise model, we further investigate the performance under the AR model and the BAR model if the innovations $\varepsilon$ follow the $t$-distribution with degree of freedom $5$ instead of the standard normal distribution. Tables \ref{tab: ar size non-G t dist} and \ref{tab: bar size non-G t dist} present the size under the AR model and the BAR model respectively. Figures \ref{fig: ar05 raw power non-G t dist}--\ref{fig: ar-05 adjusted power non-G t dist} and Figures \ref{fig: ar05g05 raw power non-G t dist}--\ref{fig: ar-05g05 adjusted powernon-G t dist} present the power under the AR model and the BAR model respectively. Similar result is also observed. Our proposed tests, $\T_n^{(\C)}$, $\T_n^{(\W)}$ and $\T_n^{(\H)}$, have higher size accuracy than other multiple-CP tests. Moreover, our tests have the largest power under multiple-CP case while lose the least power among the multiple-CP tests comparing to AMOC tests $\KS_n$ and $\shao_n^{(1)}$ in single CP case.

\begin{figure}[H]
\centering
\includegraphics[width=\linewidth]{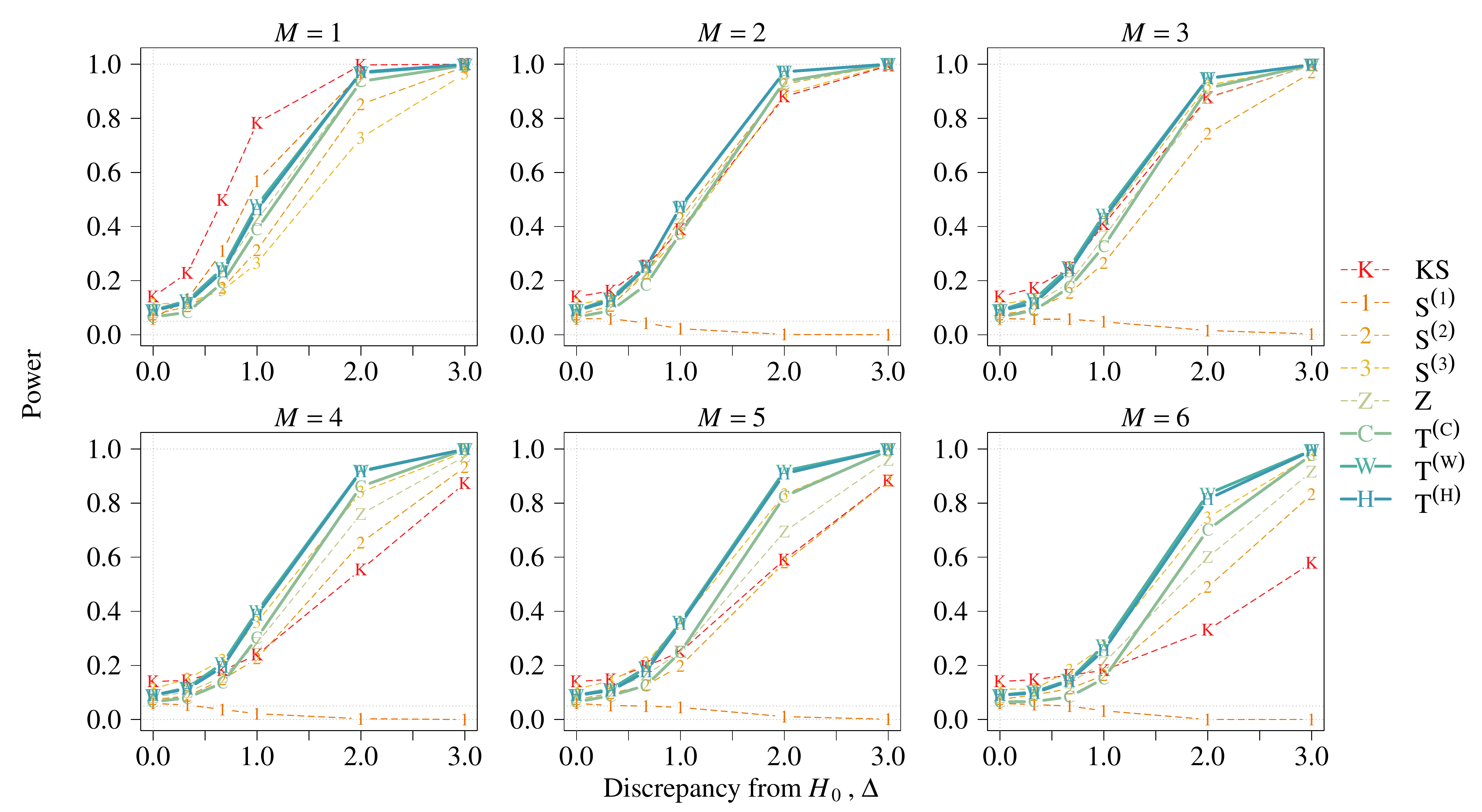}
\vspace{-0.3cm}
\caption{\footnotesize \label{fig: ar05 raw power non-G t dist} Power under AR model with $\varpi = 0.5$, mean function \eqref{EQ mean function normal} and innovations following $t$-distribution.}
\end{figure}

\begin{table}[t]
\setlength{\tabcolsep}{3pt}
\centering
\footnotesize
\caption{\footnotesize \label{tab: ar size non-G t dist} Null rejection rates (\%) at 5\% nominal size under AR model with innovations following $t$-distribution.}
\vspace{-0.3cm}
\begin{tabular}{ r  r rrr r rrr  r rrr r rrr}
\toprule
&  \multicolumn{8}{c}{$n=200$} & \multicolumn{8}{c}{$n=400$}\\
\cmidrule(r){2-9}\cmidrule(r){10-17}
$\varpi$& $\KS_n$ & $\shao_n^{(1)}$ & $\shao_n^{(2)}$ & $\shao_n^{(3)}$ & $\zhang_n$ & $\T_n^{(\C)}$ & $\T_n^{(\W)}$ & $\T_n^{(\H)}$ & $\KS_n$ &  $\shao_n^{(1)}$ & $\shao_n^{(2)}$ & $\shao_n^{(3)}$ & $\zhang_n$ & $\T_n^{(\C)}$ & $\T_n^{(\W)}$ & $\T_n^{(\H)}$  \\ 
\cmidrule(r){1-17}
0.8 & 53.4 & 8.4 & 24.2 & 41.3 & 29.5 & 15.9 & 25.9 & 23.8 & 54.7 & 6.8 & 14.1 & 23.1 & 17.0 & 8.8 & 12.8 & 12.5\\
0.5 & 14.1 & 6.0 & 7.4 & 11.3 & 8.3 & 6.5 & 9.0 & 9.0 & 11.1 & 5.7 & 6.5 & 8.1 & 7.0 & 4.8 & 6.4 & 6.3\\
0.3 & 10.6 & 5.8 & 5.5 & 6.3 & 6.2 & 5.6 & 7.7 & 7.3 & 8.5 & 5.2 & 4.9 & 4.9 & 4.8 & 4.8 & 6.3 & 6.2\\
0 & 7.4 & 5.2 & 3.5 & 2.5 & 3.1 & 4.1 & 4.7 & 4.6 & 5.0 & 4.2 & 4.5 & 4.6 & 3.8 & 5.2 & 4.7 & 4.2\\
$-0.3$ & 14.6 & 4.4 & 2.2 & 1.5 & 1.8 & 5.2 & 5.9 & 5.5 & 10.4 & 4.9 & 3.6 & 2.1 & 3.1 & 5.3 & 5.9 & 5.8\\
$-0.5$ & 27.6 & 3.6 & 1.5 & 0.6 & 1.2 & 5.5 & 6.1 & 6.0 & 19.8 & 4.7 & 2.9 & 1.5 & 2.3 & 5.4 & 6.3 & 6.4\\
$-0.8$ & 46.9 & 1.8 & 0.3 & 0.0 & 0.0 & 8.6 & 9.0 & 8.5 & 41.2 & 3.7 & 1.5 & 0.2 & 1.0 & 6.6 & 7.9 & 7.3\\

\bottomrule
\end{tabular} 
\end{table}

\begin{table}[t]
\setlength{\tabcolsep}{3pt}
\centering
\footnotesize
\caption{\footnotesize \label{tab: bar size non-G t dist} Null rejection rates (\%) at 5\% nominal size under BAR model with innovations following $t$-distribution.}
\vspace{-0.3cm}
\begin{tabular}{ rr  r rrr r rrr  r rrr r rrr}
\toprule
&  \multicolumn{8}{c}{$n=200$} & \multicolumn{8}{c}{$n=400$}\\
\cmidrule(r){3-10}\cmidrule(r){11-18}
$\vartheta$ &$\varpi$& $\KS_n$ & $\shao_n^{(1)}$ & $\shao_n^{(2)}$ & $\shao_n^{(3)}$ & $\zhang_n$ & $\T_n^{(\C)}$ & $\T_n^{(\W)}$ & $\T_n^{(\H)}$ & $\KS_n$ &  $\shao_n^{(1)}$ & $\shao_n^{(2)}$ & $\shao_n^{(3)}$ & $\zhang_n$ & $\T_n^{(\C)}$ & $\T_n^{(\W)}$ & $\T_n^{(\H)}$  \\ 
\cmidrule(r){1-18}
0.8 & 0.5 & 15.5 & 8.1 & 6.5 & 3.8 & 5.7 & 0.3 & 4.1 & 2.2 & 15.1 & 7.1 & 6.3 & 2.5 & 4.8 & 0.8 & 4.3 & 3.0\\
 & 0.3 & 17.8 & 6.7 & 4.5 & 2.3 & 3.4 & 0.7 & 3.0 & 2.7 & 15.6 & 7.0 & 4.2 & 1.8 & 3.3 & 0.9 & 4.9 & 4.2\\
 & 0 & 19.7 & 5.2 & 2.4 & 1.1 & 2.2 & 0.9 & 5.7 & 3.6 & 16.6 & 5.9 & 2.6 & 0.9 & 2.0 & 1.4 & 6.5 & 5.1\\
 & $-0.3$ & 29.4 & 5.2 & 1.9 & 0.0 & 1.5 & 1.0 & 6.1 & 4.8 & 22.6 & 5.7 & 1.2 & 0.4 & 0.9 & 0.8 & 6.9 & 5.8\\
 & $-0.5$ & 31.3 & 4.7 & 1.3 & 0.3 & 1.0 & 1.3 & 7.6 & 5.0 & 29.2 & 6.1 & 1.6 & 0.2 & 0.7 & 0.8 & 7.2 & 5.6\\

\cmidrule(r){1-18}

0.5 & 0.8 & 27.1 & 10.3 & 22.9 & 20.9 & 21.3 & 2.1 & 10.7 & 7.5 & 29.2 & 8.5 & 11.6 & 8.9 & 11.1 & 1.0 & 4.3 & 4.2\\
 & 0.5 & 14.3 & 6.0 & 8.0 & 6.7 & 7.2 & 2.1 & 4.6 & 3.8 & 11.3 & 5.4 & 6.8 & 5.4 & 6.1 & 3.0 & 5.4 & 4.7\\
 & 0.3 & 12.1 & 4.8 & 5.6 & 3.4 & 4.6 & 2.3 & 4.9 & 4.1 & 8.4 & 5.8 & 5.6 & 4.2 & 5.0 & 3.9 & 6.4 & 6.2\\
 & 0 & 14.6 & 3.6 & 3.2 & 1.7 & 2.9 & 2.8 & 5.4 & 5.1 & 10.0 & 6.6 & 4.1 & 2.8 & 3.7 & 4.0 & 6.1 & 5.7\\
 & $-0.3$ & 21.0 & 3.5 & 2.1 & 0.8 & 1.8 & 3.3 & 6.3 & 5.4 & 15.7 & 5.6 & 2.6 & 2.1 & 1.9 & 4.0 & 6.2 & 5.6\\
 & $-0.5$ & 30.6 & 3.9 & 1.1 & 0.3 & 0.8 & 3.2 & 7.2 & 5.9 & 25.6 & 5.0 & 1.4 & 1.1 & 1.4 & 3.6 & 7.5 & 7.0\\
 & $-0.8$ & 40.7 & 3.0 & 0.7 & 0.1 & 0.4 & 2.7 & 8.2 & 7.6 & 40.5 & 4.2 & 0.7 & 0.1 & 0.3 & 2.1 & 7.3 & 6.6\\
\cmidrule(r){1-18}

$-0.5$ & 0.8 & 28.6 & 10.4 & 25.6 & 22.6 & 24.4 & 2.2 & 10.5 & 8.3 & 29.0 & 9.4 & 11.2 & 6.4 & 10.7 & 1.3 & 4.9 & 3.5\\
 & 0.5 & 13.1 & 6.9 & 8.3 & 6.7 & 7.5 & 1.7 & 4.5 & 3.7 & 10.7 & 6.2 & 4.2 & 3.8 & 3.8 & 1.8 & 5.2 & 4.5\\
 & 0.3 & 11.1 & 5.4 & 4.7 & 2.8 & 4.6 & 1.9 & 4.2 & 3.7 & 8.9 & 5.0 & 2.9 & 2.9 & 2.9 & 2.2 & 5.1 & 4.5\\
 & 0 & 12.5 & 4.5 & 2.7 & 1.7 & 2.2 & 2.3 & 5.4 & 4.6 & 9.9 & 5.4 & 3.0 & 1.7 & 2.0 & 2.9 & 5.9 & 5.2\\
 & $-0.3$ & 21.1 & 4.1 & 1.8 & 0.5 & 1.5 & 2.3 & 6.8 & 6.8 & 16.1 & 4.8 & 2.4 & 1.2 & 1.8 & 3.0 & 4.9 & 4.2\\
 & $-0.5$ & 30.2 & 3.4 & 0.9 & 0.1 & 0.9 & 2.9 & 7.5 & 6.3 & 28.5 & 5.2 & 1.9 & 0.7 & 1.5 & 2.8 & 6.0 & 5.5\\
 & $-0.8$ & 42.6 & 3.1 & 0.7 & 0.0 & 0.6 & 2.7 & 7.4 & 6.5 & 41.9 & 3.8 & 0.3 & 0.1 & 0.4 & 2.0 & 5.6 & 4.4\\
\cmidrule(r){1-18}

$-0.8$ & 0.5 & 18.0 & 6.9 & 8.8 & 3.4 & 5.9 & 0.3 & 4.3 & 3.1 & 14.9 & 5.9 & 3.6 & 1.4 & 3.0 & 0.2 & 2.7 & 1.8\\
 & 0.3 & 17.4 & 5.6 & 4.7 & 1.7 & 3.7 & 0.4 & 4.3 & 3.2 & 14.9 & 6.0 & 2.7 & 1.1 & 2.1 & 0.7 & 4.6 & 3.5\\
 & 0 & 20.4 & 4.2 & 2.7 & 0.9 & 2.1 & 0.7 & 4.9 & 2.5 & 16.2 & 6.1 & 1.9 & 0.4 & 1.5 & 0.8 & 4.8 & 4.1\\
 & $-0.3$ & 25.9 & 5.2 & 1.3 & 0.1 & 1.3 & 0.7 & 6.5 & 5.5 & 23.7 & 6.0 & 2.0 & 0.7 & 1.3 & 0.8 & 5.5 & 4.4\\
 & $-0.5$ & 34.0 & 4.8 & 2.2 & 0.3 & 1.7 & 1.0 & 7.9 & 6.0 & 30.5 & 5.4 & 1.7 & 0.4 & 0.8 & 0.7 & 5.2 & 4.7\\

\bottomrule
\end{tabular} 
\end{table}
\clearpage

\begin{figure}[H]
\centering
\includegraphics[width=\linewidth]{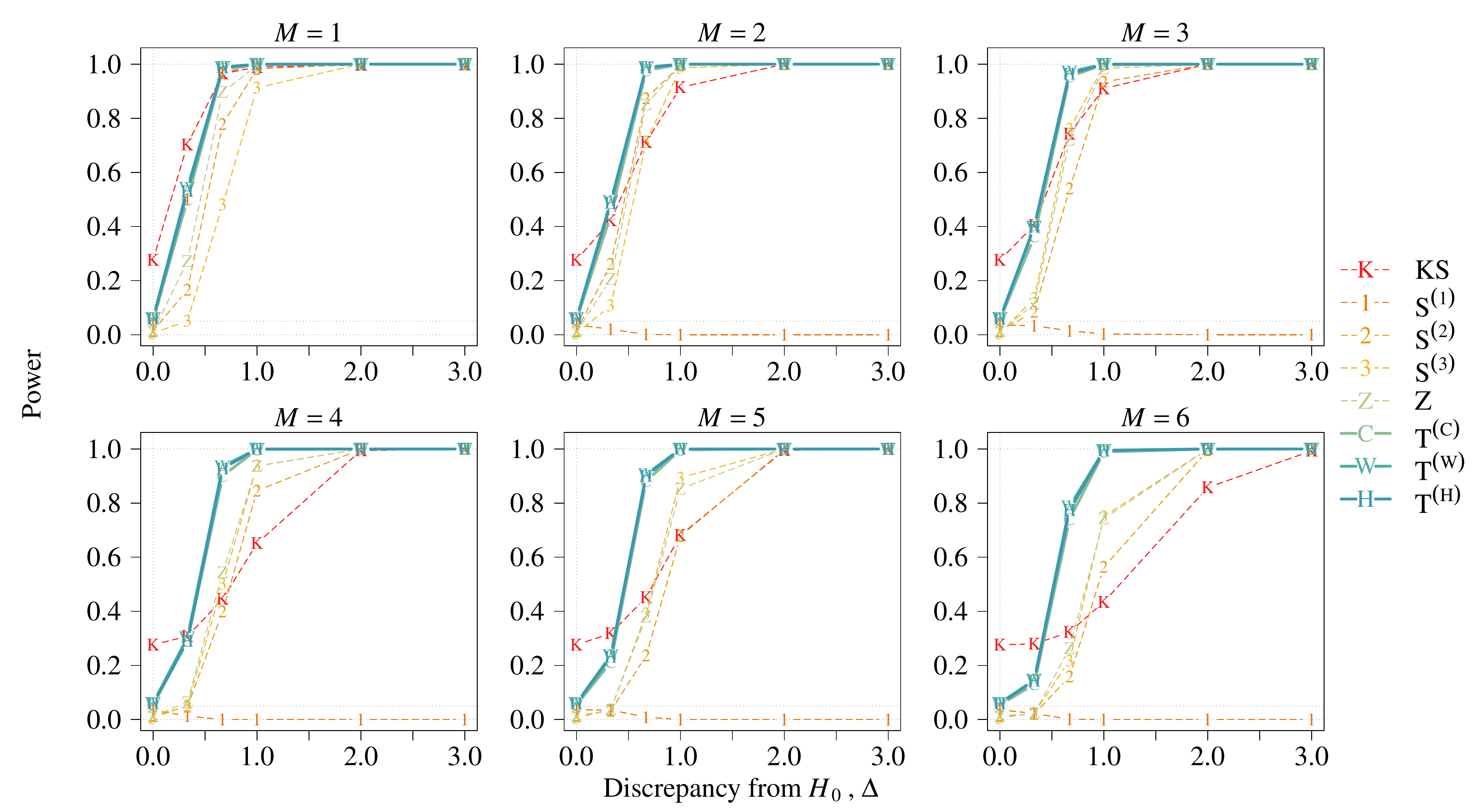}
\vspace{-0.3cm}
\caption{\footnotesize \label{fig: ar-05 raw power non-G t dist} Power under AR model with $\varpi = -0.5$, mean function \eqref{EQ mean function normal} and innovations following $t$-distribution.}
\end{figure}

\begin{figure}[H]
\centering
\includegraphics[width=\linewidth]{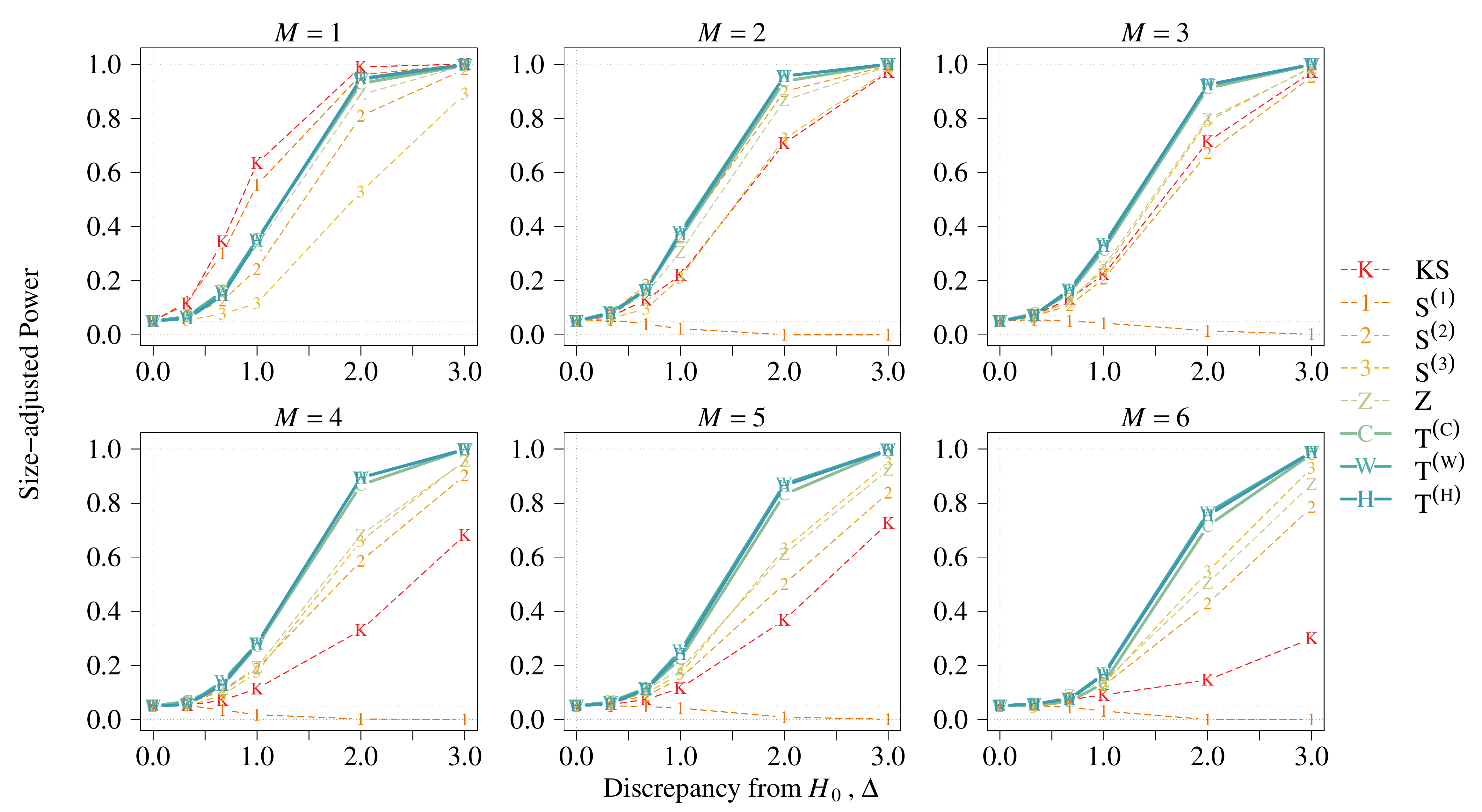}
\vspace{-0.3cm}
\caption{\footnotesize \label{fig: ar05 adjusted power non-G t dist} Size-adjusted power under AR model with $\varpi = 0.5$, mean function \eqref{EQ mean function normal} and innovations following $t$-distribution.}
\end{figure}

\begin{figure}[H]
\centering
\includegraphics[width=\linewidth]{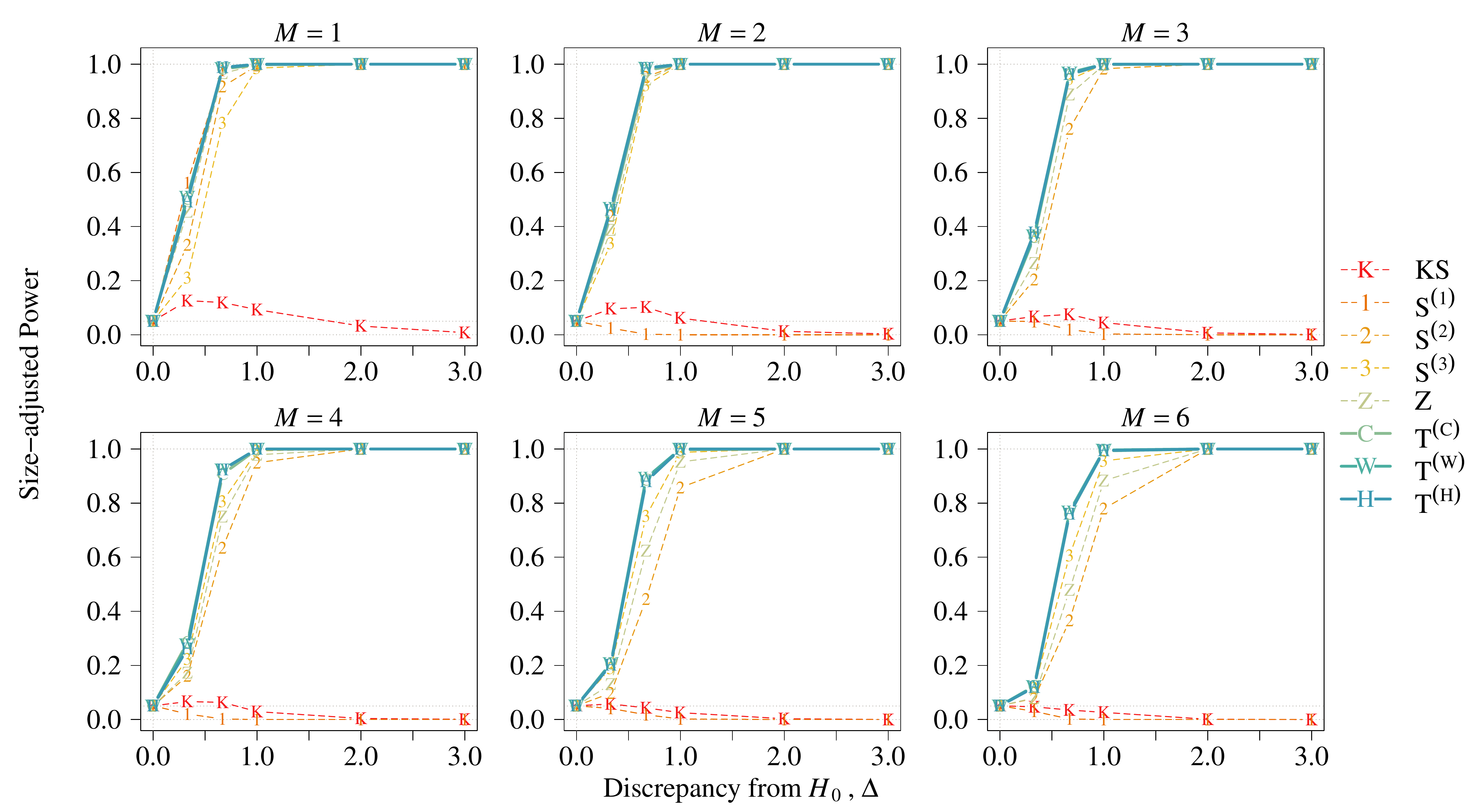}
\vspace{-0.3cm}
\caption{\footnotesize \label{fig: ar-05 adjusted power non-G t dist} Size-adjusted power under AR model with $\varpi= -0.5$, mean function \eqref{EQ mean function normal} and innovations following $t$-distribution.}
\end{figure}

\begin{figure}[H]
\centering
\includegraphics[width=\linewidth]{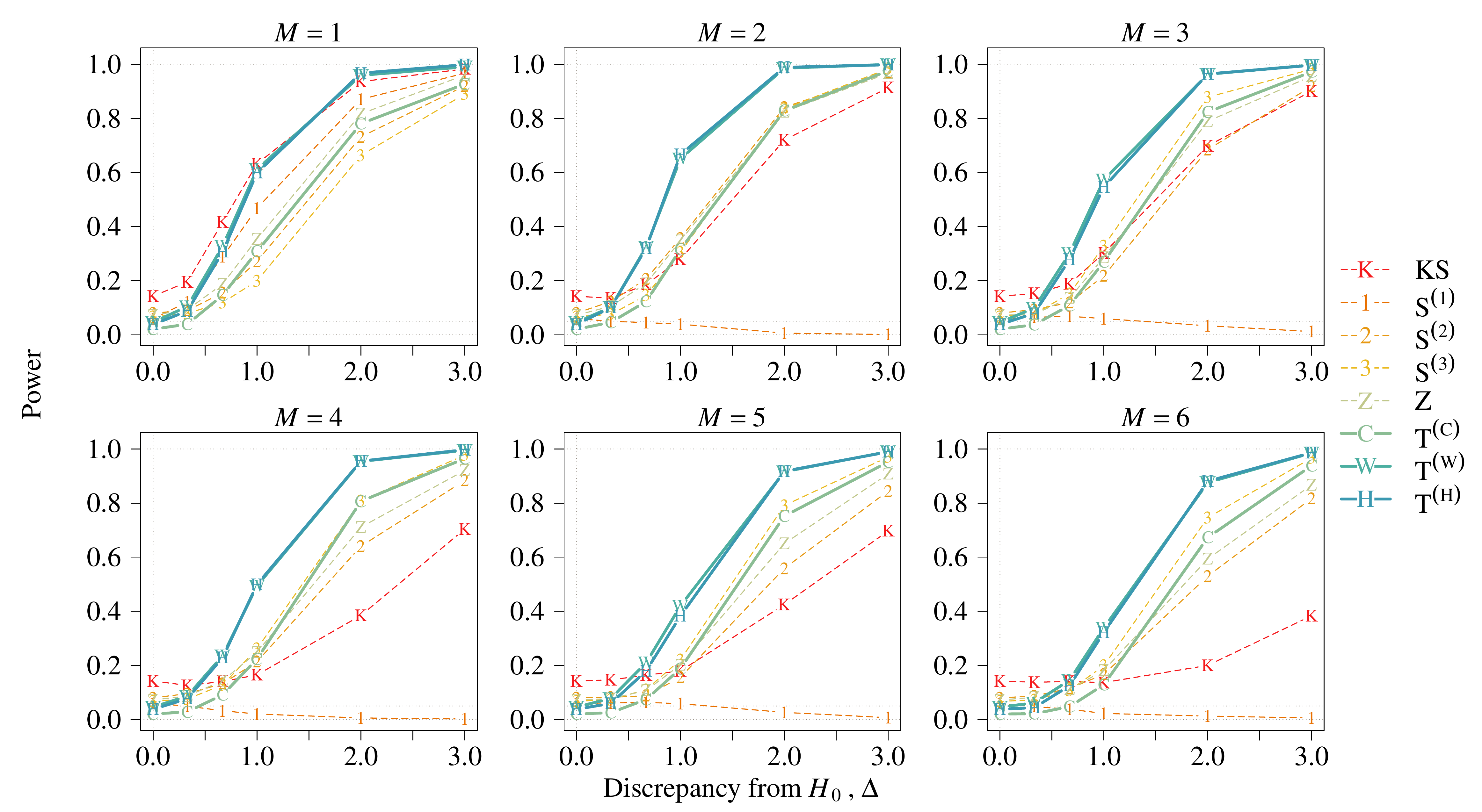}
\vspace{-0.3cm}
\caption{\footnotesize \label{fig: ar05g05 raw power non-G t dist} Power under BAR model with $\varpi = \vartheta = 0.5$, mean function \eqref{EQ mean function normal} and innovations following $t$-distribution.}
\end{figure}

\begin{figure}[H]
\centering
\includegraphics[width=\linewidth]{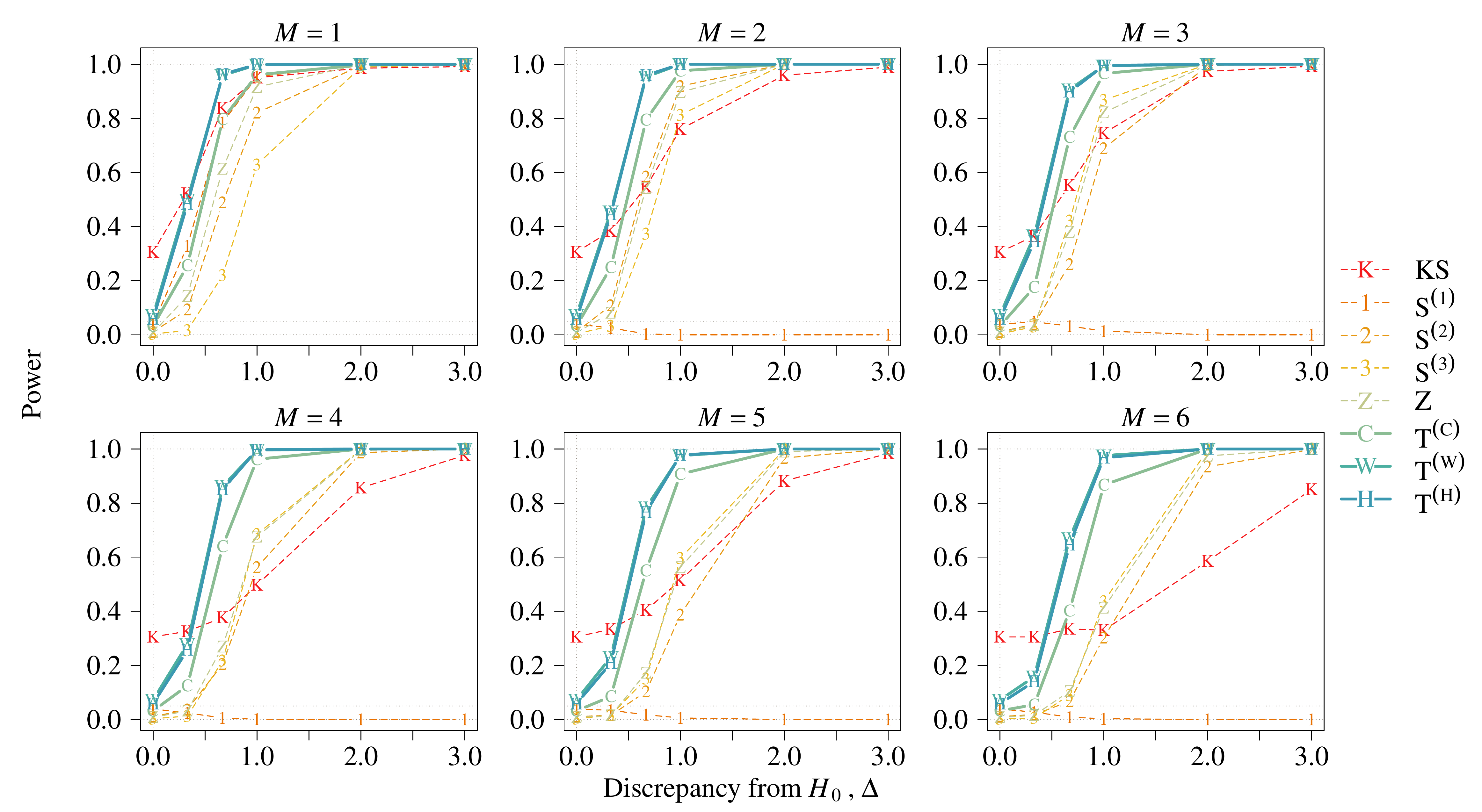}
\vspace{-0.3cm}
\caption{\footnotesize \label{fig: ar-05g05 raw power non-G t dist} Power under BAR model with $-\varpi = \vartheta = 0.5$, mean function \eqref{EQ mean function normal} and innovations following $t$-distribution.}
\end{figure}

\begin{figure}[H]
\centering
\includegraphics[width=\linewidth]{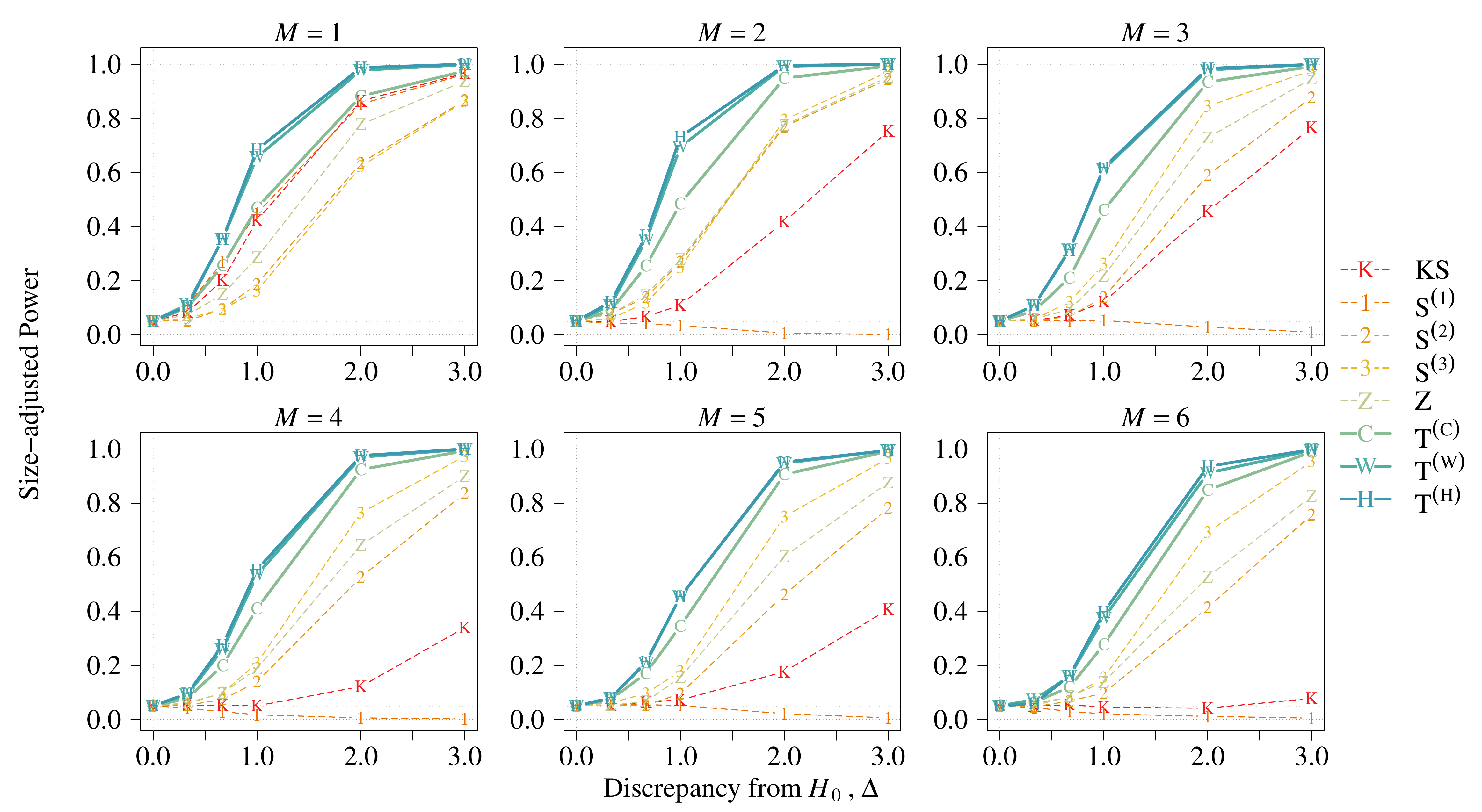}
\vspace{-0.3cm}
\caption{\footnotesize \label{fig: ar05g05 adjusted power non-G t dist} Size-adjusted power under BAR model with $\varpi = \vartheta = 0.5$, mean function \eqref{EQ mean function normal} and innovations following $t$-distribution.}
\end{figure}

\begin{figure}[H]
\centering
\includegraphics[width=\linewidth]{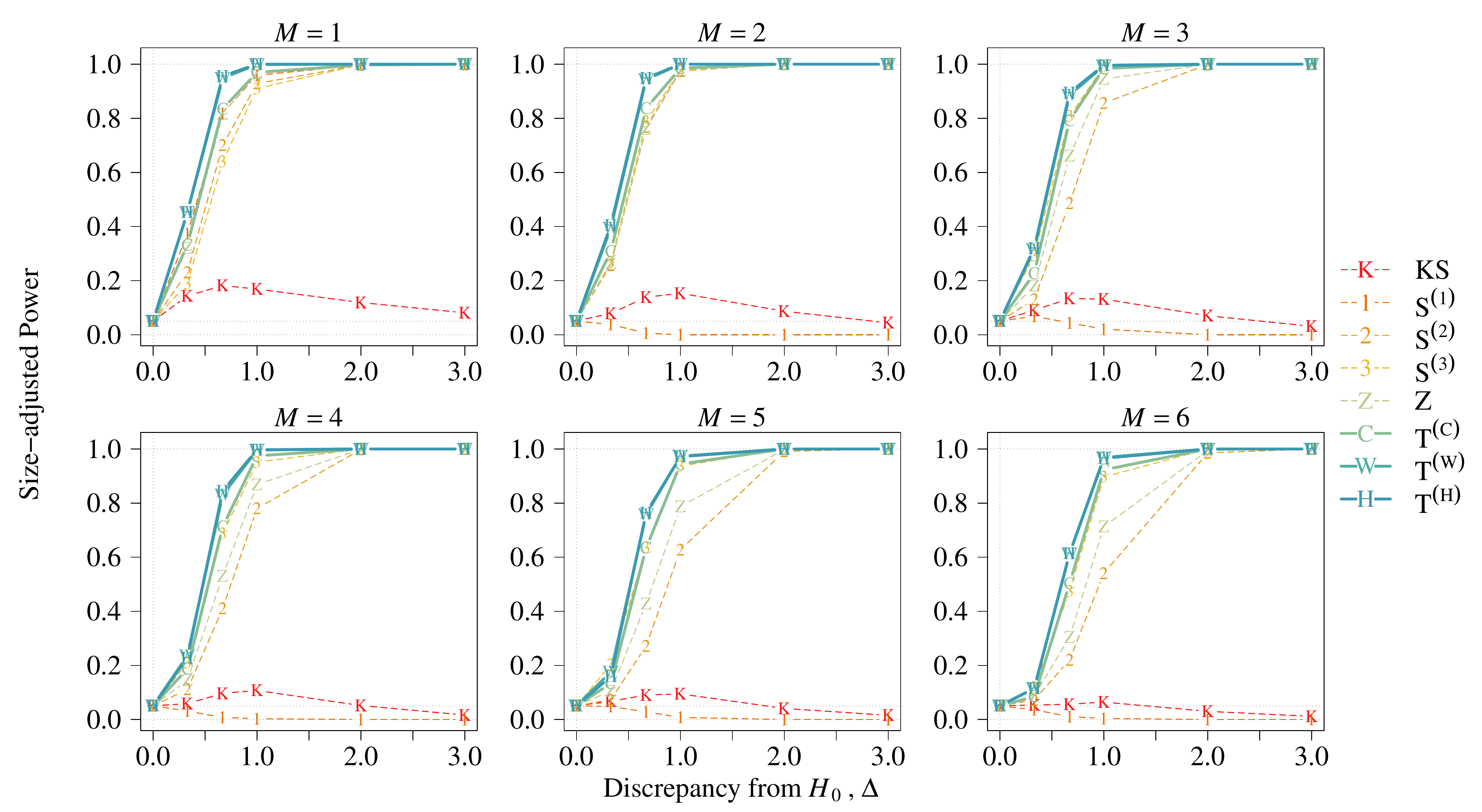}
\vspace{-0.3cm}
\caption{\footnotesize \label{fig: ar-05g05 adjusted powernon-G t dist} Size-adjusted power under BAR model with $-\varpi = \vartheta = 0.5$, mean function \eqref{EQ mean function normal} and innovations following $t$-distribution.}
\end{figure}

\subsection{Unequal change magnitudes}
\label{sec: Unequal change}
This section extends the  simulation results of Section \ref{sec:sim_allCPsEffect} in the main text. The CP models, Cases 1 and 2, are defined in Section \ref{sec:sim_allCPsEffect} in the main text. Figures \ref{fig: S3CP ar0.5}--\ref{fig: S3CP ar-0.5} and \ref{fig: S3CP ar-0.5g0.5}--\ref{fig: S3CP ar-0.5g-0.5} present the size-adjusted power under the AR and BAR models respectively. The result indicates that \cite{zhang2018unsupervised}'s proposed test $\zhang_n$ suffers larger reduction in power comparing to our proposed tests when the first and the last changes are reduced by half. Therefore, our proposed tests are more general and capable to capture all change structure, while $\zhang_n$ tends to capture the first and the last CPs.

\begin{figure}[H]
\centering
\includegraphics[width=0.7\linewidth]{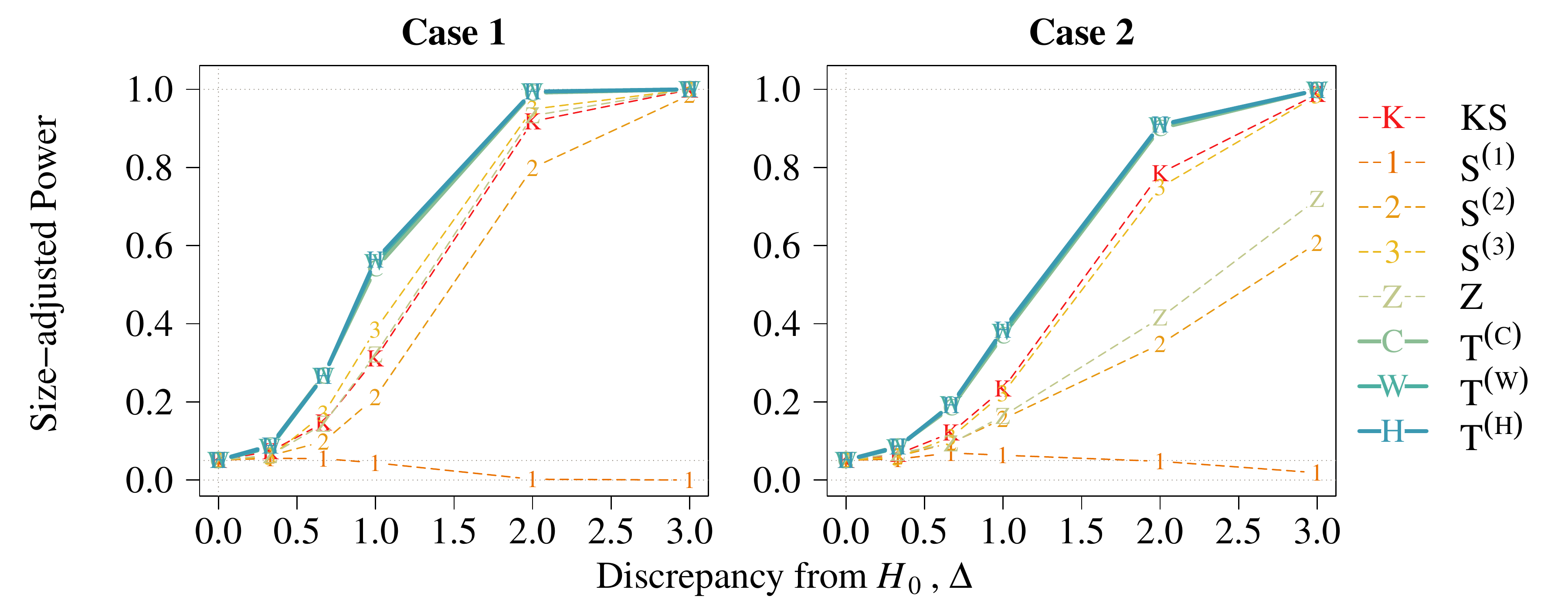}
\vspace{-0.3cm}
\caption{\footnotesize  \label{fig: S3CP ar0.5} Size-adjusted power under AR model with $\varpi =0.5$, $n=200$ and mean functions in Cases 1 and 2.}
\end{figure}

\begin{figure}[H]
\centering
\includegraphics[width=0.7\linewidth]{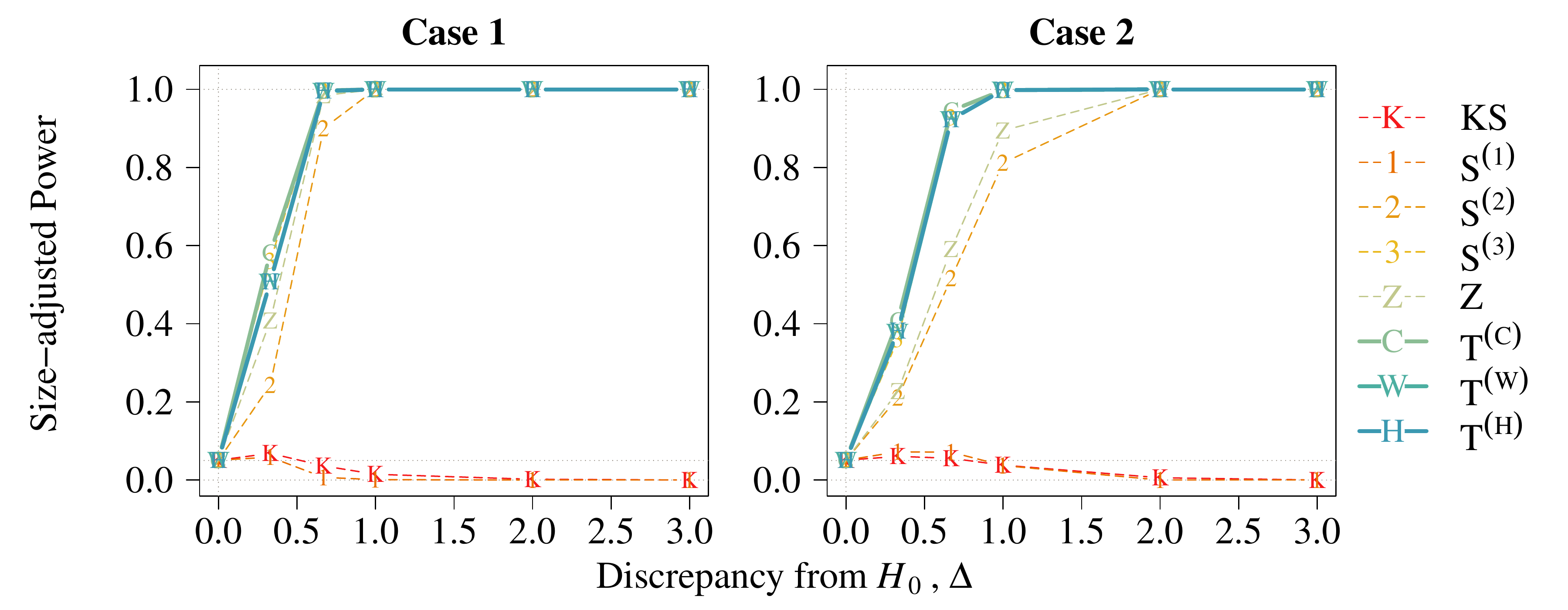}
\vspace{-0.3cm}
\caption{\footnotesize  \label{fig: S3CP ar-0.5} Size-adjusted power under AR model with $\varpi = -0.5$, $n=200$ and mean functions in Cases 1 and 2.}
\end{figure}

\begin{figure}[H]
\centering
\includegraphics[width=0.7\linewidth]{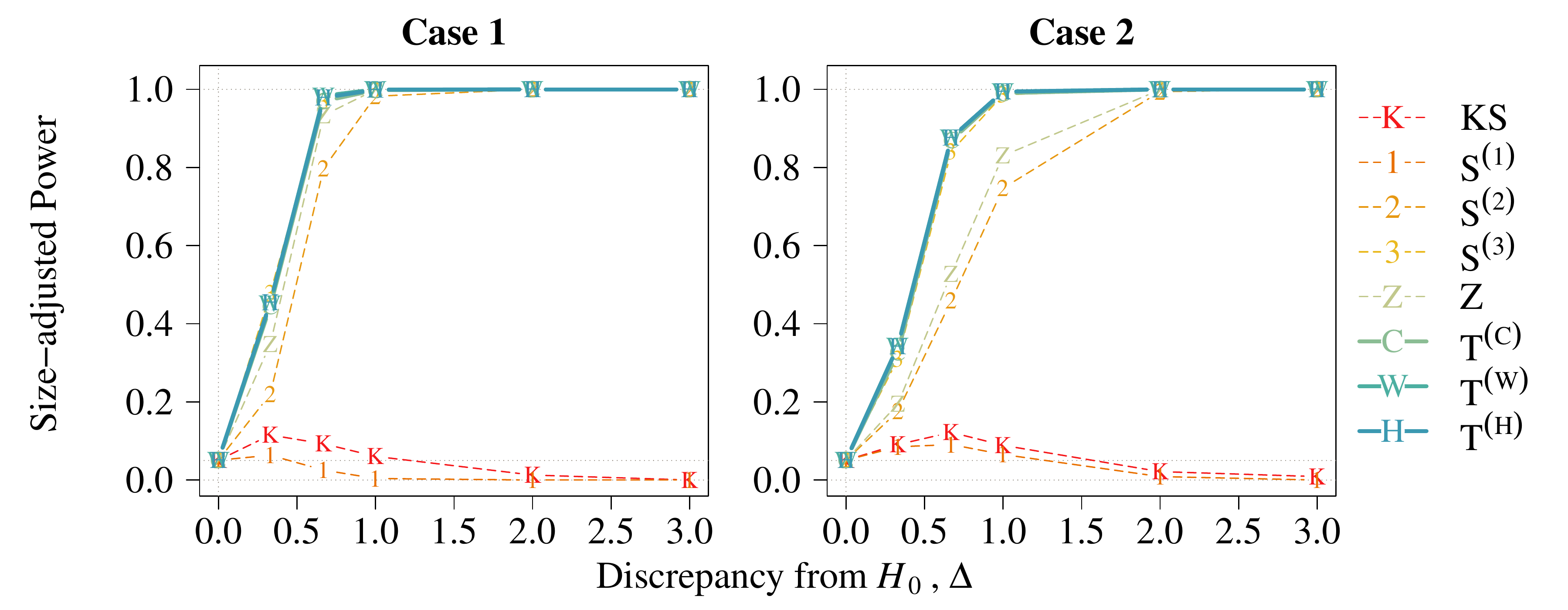}
\vspace{-0.3cm}
\caption{\footnotesize  \label{fig: S3CP ar-0.5g0.5} Size-adjusted power under BAR model with $-\varpi =\vartheta = 0.5$ , $n=200$ and mean functions in Cases 1 and 2.}
\end{figure}

\begin{figure}[H]
\centering
\includegraphics[width=0.7\linewidth]{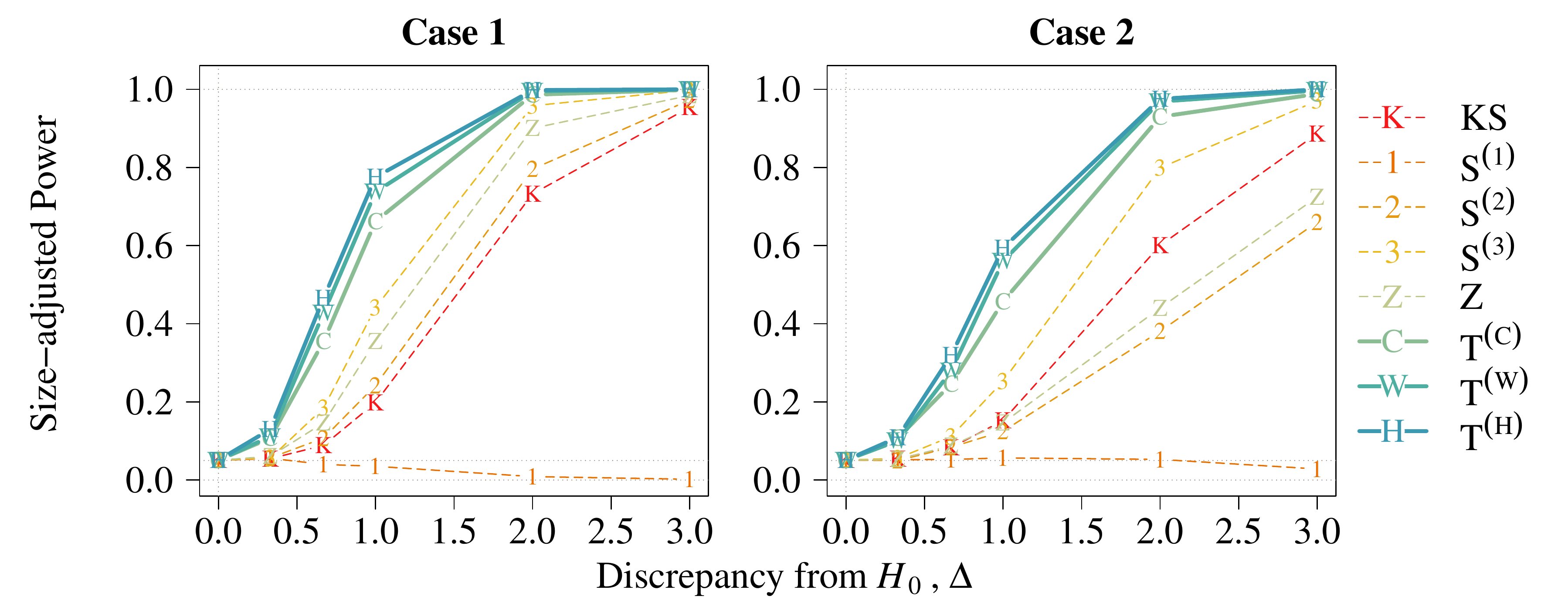}
\caption{\footnotesize  \label{fig: S3CP ar0.5g-0.5} Size-adjusted power under BAR model with $\varpi =-\vartheta = 0.5$ , $n=200$ and mean functions in Cases 1 and 2.}
\end{figure}

\begin{figure}[H]
\centering
\includegraphics[width=0.7\linewidth]{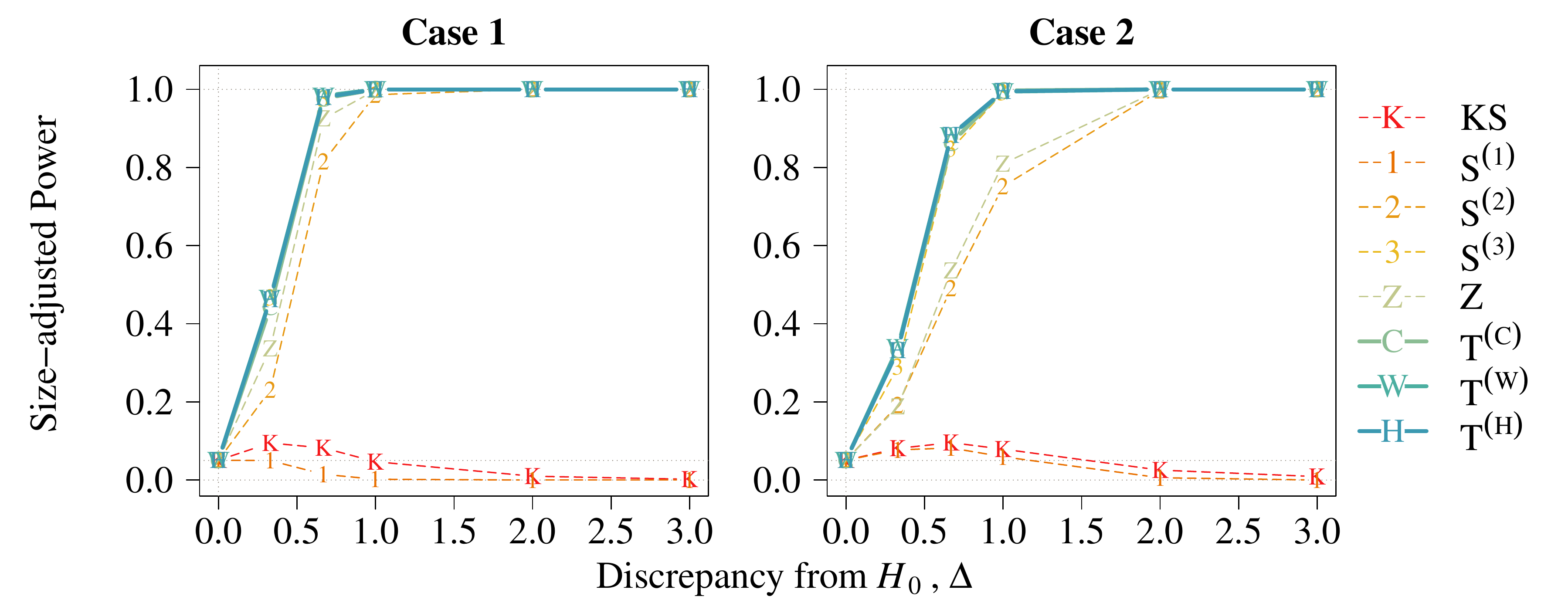}
\vspace{-0.3cm}
\caption{\footnotesize \label{fig: S3CP ar-0.5g-0.5} Size-adjusted power under BAR model with $\varpi =\vartheta = -0.5$ , $n=200$ and mean functions in Cases 1 and 2.}
\end{figure}

\subsection{Nonsymmetric windows}
\label{sec: Non-symmetric windows}
This section presents simulation result supplementing Section \ref{sec: justify symm windows} in the main text. The mean functions, Cases 1 and 3 are defined in Section \ref{sec:sim_allCPsEffect} and \ref{sec: justify symm windows} in the main text respectively. The size-adjusted power is presented in Figures \ref{fig: compare nonsym window, ar0.5}--\ref{fig: compare nonsym window, ar-0.5} under the AR model and in Figure \ref{fig: compare nonsym window, ar-0.5g0.5} under the BAR model when $-\varpi =\vartheta = 0.5$. From the result, we do not observe significant difference in power after nonsymmetric windows are used. Therefore, using nonsymmetric windows does not give significant improvement in power in general. 

\begin{figure}[H]
\centering
\includegraphics[width=0.7\linewidth]{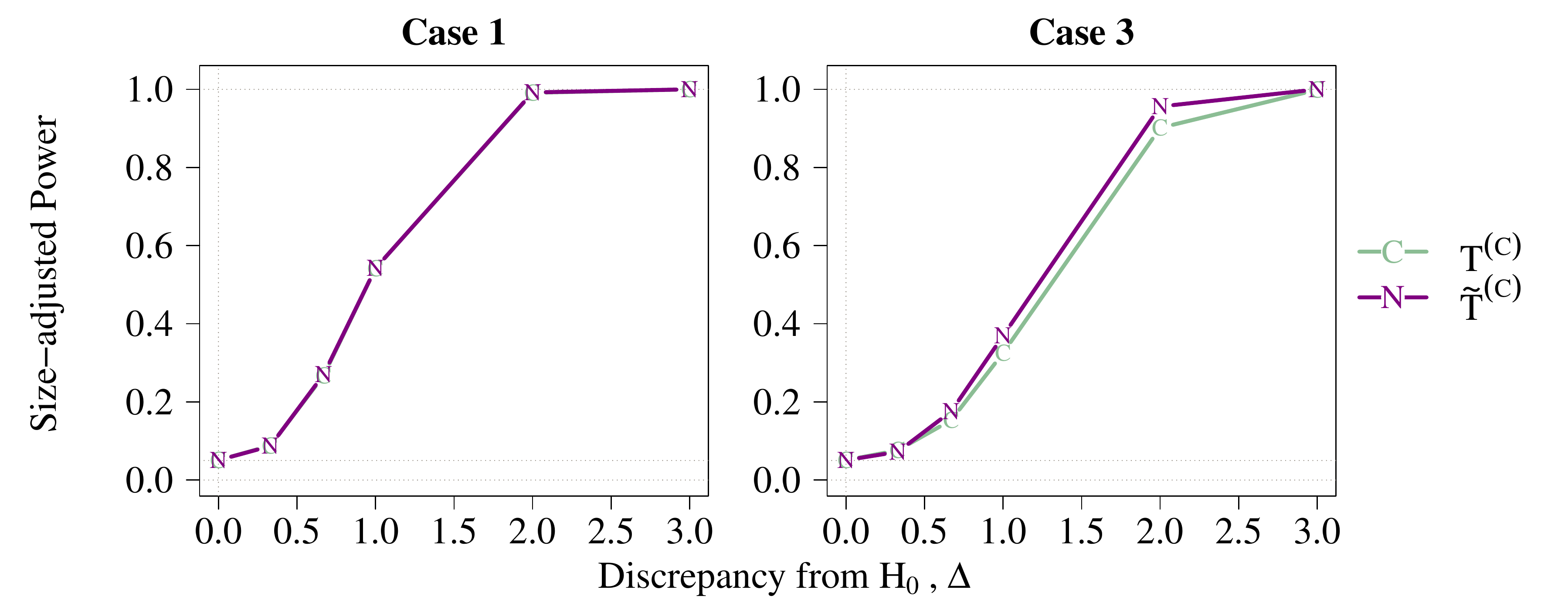}
\vspace{-0.3cm}
\caption{\footnotesize  \label{fig: compare nonsym window, ar0.5} Size-adjusted power under AR model with $\varpi = 0.5$ and mean functions in Cases 1 and 3.}
\end{figure}

\begin{figure}[H]
\centering
\includegraphics[width=0.7\linewidth]{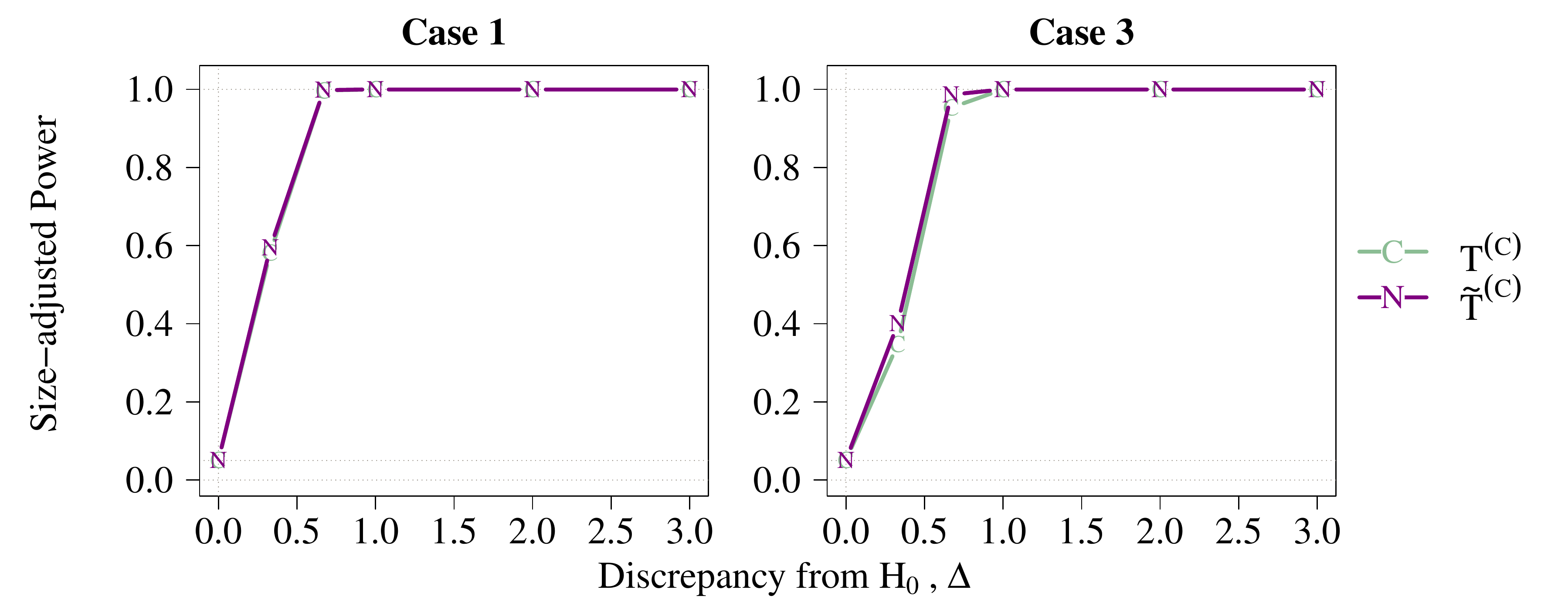}
\vspace{-0.3cm}
\caption{\footnotesize  \label{fig: compare nonsym window, ar-0.5} Size-adjusted power under AR model with $\varpi = -0.5$ and mean functions in Cases 1 and 3.}
\end{figure}

\begin{figure}[H]
\centering
\includegraphics[width=0.7\linewidth]{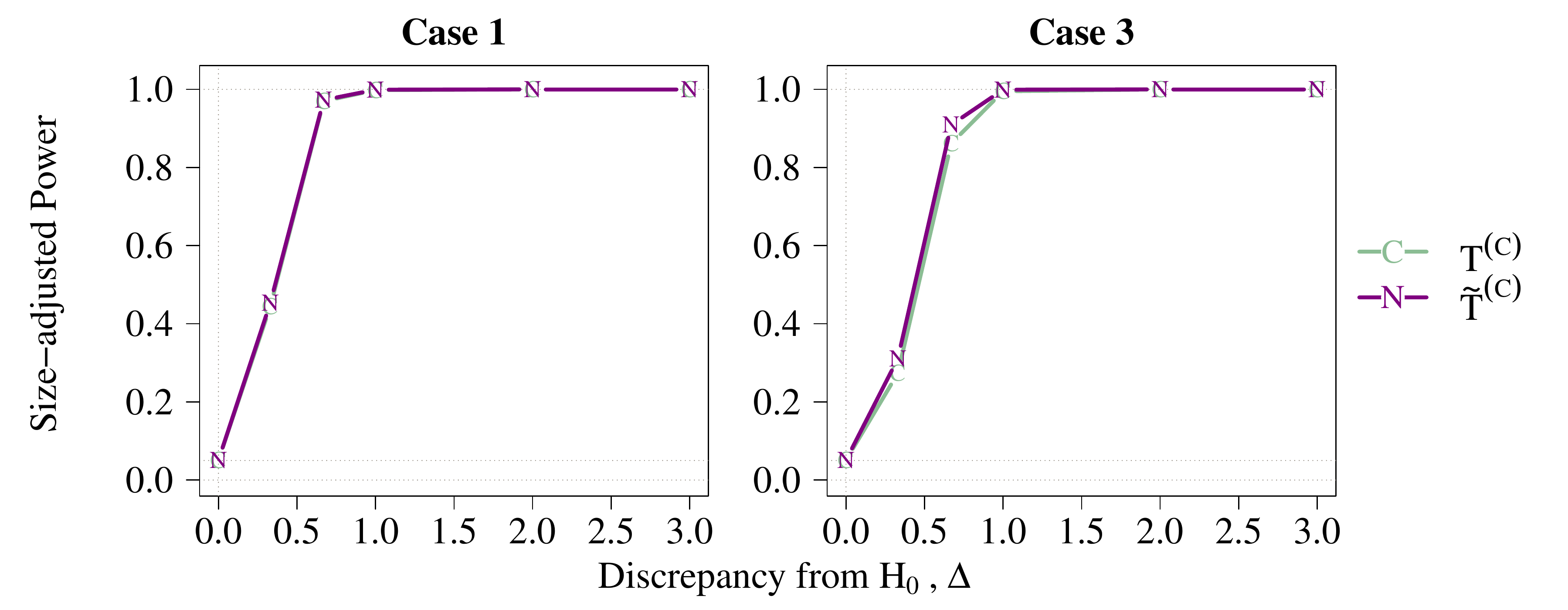}
\vspace{-0.3cm}
\caption{\footnotesize  \label{fig: compare nonsym window, ar-0.5g0.5} Size-adjusted power under BAR model with $-\varpi =\vartheta = 0.5$ and mean functions in Cases 1 and 3.}
\end{figure}

\subsection{Conclusion}
Our proposed tests in general control the size well. We also observed that non-self-normalized KS test has larger size distortion than self-normalized tests. The results agree to \cite{kiefer2002heteroskedasticity} and \cite{shao2015self}. Our proposed tests also have larger power comparing to the state-of-art self-normalized multiple-CP tests, $\shao_n^{(2)}$, $\shao_n^{(3)}$ and $\zhang_n$. Under the BAR model, our proposed tests even outperform the AMOC tests, $\shao_n^{(1)}$ and $\KS_n$, in the single CP case. Moreover, the oracle tests $\shao_n^{(2)}$ and $\shao_n^{(3)}$ have smaller power comparing to our proposed tests even when the number of CP is correctly specified. It demonstrates that knowing the exact number of CPs does not give a significant advantage in CP detection.

\section{Finite-sample adjusted critical value}\label{sec:finite_n_cVal}

The finite-$n$ adjusted critical values $c_{\alpha}(n, \rho)$ 
for $\T_n^{(\aleph)}$ at size $\alpha\in\{0.1, 0.05, 0.01\}$ are presented in 
Tables \ref{tab: 90 Critical Values}, \ref{tab: 95 Critical Values} and \ref{tab: 99 Critical Values}, respectively, 
where $n\in\{100, 200, \ldots, 1000, 2000, \ldots, 10000\}$ and $\rho\in\{0, \pm 0.1, \ldots, \pm 0.9\}$. 
Here the critical values are valid for $\aleph \in \{\C, \W,\H, \G \}$. 
The critical values $\mathit{c}_{\alpha}(n, \rho)$'s are computed using 
$200000$ replications under autoregressive AR(1) models for each 
sample sizes $n = 100, \ldots, 1000$ and AR parameter
$\rho \in \{0, \pm 0.1, \ldots, \pm0.9\}$, with standard normal innovations.

\begin{table}[t]
\setlength{\tabcolsep}{2pt}
\centering
\footnotesize
\caption{\label{tab: 90 Critical Values} 
The finite-$n$ adjusted critical values 
$c_{\alpha}(n, \rho)$ of $\T_n^{(\aleph)}$ under $\alpha = 10\%$ nominal size.}
\begin{tabular}{r rrrrrrrrr r rrrrrrrrr}
\toprule
$n\setminus \rho$ & $-0.9$ & $-0.8$ & $-0.7$ & $-0.6$ & $-0.5$ & $-0.4$ & $-0.3$ & $-0.2$ & $-0.1$ & 0 & 0.1 & 0.2 & 0.3 & 0.4 & 0.5 & 0.6 & 0.7 & 0.8 & 0.9\\[0.5ex]
\cmidrule(r){1-20}  & \\[-1.5ex]
100 & 6.8 & 8.4 & 9.6 & 10.7 & 11.5 & 12.4 & 13.1 & 13.9 & 14.7 & 15.5 & 16.5 & 17.6 & 18.9 & 20.6 & 22.9 & 26.2 & 30.9 & 38.1 & 48.0\\
200 & 8.5 & 10.5 & 11.8 & 12.8 & 13.5 & 14.1 & 14.7 & 15.1 & 15.6 & 16.1 & 16.6 & 17.2 & 17.9 & 18.8 & 20.0 & 21.7 & 24.7 & 30.0 & 41.2\\
300 & 9.8 & 11.9 & 13.1 & 13.9 & 14.5 & 15.0 & 15.4 & 15.8 & 16.1 & 16.4 & 16.8 & 17.1 & 17.6 & 18.2 & 18.9 & 20.1 & 22.0 & 25.9 & 35.6\\
400 & 10.7 & 12.8 & 13.9 & 14.6 & 15.2 & 15.6 & 15.9 & 16.1 & 16.4 & 16.6 & 16.9 & 17.1 & 17.4 & 17.8 & 18.4 & 19.2 & 20.6 & 23.5 & 31.6\\
500 & 11.4 & 13.5 & 14.5 & 15.1 & 15.6 & 15.9 & 16.1 & 16.3 & 16.5 & 16.7 & 16.9 & 17.1 & 17.3 & 17.6 & 18.0 & 18.7 & 19.8 & 22.0 & 28.8\\
600 & 12.0 & 14.0 & 14.9 & 15.5 & 15.9 & 16.1 & 16.4 & 16.5 & 16.7 & 16.8 & 17.0 & 17.1 & 17.3 & 17.5 & 17.9 & 18.4 & 19.2 & 21.1 & 26.8\\
700 & 12.5 & 14.4 & 15.3 & 15.8 & 16.1 & 16.4 & 16.5 & 16.7 & 16.8 & 16.9 & 17.0 & 17.1 & 17.3 & 17.5 & 17.7 & 18.1 & 18.9 & 20.4 & 25.3\\
800 & 13.0 & 14.8 & 15.6 & 16.0 & 16.3 & 16.5 & 16.6 & 16.8 & 16.9 & 17.0 & 17.1 & 17.2 & 17.3 & 17.4 & 17.6 & 18.0 & 18.6 & 19.9 & 24.2\\
900 & 13.3 & 15.0 & 15.8 & 16.2 & 16.5 & 16.6 & 16.8 & 16.9 & 17.0 & 17.0 & 17.1 & 17.2 & 17.3 & 17.4 & 17.6 & 17.9 & 18.4 & 19.5 & 23.3\\
1000 & 13.6 & 15.3 & 16.0 & 16.3 & 16.6 & 16.7 & 16.8 & 16.9 & 17.0 & 17.1 & 17.1 & 17.2 & 17.3 & 17.4 & 17.6 & 17.8 & 18.3 & 19.3 & 22.6\\
2000 & 16.7 & 16.9 & 17.0 & 17.1 & 17.2 & 17.2 & 17.3 & 17.2 & 17.3 & 17.3 & 17.3 & 17.3 & 17.3 & 17.3 & 17.3 & 17.3 & 17.4 & 17.7 & 18.6\\
3000 & 16.9 & 17.2 & 17.3 & 17.3 & 17.4 & 17.4 & 17.4 & 17.4 & 17.4 & 17.4 & 17.4 & 17.4 & 17.4 & 17.4 & 17.4 & 17.4 & 17.4 & 17.5 & 18.0\\
4000 & 17.1 & 17.3 & 17.4 & 17.5 & 17.5 & 17.5 & 17.5 & 17.5 & 17.5 & 17.5 & 17.4 & 17.4 & 17.4 & 17.4 & 17.4 & 17.5 & 17.5 & 17.5 & 17.8\\
5000 & 17.2 & 17.4 & 17.5 & 17.5 & 17.6 & 17.6 & 17.5 & 17.5 & 17.5 & 17.5 & 17.5 & 17.5 & 17.5 & 17.5 & 17.5 & 17.5 & 17.4 & 17.5 & 17.6\\
6000 & 17.3 & 17.5 & 17.6 & 17.6 & 17.6 & 17.6 & 17.6 & 17.6 & 17.6 & 17.5 & 17.5 & 17.5 & 17.5 & 17.5 & 17.5 & 17.5 & 17.5 & 17.5 & 17.6\\
7000 & 17.4 & 17.6 & 17.6 & 17.6 & 17.6 & 17.6 & 17.6 & 17.6 & 17.6 & 17.6 & 17.5 & 17.5 & 17.5 & 17.5 & 17.5 & 17.5 & 17.5 & 17.5 & 17.6\\
8000 & 17.5 & 17.6 & 17.6 & 17.7 & 17.6 & 17.6 & 17.6 & 17.6 & 17.6 & 17.6 & 17.6 & 17.6 & 17.6 & 17.5 & 17.5 & 17.5 & 17.5 & 17.5 & 17.5\\
9000 & 17.5 & 17.6 & 17.6 & 17.7 & 17.7 & 17.6 & 17.6 & 17.6 & 17.6 & 17.6 & 17.6 & 17.6 & 17.6 & 17.5 & 17.5 & 17.5 & 17.5 & 17.5 & 17.5\\
10000 & 17.6 & 17.7 & 17.7 & 17.8 & 17.7 & 17.7 & 17.7 & 17.7 & 17.6 & 17.6 & 17.6 & 17.6 & 17.5 & 17.5 & 17.5 & 17.4 & 17.4 & 17.4 & 17.5\\

\bottomrule
\end{tabular} 
\end{table}

\begin{table}[t]
\setlength{\tabcolsep}{2pt}
\centering
\footnotesize
\caption{\footnotesize \label{tab: 95 Critical Values} 
The finite-$n$ adjusted critical values 
$c_{\alpha}(n, \rho)$ of $\T_n^{(\aleph)}$ under $\alpha = 5\%$ nominal size.}
\vspace{-0.3cm}
\begin{tabular}{r rrrrrrrrr r rrrrrrrrr}
\toprule
$n\setminus \rho$ & $-0.9$ & $-0.8$ & $-0.7$ & $-0.6$ & $-0.5$ & $-0.4$ & $-0.3$ & $-0.2$ & $-0.1$ & 0 & 0.1 & 0.2 & 0.3 & 0.4 & 0.5 & 0.6 & 0.7 & 0.8 & 0.9\\
\cmidrule(r){1-20}

100 & 7.6 & 9.4 & 10.8 & 11.9 & 12.9 & 13.9 & 14.7 & 15.6 & 16.5 & 17.5 & 18.5 & 19.8 & 21.3 & 23.3 & 25.9 & 29.5 & 34.8 & 42.5 & 52.9\\
200 & 9.5 & 11.8 & 13.2 & 14.2 & 15.1 & 15.8 & 16.3 & 16.9 & 17.4 & 18.0 & 18.6 & 19.2 & 20.0 & 21.0 & 22.4 & 24.4 & 27.7 & 33.7 & 45.7\\
300 & 10.9 & 13.2 & 14.6 & 15.5 & 16.2 & 16.7 & 17.2 & 17.6 & 18.0 & 18.3 & 18.7 & 19.2 & 19.7 & 20.3 & 21.2 & 22.5 & 24.7 & 29.1 & 39.8\\
400 & 11.9 & 14.2 & 15.5 & 16.3 & 16.9 & 17.3 & 17.7 & 18.0 & 18.3 & 18.5 & 18.8 & 19.1 & 19.5 & 19.9 & 20.6 & 21.5 & 23.2 & 26.4 & 35.5\\
500 & 12.7 & 15.0 & 16.1 & 16.8 & 17.3 & 17.7 & 18.0 & 18.2 & 18.5 & 18.7 & 18.9 & 19.1 & 19.4 & 19.7 & 20.2 & 20.9 & 22.2 & 24.8 & 32.4\\
600 & 13.4 & 15.6 & 16.6 & 17.3 & 17.7 & 18.0 & 18.2 & 18.4 & 18.6 & 18.8 & 18.9 & 19.1 & 19.3 & 19.6 & 20.0 & 20.6 & 21.5 & 23.7 & 30.1\\
700 & 13.9 & 16.0 & 17.0 & 17.5 & 17.9 & 18.2 & 18.4 & 18.6 & 18.7 & 18.9 & 19.0 & 19.1 & 19.3 & 19.5 & 19.8 & 20.3 & 21.1 & 22.9 & 28.5\\
800 & 14.4 & 16.4 & 17.3 & 17.8 & 18.1 & 18.4 & 18.5 & 18.7 & 18.8 & 18.9 & 19.0 & 19.1 & 19.3 & 19.5 & 19.7 & 20.1 & 20.8 & 22.3 & 27.2\\
900 & 14.8 & 16.7 & 17.6 & 18.0 & 18.3 & 18.5 & 18.7 & 18.8 & 18.9 & 19.0 & 19.1 & 19.2 & 19.3 & 19.4 & 19.6 & 20.0 & 20.6 & 21.9 & 26.2\\
1000 & 15.1 & 17.0 & 17.7 & 18.2 & 18.4 & 18.6 & 18.7 & 18.8 & 18.9 & 19.0 & 19.1 & 19.1 & 19.2 & 19.4 & 19.6 & 19.8 & 20.4 & 21.5 & 25.3\\
2000 & 18.8 & 18.8 & 19.0 & 19.1 & 19.1 & 19.1 & 19.2 & 19.2 & 19.3 & 19.3 & 19.3 & 19.2 & 19.3 & 19.3 & 19.3 & 19.3 & 19.4 & 19.7 & 20.9\\
3000 & 18.9 & 19.1 & 19.2 & 19.3 & 19.3 & 19.3 & 19.4 & 19.3 & 19.4 & 19.4 & 19.4 & 19.4 & 19.3 & 19.4 & 19.3 & 19.4 & 19.4 & 19.5 & 20.1\\
4000 & 19.0 & 19.2 & 19.3 & 19.4 & 19.4 & 19.5 & 19.4 & 19.4 & 19.5 & 19.4 & 19.4 & 19.4 & 19.4 & 19.4 & 19.4 & 19.4 & 19.4 & 19.5 & 19.8\\
5000 & 19.2 & 19.3 & 19.5 & 19.5 & 19.5 & 19.5 & 19.5 & 19.5 & 19.5 & 19.5 & 19.5 & 19.5 & 19.5 & 19.5 & 19.4 & 19.4 & 19.5 & 19.5 & 19.7\\
6000 & 19.3 & 19.4 & 19.5 & 19.5 & 19.6 & 19.6 & 19.5 & 19.6 & 19.5 & 19.5 & 19.5 & 19.5 & 19.5 & 19.5 & 19.4 & 19.4 & 19.5 & 19.5 & 19.6\\
7000 & 19.4 & 19.5 & 19.5 & 19.6 & 19.6 & 19.6 & 19.6 & 19.6 & 19.6 & 19.5 & 19.5 & 19.5 & 19.5 & 19.5 & 19.5 & 19.4 & 19.5 & 19.5 & 19.6\\
8000 & 19.5 & 19.6 & 19.6 & 19.6 & 19.6 & 19.6 & 19.6 & 19.6 & 19.6 & 19.5 & 19.5 & 19.5 & 19.5 & 19.5 & 19.5 & 19.5 & 19.5 & 19.5 & 19.5\\
9000 & 19.5 & 19.6 & 19.6 & 19.6 & 19.6 & 19.6 & 19.6 & 19.6 & 19.6 & 19.5 & 19.5 & 19.5 & 19.5 & 19.5 & 19.5 & 19.4 & 19.5 & 19.4 & 19.5\\
10000 & 19.5 & 19.7 & 19.7 & 19.7 & 19.7 & 19.6 & 19.6 & 19.6 & 19.6 & 19.6 & 19.5 & 19.5 & 19.5 & 19.6 & 19.4 & 19.4 & 19.4 & 19.4 & 19.5\\

\bottomrule
\end{tabular} 
\end{table}

\begin{table}[t]
\setlength{\tabcolsep}{2pt}
\centering
\footnotesize
\caption{\label{tab: 99 Critical Values} 
The finite-$n$ adjusted critical values 
$c_{\alpha}(n, \rho)$ of $\T_n^{(\aleph)}$ under $\alpha = 1\%$ nominal size.}
\begin{tabular}{r rrrrrrrrr r rrrrrrrrr}
\toprule
$n\setminus \rho$ & $-0.9$ & $-0.8$ & $-0.7$ & $-0.6$ & $-0.5$ & $-0.4$ & $-0.3$ & $-0.2$ & $-0.1$ & 0 & 0.1 & 0.2 & 0.3 & 0.4 & 0.5 & 0.6 & 0.7 & 0.8 & 0.9\\
\cmidrule(r){1-20}  
100 & 9.5 & 11.6 & 13.3 & 14.7 & 15.9 & 17.0 & 18.1 & 19.2 & 20.3 & 21.5 & 22.9 & 24.5 & 26.4 & 28.8 & 32.1 & 36.6 & 43.0 & 51.9 & 63.8\\
200 & 11.7 & 14.4 & 16.1 & 17.4 & 18.4 & 19.2 & 20.0 & 20.7 & 21.4 & 22.1 & 22.8 & 23.6 & 24.6 & 25.9 & 27.7 & 30.2 & 34.3 & 41.4 & 55.5\\
300 & 13.3 & 16.1 & 17.8 & 18.8 & 19.7 & 20.4 & 20.9 & 21.4 & 21.9 & 22.4 & 22.9 & 23.4 & 24.1 & 24.9 & 26.0 & 27.7 & 30.5 & 36.0 & 49.1\\
400 & 14.6 & 17.4 & 18.9 & 19.9 & 20.6 & 21.1 & 21.6 & 22.0 & 22.3 & 22.7 & 23.0 & 23.4 & 23.9 & 24.4 & 25.3 & 26.5 & 28.5 & 32.7 & 43.9\\
500 & 15.5 & 18.3 & 19.7 & 20.5 & 21.2 & 21.6 & 22.0 & 22.3 & 22.6 & 22.8 & 23.1 & 23.4 & 23.8 & 24.3 & 24.9 & 25.8 & 27.4 & 30.7 & 40.1\\
600 & 16.3 & 18.9 & 20.2 & 21.0 & 21.5 & 21.9 & 22.2 & 22.4 & 22.7 & 22.9 & 23.1 & 23.3 & 23.6 & 24.0 & 24.5 & 25.3 & 26.5 & 29.2 & 37.4\\
700 & 16.9 & 19.4 & 20.6 & 21.3 & 21.7 & 22.0 & 22.3 & 22.5 & 22.7 & 22.9 & 23.1 & 23.3 & 23.5 & 23.8 & 24.2 & 24.8 & 25.9 & 28.1 & 35.2\\
800 & 17.5 & 19.9 & 21.0 & 21.6 & 22.0 & 22.3 & 22.5 & 22.7 & 22.9 & 23.0 & 23.2 & 23.3 & 23.5 & 23.7 & 24.1 & 24.6 & 25.5 & 27.5 & 33.6\\
900 & 18.0 & 20.2 & 21.3 & 21.8 & 22.2 & 22.5 & 22.7 & 22.8 & 22.9 & 23.1 & 23.2 & 23.3 & 23.5 & 23.7 & 24.0 & 24.4 & 25.3 & 26.9 & 32.4\\
1000 & 18.4 & 20.5 & 21.5 & 22.0 & 22.3 & 22.6 & 22.7 & 22.9 & 23.0 & 23.1 & 23.2 & 23.3 & 23.5 & 23.6 & 23.9 & 24.3 & 25.0 & 26.5 & 31.4\\
2000 & 23.2 & 22.9 & 23.0 & 23.1 & 23.3 & 23.3 & 23.3 & 23.2 & 23.4 & 23.4 & 23.4 & 23.4 & 23.4 & 23.3 & 23.4 & 23.6 & 23.6 & 24.1 & 25.9\\
3000 & 23.1 & 23.1 & 23.3 & 23.4 & 23.4 & 23.4 & 23.4 & 23.4 & 23.5 & 23.5 & 23.5 & 23.5 & 23.5 & 23.5 & 23.5 & 23.6 & 23.6 & 23.8 & 24.7\\
4000 & 23.2 & 23.4 & 23.5 & 23.6 & 23.6 & 23.5 & 23.6 & 23.5 & 23.6 & 23.6 & 23.5 & 23.6 & 23.7 & 23.6 & 23.5 & 23.6 & 23.6 & 23.7 & 24.3\\
5000 & 23.2 & 23.4 & 23.5 & 23.6 & 23.6 & 23.7 & 23.6 & 23.6 & 23.6 & 23.6 & 23.5 & 23.6 & 23.6 & 23.6 & 23.5 & 23.5 & 23.6 & 23.7 & 24.0\\
6000 & 23.3 & 23.5 & 23.6 & 23.7 & 23.7 & 23.8 & 23.6 & 23.6 & 23.8 & 23.6 & 23.6 & 23.6 & 23.6 & 23.7 & 23.6 & 23.6 & 23.6 & 23.6 & 23.9\\
7000 & 23.5 & 23.6 & 23.6 & 23.8 & 23.7 & 23.7 & 23.7 & 23.8 & 23.7 & 23.6 & 23.7 & 23.7 & 23.7 & 23.6 & 23.6 & 23.6 & 23.6 & 23.6 & 23.8\\
8000 & 23.7 & 23.7 & 23.7 & 23.9 & 23.8 & 23.7 & 23.8 & 23.8 & 23.8 & 23.6 & 23.7 & 23.8 & 23.8 & 23.7 & 23.6 & 23.6 & 23.7 & 23.7 & 23.8\\
9000 & 23.7 & 23.8 & 23.7 & 23.9 & 23.8 & 23.7 & 23.8 & 23.8 & 23.8 & 23.6 & 23.7 & 23.8 & 23.8 & 23.7 & 23.6 & 23.6 & 23.6 & 23.6 & 23.8\\
10000 & 23.7 & 23.9 & 23.8 & 24.0 & 23.9 & 23.9 & 23.8 & 23.9 & 23.8 & 23.7 & 23.7 & 23.8 & 23.7 & 23.6 & 23.7 & 23.6 & 23.6 & 23.6 & 23.8\\

\bottomrule
\end{tabular} 
\end{table}

\section{Extension to point estimation}\label{sec: pt estimation}
Our proposed method can be extended to CP estimation. The first-step estimator for the set of CPs are defined in \eqref{EQ: cp set pt estimate}.
If $H_0$ is rejected, the proposed score function used in the calculation can be reused and integrated to the screening and ranking algorithm \citep{niu2012screening} to obtain CP estimates;see Algorithm \ref{algo:SARA}, where $ h = \lfloor \epsilon n \rfloor$ and the criterion function $\mathcal{C}$ can be maximum likelihood \citep{yao1987approximating}, least-squares estimation \citep{lavielle2000least}, and minimum description length \citep{davis2006structural}, etc.

\begin{algorithm}[t]
\caption{SaRa for CPs location estimation}\label{algo:SARA}
\SetAlgoVlined
\DontPrintSemicolon
\SetNlSty{texttt}{[}{]}
\small
\textbf{Input}: {\;
(i) $X_1, \ldots, X_n$ -- the data set; \;
(ii) $\widehat{\T}_n^{(\C)}(\cdot)$ -- the proposed score function;\; 
(ii) $\mathcal{C}(\cdot)$ -- criterion function to perform CP times selection.\;
}
\Begin{
Compute the initial set of estimated CP locations $\widehat{\Pi}\equiv\{\widehat{k}_1,\ldots, \widehat{k}_{\widehat{m}} \}$ by \eqref{EQ: cp set pt estimate}.\;
Rearrange $\widehat{k}_1,\ldots, \widehat{k}_{\widehat{m}}$ to 
$\widehat{k}_{(1)}, \ldots , \widehat{k}_{(\widehat{m})}$ 
such that $\T_n^{(\C)}(\widehat{k}_{(1)}) \geq \cdots \geq \T_n^{(\C)}(\widehat{k}_{(\widehat{m})})$.\;
Select $\widehat{m}' = \argmin_{1\leq j\leq \widehat{m}} \mathcal{C}(\widehat{k}_{(1)},\ldots,\widehat{k}_{(j)} ).$\;
Set $\widehat{\Pi}' = \{\widehat{k}_{(1)},\ldots,\widehat{k}_{(\widehat{m}')} \}$.\;	
}
\textbf{return} $\widehat{\Pi}'$ -- the selected set of estimated CP locations.
\end{algorithm}

We can also incorporate our CP test and CP location estimator \eqref{EQ: cp set pt estimate} into the popular binary segmentation algorithm \citep{vostrikova1981detecting} 
to produce a potentially better set of CP locations estimator; see Algorithm \ref{algo:BS}. The stopping criteria, that is based on the $p$-values, is also considered by \cite{shaoadaptiveHDsnmct2021}. They integrated their proposed self-normalized test and the wild binary segmentation proposed by \cite{fryzlewicz2014wild}.

\begin{algorithm}[t]
\caption{Binary segmentation}\label{algo:BS}
\SetAlgoVlined
\DontPrintSemicolon
\SetNlSty{texttt}{[}{]}
\small
\textbf{Input}: {\;
(i) $X_1, \ldots, X_n$ -- the data set; \;
(ii) $\widehat{\T}_{n'}^{(\C)}(\cdot \mid \mathcal{X})$ -- the proposed score function when the a dataset $\mathcal{X}$ of size $n'$ is used;\; 
(iv) $\widehat{p}_{n'}(\mathcal{X})$ -- the $p$-value function when the a dataset $\mathcal{X}$ of size $n'$ is used;\;
(iii) $p_0$ -- the $p$-value threshold to decide whether the CPs exist in the sample; \;
(iii) $\lfloor \epsilon n \rfloor$ -- minimum window size.  \;
}
\SetKwFunction{FMain}{getCP}
\SetKwProg{Fn}{Function}{:}{}
\Fn{\FMain{s, e}}{ 
	\If{$\widehat{p}(X_s, \ldots, X_e) < p_0$ and $e-s+1\geq \lfloor \epsilon n \rfloor$}{
		\KwRet the set of the most obvious CP location $\{ \argmax_{s\leq k \leq e} \widehat{\T}_{e-s+1}^{(\C)}(k\mid X_{s:e}) \}. $\;
	}\Else{\KwRet an empty set $\{\}$}
}
\Begin{
Compute $\widehat{\Pi}'' = \widetilde{\Pi} = \texttt{getCP}(1,n)$. \;
\While{$\widetilde{\Pi}\neq \emptyset$}{
	$\widetilde{\Pi}_0 = \emptyset$\;
	\ForEach{$\widetilde{k}\in\widetilde{\Pi}$}{
		Set $\widetilde{k}^- = \max\{\widehat{\Pi}''\cap [1, \widetilde{k}]\}$ and 
		$\widetilde{k}^+ = \min\{\widehat{\Pi}''\cap [\widetilde{k}+1,n]\}$. \;
		Update $\widetilde{\Pi}_0 \gets \widetilde{\Pi}_0\cup \texttt{getCP}(\widetilde{k}^-,\widetilde{k})\cup \texttt{getCP}(\widetilde{k}+1,\widetilde{k}^+)$.\;
	}
	Update $\widetilde{\Pi} \gets \widetilde{\Pi}_0$ and $\widehat{\Pi}'' \gets \widehat{\Pi}''\cup \widetilde{\Pi}_0$.
}	
}
\textbf{return} $\widehat{\Pi}''$ -- the selected set of estimated CP locations.
\end{algorithm}

\section{Recursively computation}
\label{sec: recursive computational}
Simplify $\V_n^{(\Y)}(k\mid k-d, k+1+d) = \V_n^{(\Y)}(k\mid d)$. The self-normalizer (SN) using symmetric window, given any input change detecting process $\{\Y_n(k) \}_{1 \leq k \leq n}$ are defined as,
\begin{equation*}
\V_n^{(\Y)}(k\mid d) = \frac{1}{4(d+1)}\left\{\sum_{j = k-d}^k \L_n^{(\Y)}(k \mid k-d, k)^2 + \sum_{j = k+1}^{k+1+d} \L_n^{(\Y)}(k \mid k+1,k+1+d)^2 \right\}.
\end{equation*}
For a given $k$, we want to recursively calculate the SN for $d > 0$. Define $S_{\Y\Y}(k) = \sum_{i = 1}^k \Y_n(i)^2$, $S_{\Y}(k) = \sum_{i = 1}^k \Y_n(i)$ and $S_{\I\Y}(k) = \sum_{i = 1}^k i\Y_n(i)$. The SN can be decomposed as
\begin{align*}
\V_n^{(\Y)}(k\mid d) &= \frac{n}{4(d+1)^2}\bigg[(d+1)^2\bigg\{S_{\Y\Y}(k+1+d) - S_{\Y\Y}(k-d-1)\bigg\}\\
&\qquad + \frac{d(2d+1)}{6}\bigg\{\Y_n(k-d-1)^2+ \Y_n(k)^2\bigg\}\\
&\qquad+ \frac{d(2d+3)}{6}\bigg\{\Y_n(k)^2 + \Y_n(k+1+d)^2\bigg\}\\
&\qquad +2\bigg\{S_{\I\Y}(k+1+d) - S_{\I\Y}(k)\bigg\}\bigg\{\Y_n(k) - \Y_n(k+1+d)\bigg\}\\
&\qquad +2\bigg\{S_{\I\Y}(k) - S_{\I\Y}(k-d-1)\bigg\}\bigg\{\Y_n(k-d-1) - \Y_n(k)\bigg\}\\
\end{align*}
\begin{align*}
&\qquad +2\bigg\{S_{\Y}(k+1+d) - S_{\Y}(k)\bigg\}\bigg\{k\Y_n(k+1+d) - (k+1+d)\Y_n(k)\bigg\}\\
&\qquad +2\bigg\{S_{\Y}(k) - S_{\Y}(k-d-1)\bigg\}\bigg\{(k-d-1)\Y_n(k) - k\Y_n(k-d-1)\bigg\}\\
&\qquad + \frac{d(d+2)}{3}\Y_n(k)\bigg\{\Y_n(k-d-1) + \Y_n(k+1+d)\bigg\}\bigg].
\end{align*}
Moreover, simplifying $\L_n^{(\Y)}(k \mid k-d, k+1+d) = \L_n^{(\Y)}(k \mid d)$, the squared localized process is defined as
\begin{equation*}
\L_n^{(\Y)}(k \mid d) = \frac{n}{2(d+1)}\left[ \Y_n(k) - \Y_n(k-d-1) - \frac{1}{2}\left\{\Y_n(k+d+1) - \Y_n(k-d-1) \right\}\right]^2.
\end{equation*}
Both $\L_n^{(\Y)}(k \mid d)^2$ and $\V_n^{(\Y)}(k\mid d)$ requires $O(1)$ time to update as $d$ increases and $k$ being fixed. Therefore, the score at time $k$, $\T_n^{(\Y)}(k) = \sup_{\lfloor \epsilon n \rfloor \leq d \leq n} \L_n^{(\Y)}(k \mid d)^2/\V_n^{(\Y)}(k\mid d)$ can be recursively computed and thus requires $O(n)$ computational complexity. Consequently, the aggregation process takes $O(n^2)$ computational complexity.

	\bibliographystyle{rss} 
	{\small
	\setstretch{1.0}
	\bibliography{Bibliography-MM-MC}
	}
\end{document}